\documentclass[twocolumn,prb,superscriptaddress,10pt]{revtex4-2}
\usepackage{array}[=2016-10-06]
\usepackage{booktabs,tabularx}
\usepackage{graphicx}
\usepackage{amsmath, amsfonts, amssymb, bm, amsthm}
\usepackage{bbm}
\usepackage{tcolorbox}
\usepackage{physics}
\usepackage{blkarray}
\usepackage[dvipsnames]{xcolor}
\usepackage[colorlinks, linkcolor=BrickRed, citecolor=RoyalBlue,urlcolor=NavyBlue, breaklinks]{hyperref}
\usepackage[all]{hypcap} 
\usepackage{multirow}
\usepackage{overpic}
\numberwithin{equation}{section}
\renewcommand{\theequation}{\arabic{section}.\arabic{equation}}
\newcommand{\nn}{\nonumber \\}

\tcbuselibrary{breakable}

\newtcolorbox{emphbox}[1][]{
  breakable,
  colback=blue!15,    
  boxrule=0pt,     
  arc=3mm,           
  left=2mm,          
  right=2mm,         
  top=2mm,           
  bottom=2mm,        
  auto outer arc,
  #1
}

\usepackage{blkarray}
\usepackage{tikz}
\usetikzlibrary{positioning}

\definecolor{BSorange}{RGB}{140,50,0}

\theoremstyle{plain}
\newtheorem{definition}{Definition}[section]
\newtheorem{theorem}{Theorem}[section]
\newtheorem{lemma}[theorem]{Lemma}
\newtheorem{proposition}[theorem]{Proposition}
\newtheorem{corollary}{Corollary}[theorem]
\newtheorem{porism}{Porism}[theorem]

\renewcommand{\thedefinition}{\arabic{section}.\arabic{definition}}
\renewcommand{\thetheorem}{\arabic{section}.\arabic{theorem}}

\begin{document}

\title{Transparent Domain Walls through Information Convex Sets}

\author{Jintae \surname{Kim}}
\email[Electronic address:$~~$]{jintae@illinois.edu}
\affiliation{Physics Department and Institute of Condensed Matter Theory, University of Illinois at Urbana-Champaign, Urbana, Illinois 61801, USA}
\affiliation{Institute of Basic Science, Sungkyunkwan University, Suwon 16419, South Korea}

\author{Amanda \surname{Gatto Lamas}}
\affiliation{Physics Department and Institute of Condensed Matter Theory, University of Illinois at Urbana-Champaign, Urbana, Illinois 61801, USA}
\author{Jacopo \surname{Gliozzi}}
\affiliation{Physics Department and Institute of Condensed Matter Theory, University of Illinois at Urbana-Champaign, Urbana, Illinois 61801, USA}
\affiliation{Department of Physics and Center for Theory of Emergent Quantum Matter, Pennsylvania State University, University Park, Pennsylvania 16802, USA}
\author{Bowen \surname{Shi}}
\affiliation{Physics Department and Institute of Condensed Matter Theory, University of Illinois at Urbana-Champaign, Urbana, Illinois 61801, USA}
\affiliation{ Beijing Institute of Mathematical Sciences and Applications, Beijing 101408, China }
\author{Taylor L. \surname{Hughes}}
\affiliation{Physics Department and Institute of Condensed Matter Theory, University of Illinois at Urbana-Champaign, Urbana, Illinois 61801, USA}
\author{Jong Yeon \surname{Lee}}
\email[Electronic address:$~~$]{jongyeon@illinois.edu}
\affiliation{Physics Department and Institute of Condensed Matter Theory, University of Illinois at Urbana-Champaign, Urbana, Illinois 61801, USA}
\affiliation{Korea Institute for Advanced Study, Seoul 02455, South Korea}

\begin{abstract}

In $(2+1)$-dimensional topologically ordered many-body states, transparent (topologically deformable) domain walls are invisible to local topological probes, yet can modify the ground state degeneracy (GSD) and transmute anyons transported across them. Here, we develop an entanglement-bootstrap framework using information convex sets (ICSs) on local and noncontractible annuli to extract information about transparent domain walls directly from ground state wavefunctions at fixed points of Abelian topological phases on a torus, without taking categorical defect data as input. We derive fusion rules governing the action of anyons on extreme points of ICSs on noncontractible annuli and determine their quantum dimensions. Extreme points invariant under transport around the complementary cycle correspond one-to-one to minimum entropy states (MESs), and their number equals the GSD. The maximal topological entanglement entropy (TEE), $\gamma_{\rm LW}^{\max}=\log(\mathcal{D}/d_\alpha)$, probes the net effect of walls crossing the chosen annulus, where $\mathcal{D}$ is the total quantum dimension and $d_\alpha$ is the quantum dimension of an extreme point of its ICS. In contrast, the maximal entanglement asymmetry is $\Delta S_X^{\max}=\log\mathrm{GSD}$. This value is the same for both annulus orientations and reflects the combined effect of the transparent domain walls. We further show that transparent domain walls can give rise to symmetries supported jointly on the two chosen fundamental cycles that cannot be decomposed into a product of two $1$-form symmetry operators, one supported on each cycle. By relating their action on MESs to anyon tunneling and the fusion rules, we clarify how these symmetries connect distinct ground states. Additionally, we apply the framework to Wen's plaquette model, the anisotropic dipolar toric code, and the rank-2 toric code.

\end{abstract}

\date{\today}
\maketitle


\section{Introduction}

Many-body topological phases in $(2+1)$ dimensions exhibit distinctive phenomena, including anyonic excitations~\cite{kitaev_anyons_2006, kitaev03} and universal signatures such as topological ground state degeneracy (GSD) and topological entanglement entropy (TEE)~\cite{kitaev_tee_2006,levin_tee_2006}. Due to their intrinsic topological properties, without any underlying 0-form symmetry, such phases can host \emph{topological defects}, ranging from point defects to gapped domain walls. These defects act nontrivially on anyons, potentially modifying global topological sectors~\cite{bombin_topological_2010, PhysRevB.100.115147, bais_theory_2009, kitaev_models_2012,barkeshli2013,barkeshli2013theory,barkeshli2013classification,khan2014,teo2015,lan_gapped_2015, shi_domain_2021, li_domain_2024, buican_algebraic_2025, Kong2014, Fuchs2013,Cong2016,shi_characterizing_2019, shi_entanglement_2021, williamson_symmetry-enriched_2026}.

\begin{figure}[!t]
\centering
\includegraphics[width=0.6\linewidth]{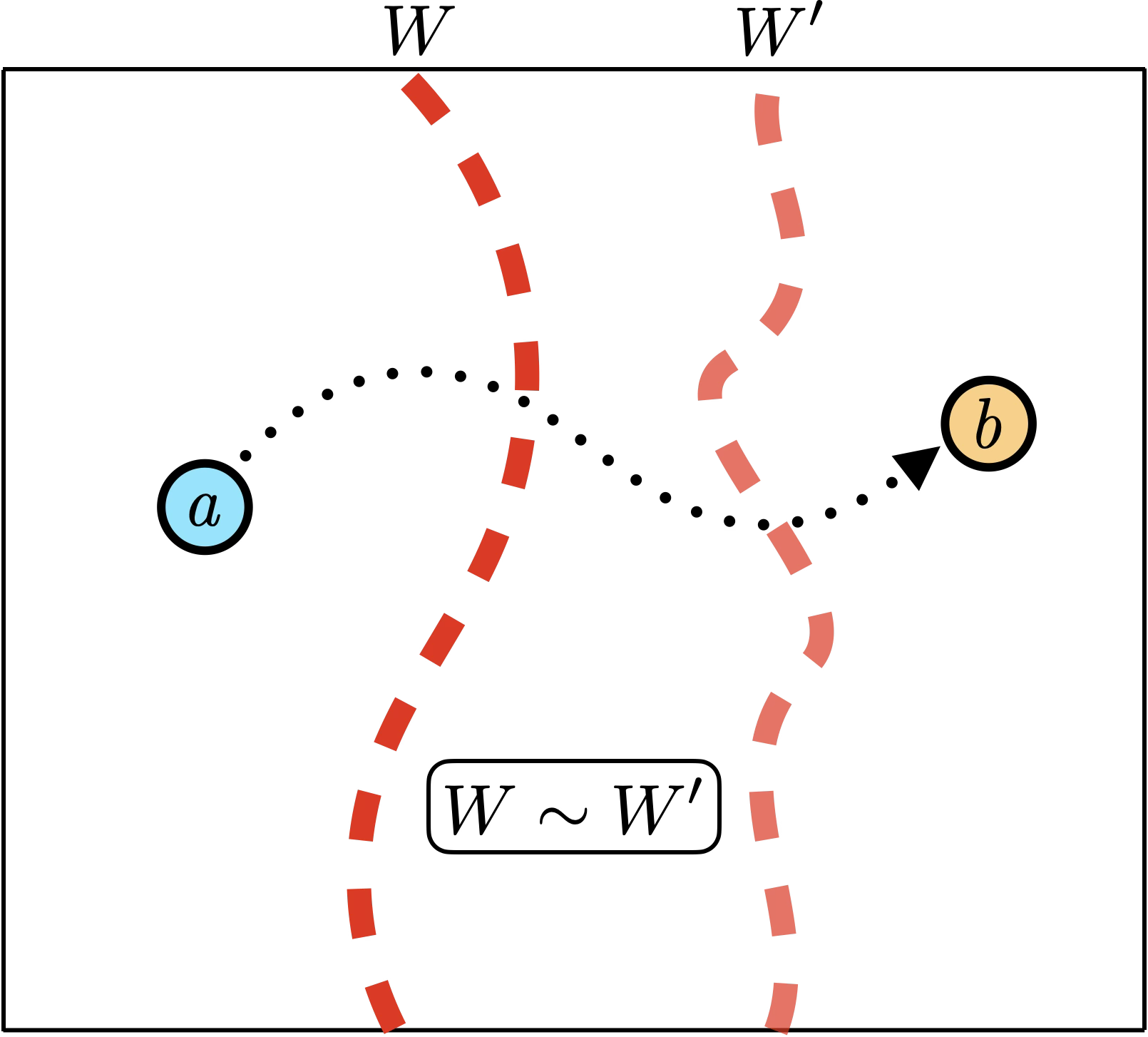}
\caption{ {\bf Transparent domain walls.} Transporting an anyon $a$ along the black dotted line across a transparent domain wall transmutes $a$ into $b$, i.e., $a \rightarrow b$. The two walls $W$ and $W'$ represent physically equivalent choices of the wall position, illustrating that the position of a transparent domain wall is not a physical observable. Thus, a local topological probe cannot detect the wall: any segment of $W$ can be deformed to $W'$ and moved away from the probed region.
}
\label{fig:tdw}
\end{figure}

While the properties and diagnostics of these defects have been extensively studied, transparent domain walls~\cite{buican_algebraic_2025}, a special class of gapped domain walls, pose a distinct challenge. They are intrinsically topological, avoiding any local topological diagnostics; in fact, it is not possible to locate these domain walls. In some sense, they are \emph{topological boundary condition}s, dictating how anyons permute when they are transported around a noncontractible loop of the system. Accordingly, the presence of transparent domain walls can modify the GSD~\cite{kitaev_models_2012, lan_gapped_2015, buican_algebraic_2025}. A schematic illustration of transparent domain walls is shown in Fig.~\ref{fig:tdw}.

Gapped domain walls are topological interfaces between potentially distinct topological phases. Across such an interface, anyons need not pass through unchanged: they may tunnel into anyons on the other side, terminate or condense on the wall, or become confined, while the wall itself supports gapped point-like topological excitations, dubbed \emph{wall superselection sectors}. Accordingly, gapped walls can be characterized in several complementary ways, including by anyon-tunneling data~\cite{kitaev_models_2012,lan_gapped_2015,Kong2014}, by wall superselection sectors~\cite{Fuchs2013,Cong2016,shi_characterizing_2019,shi_entanglement_2021,buican_algebraic_2025}, and by module rules describing how anyons from the two sides act on these sectors. Moreover, wall-local diagnostics can be formulated based on suitable regions contained in a disk intersecting the wall. 
Such diagnostics characterize a range of anyon data, including which anyons condense into the wall vacuum, the wall superselection sectors and their fusion rules, and the attachment of anyons to these sectors~\cite{shi_characterizing_2019, shi_entanglement_2021}.

Transparent domain walls form a special class for which the usual wall-local diagnostics are largely absent. Since their position can be topologically deformed away from any chosen local topological diagnostics, their characteristic information is instead global: transporting an anyon around a noncontractible cycle can return it as a different anyon type. Transparent domain walls have been studied through microscopic models and categorical theory~\cite{ PhysRevB.100.115147, kitaev_models_2012, Fuchs2013, bombin_topological_2010, barkeshli_topological_2012, kim_unveiling_2025}. 
Within these approaches, anyon permutation, GSD, and TEE have been characterized. 
These approaches, however, typically take the defect structure or its categorical description as an input, rather than deriving its topological structure directly from the ground state wavefunctions.

\begin{emphbox}
    \,\, This leads to the central question: what information associated with transparent domain walls can be derived directly from ground state wavefunctions, without assuming a microscopic realization or an a priori categorical description? We address this question through the entanglement bootstrap, which derives emergent topological structures from local entropic conditions and the consistency of reduced density matrices. Within this approach, we develop an entanglement-based framework for characterizing transparent domain walls at fixed points of Abelian topological phases on a torus.     
\end{emphbox}

The key object is the \emph{information convex set} (ICS): the set of density matrices on a manifold ${\cal M}$ that are locally consistent with a reference state $\sigma$~\cite{shi_characterizing_2019,shi_fusion_2020}. 
The ICS faithfully captures the topological structure of the underlying state on ${\cal M}$ and has been widely used in the entanglement bootstrap to reconstruct emergent topological data from wavefunctions~\cite{shi_fusion_2020,shi_immersed_2024,shi_deciphering_2018,buican_algebraic_2025,shi_seeing_2019,shi_entanglement_2021,shi_domain_2021,shi_characterizing_2019,shi_verlinde_2020, kim_modular_2022, huang_knots_2023, shi_remote_2025}, as well as to distinguish inequivalent mixed-state phases~\cite{yang_topological_2025}. The extreme points of the ICS on a local annulus, i.e., an annulus contained in a disk, correspond to anyon types~\cite{shi_fusion_2020}. 
However, local annuli cannot distinguish different transparent domain walls; doing so requires noncontractible annuli, whose boundary components are themselves noncontractible. We therefore consider the ICS on both local and noncontractible annuli, whose interplay exposes the global structure associated with transparent domain walls.

The paper is organized as follows. In Sec.~\ref{sec:2}, we summarize the main results of this work. In Sec.~\ref{sec:3}, we introduce ICSs on local and noncontractible annuli, define transparent domain walls and their associated fusion rules, and derive their basic properties. Then, in Sec.~\ref{sec:4}, we specialize to Abelian topological phases and derive further results for the TEE, symmetries, and entanglement asymmetry. We provide concrete examples in Sec.~\ref{sec:5}, where we apply our framework to Wen's plaquette model~\cite{wen_quantum_2003, you_projective_2012}, the anisotropic dipolar toric code~\cite{ebisu_anisotropic_2023}, and the rank-2 toric code~\cite{oh22a, pace-wen, oh22b, oh23, kim_unveiling_2025,kim_gauging_2026}, and summarize the resulting properties of the corresponding ICSs. We conclude with a discussion in Sec.~\ref{sec:6}.

\section{Summary of main results}
\label{sec:2}

\begin{figure}[!t]
\centering
\begin{overpic}[width=0.45\linewidth]{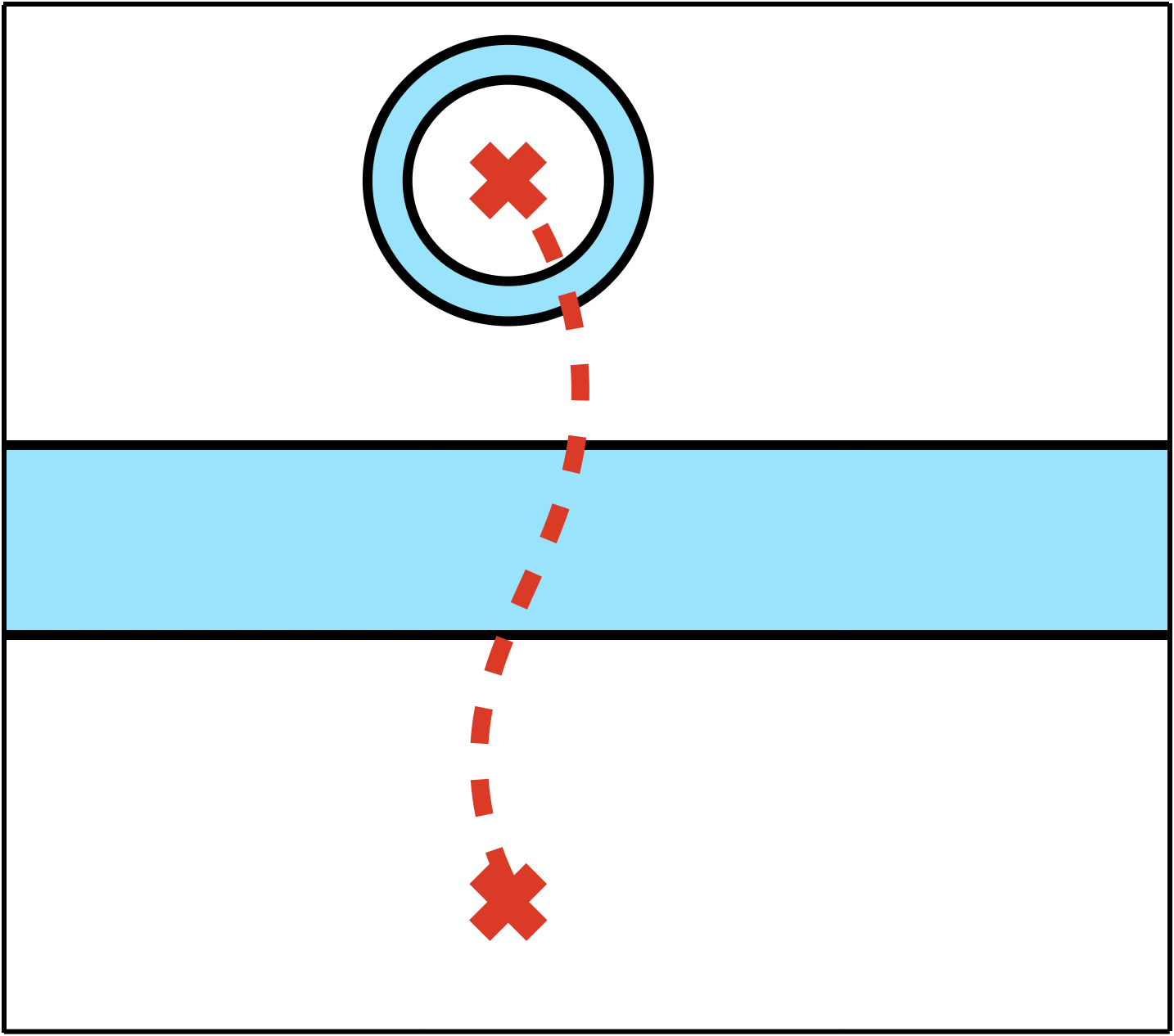}
\put(-10,90){(a)}
\end{overpic}
~~
\begin{overpic}[width=0.45\linewidth]{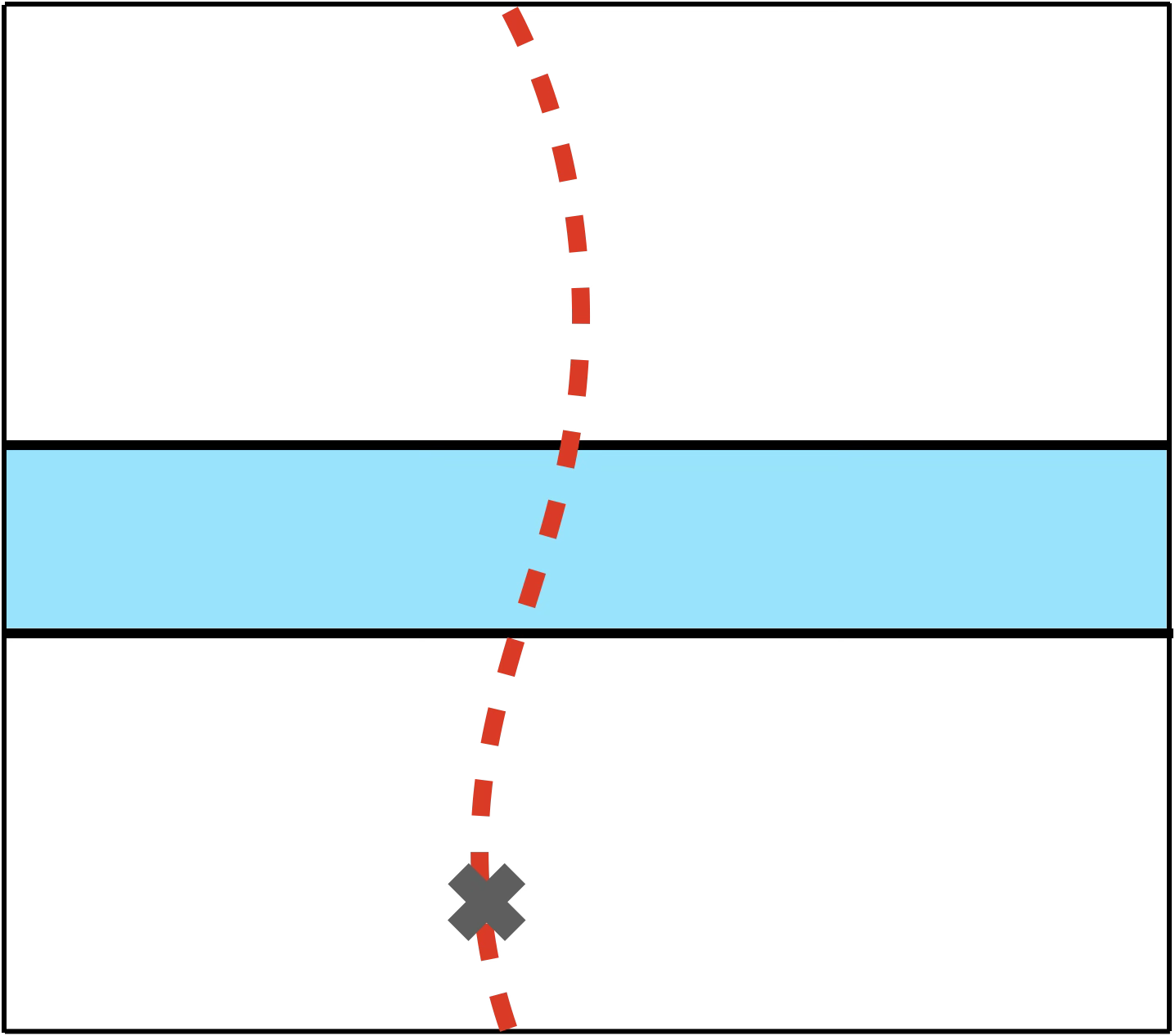}
\put(-10,90){(b)}
\end{overpic}      
\caption{(a) Illustration of two point defects (red $\times$ symbols) connected by a domain wall (red dotted line) on a torus. The local and noncontractible annuli used in the ICS analysis are also shown. (b) Closed domain wall winding around a noncontractible cycle, obtained by transporting one defect around the torus along a noncontractible cycle. The gray $\times$ symbol indicates the disappearance of the two point defects.}
\label{fig:pd}
\end{figure}

\begin{figure*}
    \centering
\begin{overpic}[width=0.8\linewidth]{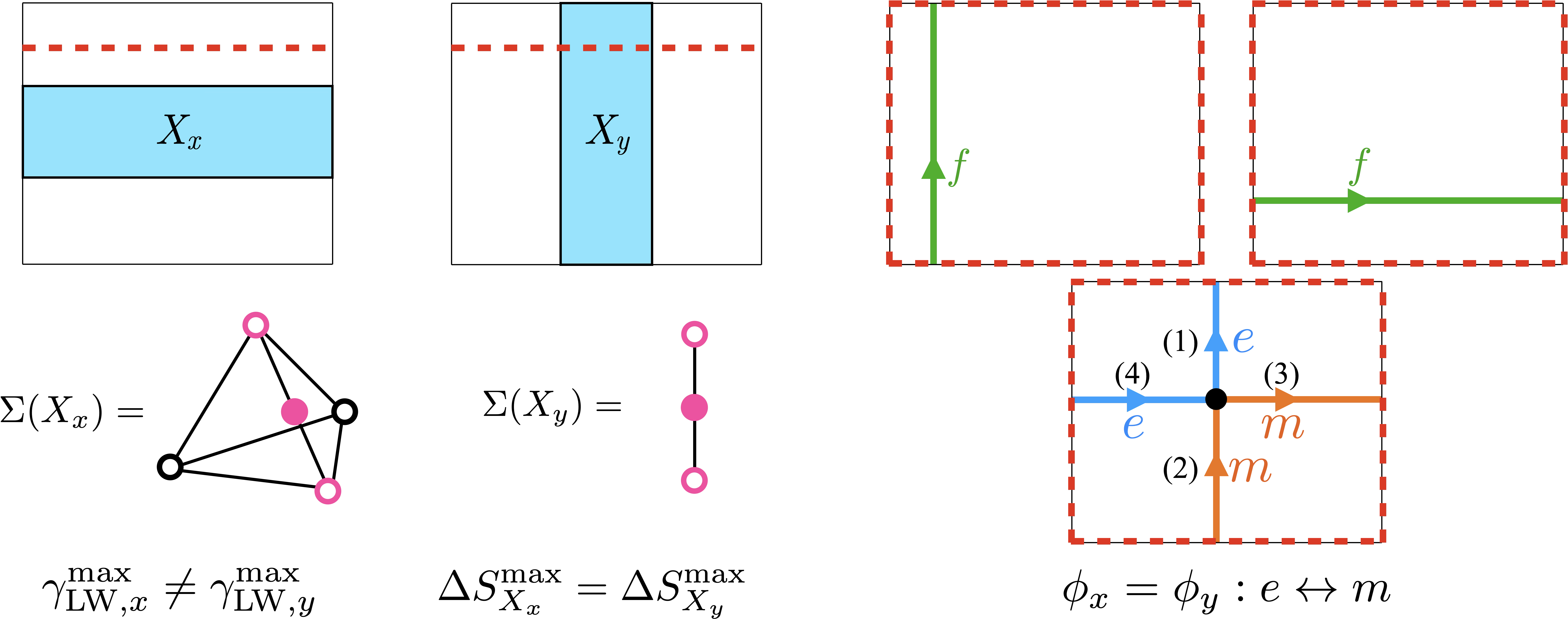}
\put(-2,40){(a)}
\put(53,40){(b)}
\end{overpic} 
\caption{
\textbf{(a) Probing transparent domain walls.}
A transparent domain wall (red dashed line) winds around the $x$-cycle. The noncontractible annulus $X_y$ crosses the wall, whereas $X_x$ runs parallel to it. The wall modifies $\Sigma(X_y)$ but leaves $\Sigma(X_x)$ unchanged, leading to different maximal TEEs for the two annuli. Open pink circles denote extreme points obtained as reductions of MESs, while open black circles denote the remaining ICS extreme points. The filled pink circle marks the equal-weight barycenter of the MES-associated extreme points, representing the symmetrized state. The maximal entanglement asymmetry is the entropy of this barycenter minus that of an MES-associated extreme point. It takes the same value for both annuli despite their different ICS structures and remains sensitive to the wall even when the wall runs parallel to the annulus.
\textbf{(b) Symmetries in the presence of transparent domain walls.}
Symmetry generators of Wen's plaquette model with transparent domain walls along both fundamental cycles, inducing anyon automorphisms $\phi_x$ and $\phi_y$. Transport of $f$ continues to generate $1$-form symmetries supported on individual cycles. The displayed generator associated with $e$ and $m$ is implemented by the anyon-transport sequence shown in steps (1)--(4) and has joint support on both chosen fundamental cycles. Arrows indicate anyon transport, the black circle marks an anchoring point, and red dashed lines denote transparent domain walls. Further details are provided in Sec.~\ref{sec:5a}.  
}
    \label{fig:sym}
\end{figure*}

Here, we summarize our main results for the reader's convenience. Table~\ref{tab:notation} lists the principal notation used throughout the paper.

A heuristic way to understand why the ICS framework is useful for characterizing transparent domain walls is that an ICS resolves the topological sectors associated with a region. In the absence of defects, the ICS of a local annulus forms a simplex whose extreme points are labeled by the possible topological charges enclosed by the annulus~\cite{shi_fusion_2020}. A transparent domain wall, meanwhile, can act nontrivially on these sectors by permuting anyon types transported across it. This suggests that its topological action can be probed using ICSs associated with suitably chosen noncontractible annuli that cross the domain wall.

The connection between these two settings can be understood through a defect--antidefect pair. Consider a state on a torus containing such a pair connected by an anyon-permuting domain wall, as shown in Fig.~\ref{fig:pd}(a). We compare the ICS of a local annulus enclosing one of the point defects with that of a noncontractible annulus crossing the connecting domain wall. As proved in Appendix~\ref{asec:pd}, these two ICSs are isomorphic. Thus, the fusion rules associated with a point defect can equivalently be captured by a noncontractible annulus probing its associated domain wall.


This observation naturally leads to the closed transparent domain walls considered in this work. Starting from a defect--antidefect pair, one may transport one defect around a noncontractible cycle of the torus and then annihilate it with the antidefect, as shown in Fig.~\ref{fig:pd}(b)~\cite{barkeshli_topological_2012}. The point-defect endpoints disappear, leaving behind a closed domain wall winding around the cycle. There is then no point defect for a local annulus to enclose, whereas the ICS of a noncontractible annulus crossing the remaining domain wall continues to provide a well-defined probe of its topological action. Accordingly, throughout this work we characterize transparent domain walls using ICSs associated with noncontractible annuli. Note that this does not mean that the global effects of point defects and transparent domain walls are the same. One simple way to distinguish between point defects and transparent domain walls is to examine their effects on the GSD: point defects may increase the GSD, but transparent domain walls may decrease it.

Our analysis of transparent domain walls addresses two main aspects: probing transparent domain walls and symmetries in the presence of transparent domain walls. Both aspects are illustrated in Fig.~\ref{fig:sym}. It is sufficient to consider the two chosen fundamental-cycle classes of noncontractible annuli, winding around the $x$ and $y$ cycles, respectively.


\vspace{5pt}\noindent {\bf 1. Probing transparent domain walls.}

Our framework extracts information about transparent domain walls from the structure and transport properties of ICSs on noncontractible annuli. These properties determine the GSD and underlie two complementary entanglement diagnostics: TEE and entanglement asymmetry.

First, the structure of the ICS on a noncontractible annulus probes transparent domain walls crossing the annulus. By Theorem~\ref{th:cc}, the number of extreme points equals the number of anyon types invariant under transport along the noncontractible direction of the annulus. A number of extreme points strictly smaller than the total number of anyon types therefore signals the presence of a transparent domain wall crossing the annulus.

We further determine the fusion rules governing the action of anyons on these extreme points and assign a quantum dimension to each extreme point. For Abelian topological phases, the quantum dimensions $d_\alpha$ of all extreme points $\alpha$ of the ICS on a noncontractible annulus are equal and satisfy
\[
d_\alpha=\sqrt{
\frac{\text{number of anyon types}}
{\text{number of extreme points}}
}.\]
Consequently, $d_\alpha$ can exceed unity even though every anyon has unit quantum dimension. These dimensions encode the effects of crossing transparent domain walls on the ICS on the noncontractible annulus.

This structure has a direct entanglement signature in the TEE, defined as one-half of the conditional mutual information for an appropriate tripartition of the noncontractible annulus. This combination cancels the leading contributions proportional to the boundary length~\cite{kitaev_tee_2006,levin_tee_2006}. The maximal TEE on a noncontractible annulus is
\[
\gamma_{\rm LW}^{\rm max}=\log (\mathcal{D}/d_\alpha),
\]
where $\mathcal{D}$ is the total quantum dimension [Theorem~\ref{th:TEE}]. Its dependence on $d_\alpha$ makes the maximal TEE sensitive to the walls crossing the chosen annulus.

Second, transporting the annulus around the complementary cycle probes walls parallel to it. If the number of transport-invariant extreme points is strictly smaller than the total number of extreme points, a transparent domain wall that does not cross the annulus must be present. The transport-invariant extreme points are precisely the reduced density matrices of MESs and are in one-to-one correspondence with them [Propositions~\ref{pro:Ctau} and \ref{pro:oto}]. Their number equals the GSD [Proposition~\ref{pro:GSD}]. Annulus transport thus provides both a diagnostic of parallel walls and a means of determining the GSD.

These transport properties, together with the symmetry action, also determine the maximal entanglement asymmetry. For a reduced density matrix on a noncontractible annulus, entanglement asymmetry measures the increase in entanglement entropy under averaging over this effective symmetry action, \(S(\rho_X^{\mathrm{sym}})-S(\rho_X)\). Its maximal value on a noncontractible annulus is
\[
\Delta S_X^{\rm max}=\log\mathrm{GSD}
\]
[Theorem~\ref{th:ea} and Proposition~\ref{pro:maximal-entanglement-asymmetry}]. This maximal value is the same for both annulus orientations and can respond to walls along either fundamental cycle through their effect on the GSD. The two entanglement diagnostics therefore probe distinct information: the maximal TEE reflects the net effect of walls crossing the chosen annulus, whereas the maximal entanglement asymmetry reflects the restriction on the global ground state space imposed by the full wall configuration.

\begin{table*}[t]
\caption{Principal notation used throughout the paper. The paired indices $x(y)$ and $y(x)$ mean either $(x,y)$ or $(y,x)$.}
\label{tab:notation}
\centering
\small
\renewcommand{\arraystretch}{1.05}
\setlength{\tabcolsep}{4pt}
\begin{tabularx}{\textwidth}{@{}
  >{\raggedright\arraybackslash}p{0.23\textwidth}
  >{\raggedright\arraybackslash}X@{}}
\toprule
Symbol & Meaning \\
\midrule

\multicolumn{2}{@{}l}{\textit{Geometry and information convex sets}} \\
\addlinespace[2pt]

$\mathtt{R}$
& Full system, topologically equivalent to a torus. \\

$l_x,l_y$; $l_{(n,m)}$
& Oriented fundamental-cycle loops; a loop of homology class $(n,m)$. \\

$\Delta(B,C,D)_\rho$
& $S(\rho_{BC})+S(\rho_{CD})-S(\rho_B)-S(\rho_D)$, with $S$ the
von Neumann entropy. \\


$\Sigma(X)$; $\operatorname{ext}\Sigma(X)$
& Information convex set on $X$; its set of extreme points. \\

$\mathtt{A}_{x(y)}$
& Family of noncontractible annuli containing loops $l_{x(y)}$. \\

\addlinespace[3pt]
\multicolumn{2}{@{}l}{\textit{Sector labels, transport, and fusion}} \\
\addlinespace[2pt]

$\mathcal{C}$; $1,\bar a$
& The set of anyon types; the vacuum and the antiparticle of $a$. \\

$\phi_{(n,m)}$; $\phi_x,\phi_y$
& Anyon-label automorphisms induced by local-annulus transport around
$l_{(n,m)}$; $\phi_x=\phi_{(1,0)}$ and $\phi_y=\phi_{(0,1)}$. \\

$\mathcal{C}^{\phi_{(n,m)}}$
& Invariant anyon types:
$\{a\in\mathcal{C}\mid\phi_{(n,m)}(a)=a\}$. \\

$\mathfrak{C}_{x(y)}$
& The set of the labels of noncontractible-annulus extreme points for
$X\in\mathtt{A}_{x(y)}$. \\

$\rho_X^a$; $\rho_X^\alpha$
& Extreme states on local and noncontractible annuli, respectively;
also denoted by $\sigma_X^a$ and $\sigma_X^\alpha$ in parts of the text. \\

$\tau_{y(x)}$
& Automorphism of $\mathfrak{C}_{x(y)}$ induced by transporting a
noncontractible annulus around the complementary cycle $l_{y(x)}$. \\

$\mathfrak{C}_{x(y)}^{\tau_{y(x)}}$
& Sectors fixed by $\tau_{y(x)}$; precisely those realized by MES reductions. \\

$N_{ab}^{c}$; $N_{a\alpha}^{\beta}$
& Fusion multiplicities for $a\times b\to c$ and
$a\times\alpha\to\beta$. \\

$\Sigma_{a\alpha}^{\beta}(Y)$
& Subset of the punctured-annulus ICS with boundary sectors
$a,\alpha,\beta$ fixed. \\

$\mathbb{V}_{a\alpha}^{\beta}$;
$\mathcal{S}(\mathbb{V}_{a\alpha}^{\beta})$
& Fusion Hilbert space of dimension $N_{a\alpha}^{\beta}$;
its density-matrix state space. \\

$d_a$; $d_\alpha$
& Quantum dimensions of an anyon and a noncontractible-annulus
sector, respectively. \\

$\mathcal{D}$
& Total quantum dimension:
$\mathcal{D}=\sqrt{\sum_{a\in\mathcal{C}}d_a^2}$. \\

$E_{x(y)}$ 
& Anyon equivalence classes: $a\overset{x(y)}{\sim}b$ iff
$a\times\alpha=b\times\alpha$ for all $\alpha\in\mathfrak{C}_{x(y)}$. \\

\addlinespace[3pt]
\multicolumn{2}{@{}l}{\textit{Ground states, symmetries, and entanglement}} \\
\addlinespace[2pt]

$|\alpha\rangle$
& MES associated with $X\in\mathtt{A}_{x(y)}$, labeled by
$\alpha\in\mathfrak{C}_{x(y)}^{\tau_{y(x)}}$. \\

$N(\mathtt{R})$
& Maximum number of mutually orthogonal extreme points of
$\Sigma(\mathtt{R})$; equal to the ground state degeneracy. \\

$\mathcal{G}(\mathtt{R})$; $\mathcal{P}_U$
& Group of the prescribed anyon-transport symmetry unitaries;
the fixed charge-labeled process realizing $U$. \\

$F_{\mathrm{total}}$
& Symmetry classes under equality of conjugation actions on every
$\rho\in\Sigma(\mathtt{R})$. \\

$\pi_{x(y)}$; $F_{x(y)}$
& Induced permutation action on $\mathfrak{C}_{x(y)}^{\tau_{y(x)}}$;
its image $F_{x(y)}=\operatorname{Im}\pi_{x(y)}$. \\

$\gamma_{\mathrm{LW},x(y)}$;
$\gamma_{\mathrm{LW},x(y)}^{\max}$
& TEE $\tfrac12 I(A:C\mid B)_{\rho_X}$ for the specified
partition; its maximum over $\Sigma(X)$. \\


$g\cdot\rho_X$
& Induced sector-permutation action
$\sum_\alpha p_\alpha\rho_X^{g(\alpha)}$, for $g\in F_{x(y)}$. \\

$\rho_X^{\mathrm{sym}}$
& Symmetrized state:
$|F_{x(y)}|^{-1}\sum_{g\in F_{x(y)}}g\cdot\rho_X$. \\

$\Delta S_{X_{x(y)}}[\rho]$; $\Delta S_{X_{x(y)}}^{\max}$
& Entanglement asymmetry $S(\rho_X^{\mathrm{sym}})-S(\rho_X)$; maximum entanglement asymmetry. \\

\bottomrule
\end{tabularx}
\end{table*}

\vspace{5pt}\noindent {\bf 2. Symmetries in the presence of transparent domain walls.}

At fixed points of Abelian topological phases without nontrivial transparent domain walls, operators implementing the closed-loop transport of Abelian anyons around noncontractible loops realize conventional 1-form symmetries. These are the symmetries that underlie the spontaneous 1-form symmetry breaking perspective on topological phases.

In the presence of transparent domain walls, however, we establish that there could exist symmetries supported jointly on two fundamental cycles that cannot be decomposed into a product of two $1$-form symmetries supported individually on these cycles [Theorem~\ref{th:uni} and Proposition~\ref{pro:symd}]. The support of these symmetries along both fundamental cycles arises from anyon transmutation induced by transparent domain walls. An anyon transported around a noncontractible loop may return as a different type, leaving a residual topological charge. Annihilating this charge may then require a second transport process along a topologically independent noncontractible loop. 

This modifies the symmetry structure underlying the spontaneous symmetry breaking description of topological order, making the relation between symmetries and ground state degeneracy less direct. We clarify this relation by connecting the symmetry action on MESs associated with a noncontractible annulus to anyon tunneling through the annulus. This tunneling process is, in turn, described by the fusion rules governing the action of anyons on the extreme points of the ICS on the noncontractible annulus.

\begin{figure}[!t]
\centering
\begin{overpic}[width=0.75\linewidth]{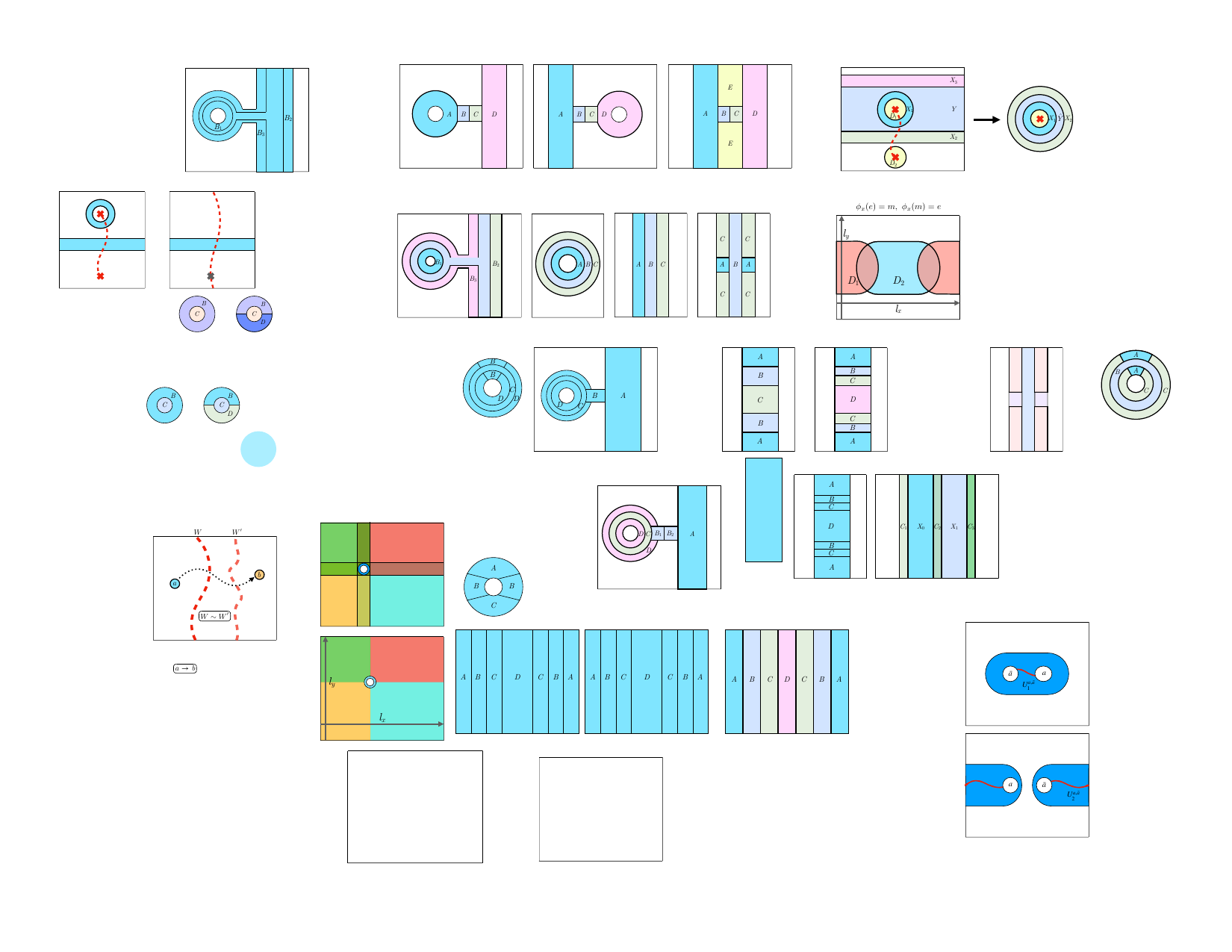}
\put(-7,40){(a)}
\put(54,40){(b)}
\end{overpic}      
\caption{Configurations used in axioms (a) ${\bf A0}$ and (b) ${\bf A1}$.}
\label{fig:axioms}
\end{figure}

\section{Information Convex Sets}
\label{sec:3}

This section first reviews the properties of the ICSs on local annuli for pure states satisfying the two axioms introduced below~\cite{shi_fusion_2020,huang_knots_2023,shi_immersed_2024}. Such states, referred to as \textit{reference states}, represent fixed points of $(2+1)$-dimensional many-body topological phases. We then discuss transparent domain walls, which are invisible to these axioms. Moreover, we introduce ICSs on noncontractible annuli, define fusion rules between the extreme points of ICSs on local and noncontractible annuli, and derive several useful properties. We emphasize that the results presented in this section apply to fixed points of both Abelian and non-Abelian topological phases.

In this work, the reference state is defined on a region $\mathtt{R}$ that is topologically equivalent to a torus. In the figures, the white square background represents $\mathtt{R}$. On $\mathtt{R}$, we primarily consider two {\it topologically independent} oriented noncontractible loops, denoted by $l_x$ and $l_y$, as illustrated in Fig.~\ref{fig:two-disks}. These loops wind once around the $x$ and $y$ cycles, respectively, and correspond to the homology classes $(1,0)$ and $(0,1)$. More generally, every topologically distinct oriented noncontractible loop can be represented by $l_{(n,m)}$, where $n,m \in \mathbb{Z}$ and $\gcd(n,m)=1$. We use the same notation for all noncontractible loops belonging to the same homology class, since the results considered here depend only on the homology class and not on the choice of representative loop.

To state the axioms, we first define the set of reduced density matrices
\begin{equation}
\mu \equiv \{\rho_b \mid b \in \mathtt{B}(r)\},
\end{equation}
where $\mathtt{B}(r)$ is the set of balls of radius no greater than $r$, with $r$ taken to be a constant independent of system size. Here, $\rho_b$ denotes the reduced density matrix of the reference state $\rho$ on $b$. We refer to the regions $b \in \mathtt{B}(r)$ as $\mu$-disks.

Any reference state must satisfy the two axioms ${\bf A0}$ and ${\bf A1}$ on every $\rho_b \in \mu$~\cite{shi_fusion_2020,shi_immersed_2024}:
\begin{align}
{\bf A0}:&\quad \Delta(B,C,\emptyset)_{\rho_b} = 0, \nonumber\\
{\bf A1}:&\quad \Delta(B,C,D)_{\rho_b} = 0,
\label{eq:axioms}
\end{align}
with
\begin{equation}
\Delta(B,C,D)_\rho \equiv S(\rho_{BC}) + S(\rho_{CD}) - S(\rho_{B}) - S(\rho_{D}),
\end{equation} and where $\emptyset$ is the empty set.
Here, $S(\rho)\equiv -\Tr(\rho \log \rho)$ denotes the von Neumann entropy of $\rho,$ and the partitions for ${\bf A0}$ and ${\bf A1}$ are illustrated in Fig.~\ref{fig:axioms}(a) and Fig.~\ref{fig:axioms}(b), respectively. We now turn to the properties of these states as encoded in the ICSs.

\subsection{ICSs on local annuli}
\label{sec:3a}

For a subregion $X \subseteq \mathtt{R}$, the ICS $\Sigma(X)$ consists of reduced density matrices on $X$ that are locally indistinguishable from the reference state and satisfy certain information-theoretic constraints that remove boundary effects~\cite{shi_fusion_2020}. In the following, we restrict our attention to the properties of the ICS relevant to the present discussion. The formal definition can be found in Ref.~\onlinecite{shi_fusion_2020}.

Let $X$ be either a local or noncontractible annular subregion, and let $\Sigma(X)$ denote the associated ICS. The set $\Sigma(X)$ satisfies the simplex theorem, whose proof is given in Ref.~\onlinecite{shi_entanglement_2021}.

\begin{theorem}[Simplex theorem]\label{th:simplex}
For a local or noncontractible annular subregion $X$, $\Sigma(X)$ is the convex hull of a finite set of mutually orthogonal extreme points, denoted by ${\rm ext}\Sigma(X)$. Explicitly,
\begin{equation}
\Sigma(X)= \left\{
\rho_X \,\middle|\, 
\rho_X = \sum_{a} p_a  \sigma_X^{a},\ \sigma_X^{a} \in {\rm ext}\Sigma(X)
\right\},
\label{eq:info-convex-annulus}
\end{equation}
where $\{a\}$ is a finite label set, and $\{p_a\}$ is a probability distribution.
\end{theorem}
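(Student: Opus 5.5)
The plan is to adapt the entanglement-bootstrap proof of the simplex theorem for annuli to regions $X$ that are local or noncontractible annuli. The argument only uses axioms ${\bf A0}$ and ${\bf A1}$ on $\mu$-disks. The key steps, in order, are as follows.

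First, I will show that $\Sigma(X)$ is convex and compact. Convexity follows from the definition of the ICS: local indistinguishability from $\sigma$ on every $\mu$-disk inside $X$ is a linear condition. The entropic constraints that remove boundary effects are preserved under mixing, by the merging/isomorphism structure established for ICSs.

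Second, I will use the isomorphism theorem. $\Sigma(X)$ is unchanged, up to a canonical entropy-preserving isomorphism, when $X$ is deformed by adding or removing $\mu$-disks. In particular, I can thicken $X$ and split it as $X = BCD$, where $C$ is a thin annulus "in the middle" and $B,D$ are the two boundary collars. For any $\rho\in\Sigma(X)$, the extended axioms give $I(B:D|C)_\rho=0$; this works both for local annuli and for noncontractible annuli, since the relevant strong-subadditivity chains only use ${\bf A1}$ on disks covering the annulus.

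Third, I will characterize the extreme points. For any $\rho\in\Sigma(X)$, the reduction $\rho_C$ lies in $\Sigma(C)$, and the Markov structure lets $\rho$ be recovered from $\rho_C$ via Petz maps that depend only on the reference state. I will then show that the conditional mutual information across the ring vanishes for the extreme points and that their entropies saturate an upper bound $S(\rho^a)$. The decisive claim is that any two distinct extreme points $\sigma^a,\sigma^b$ are orthogonal. To prove it, I will take the equal mixture $\lambda=\tfrac12(\sigma^a+\sigma^b)$, which is in $\Sigma(X)$ by convexity. Using that the ICS elements satisfy $S(\rho_{BC})-S(\rho_B)$ constant across the set (a consequence of ${\bf A1}$ and local indistinguishability), I will derive that $S(\lambda)=\tfrac12 S(\sigma^a)+\tfrac12S(\sigma^b)+\log 2$, which forces orthogonal supports. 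The same argument applied to general mixtures shows that every element decomposes uniquely along these orthogonal supports.

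Fourth, I will prove finiteness. Orthogonal extreme points are distinguishable on the finite-dimensional Hilbert space of $X$. More sharply, the entropy bound $S(\sum_a p_a\sigma^a)=\sum p_aS(\sigma^a)+H(p)$ combined with a uniform upper bound, coming from the fixed finite local dimension of $C$, gives a finite label set. Together with Krein–Milman on the compact convex set, this yields the claimed convex-hull representation.

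I expect the main obstacle to be the orthogonality step for noncontractible annuli. There, the global topology means the relevant "merging" operations must wind around the cycle, so one must check that the ${\bf A1}$-based entropy identities still close up consistently. I would first try to cover the noncontractible annulus by a finite chain of $\mu$-disks and apply the merging lemma sequentially, verifying that the final gluing step introduces no extra conditional mutual information.
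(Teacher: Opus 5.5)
The gap is in the orthogonality step, which carries all the content of the theorem. The paper does not reprove it; it imports it from Ref.~\onlinecite{shi_entanglement_2021}. Your surrounding ingredients are reasonable: the isomorphism theorem, the radial Markov property $I(B:D|C)_\rho=0$ on all of $\Sigma(X)$, and Krein--Milman. But the derivation you sketch for orthogonality cannot work.

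The two inputs you name are membership of $\lambda=\tfrac12(\sigma^a+\sigma^b)$ in $\Sigma(X)$ and constancy of $S(\rho_{BC})-S(\rho_B)$ over $\Sigma(X)$. Both hold for \emph{any} pair of elements of $\Sigma(X)$, not only extreme points. What they imply is that the Holevo quantity $\chi_\Omega=S(\lambda_\Omega)-\tfrac12 S(\sigma^a_\Omega)-\tfrac12 S(\sigma^b_\Omega)$ takes the same value on $B$, $BC$, $C$, $X$ and every other annulus in the deformation class. Nothing in them fixes that value to be $\log 2$. If such reasoning gave $\chi=\log 2$, it would equally show that $\sigma^a$ and $\tfrac12(\sigma^a+\sigma^b)$ are orthogonal, which is false. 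So extremality must enter essentially, and your sketch never says where.

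Your third step is also off. The angular (Levin--Wen) conditional mutual information is not zero at an extreme point: it equals $2\log(\mathcal{D}/d_a)$ there and vanishes at the maximum-entropy (merged) element, as in the proof of Theorem~\ref{th:TEE}. The radial $I(B:D|C)$ vanishes on all of $\Sigma(X)$, so it cannot single out extreme points either.

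What is missing is a mechanism that turns the radial Markov property into a sector decomposition. One route is the following.
\begin{itemize}
\item Every $\rho\in\Sigma(X)$ is recovered from $\rho_{BC}$, and from $\rho_{CD}$, by a fixed channel acting on $C$ and determined by $\sigma$ (one channel for each side). This is exactly the isomorphism theorem you invoke.
\item The structure theorem for quantum Markov chains, in its Koashi--Imoto/Petz-sufficiency form for a family with a common recovery map, then gives a decomposition $\mathcal H_C=\bigoplus_I \mathcal H_{c_I^L}\otimes\mathcal H_{c_I^R}$ shared by all elements. Every $\rho\in\Sigma(X)$ is block diagonal in it.
\item You must still show two things: conditioning on a single block keeps the state in $\Sigma(X)$, and each block supports exactly one element of $\Sigma(X)$. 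Without this, $\Sigma(X)$ could be a convex subset of that simplex whose extreme points overlap.
\end{itemize}
Once this is done, orthogonality is automatic, because distinct blocks live on orthogonal subspaces of $\mathcal H_C$. Finiteness follows from $\dim\mathcal H_C<\infty$ without any entropy bound, and uniqueness of the decomposition follows as well.

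Two smaller remarks. First, the local-versus-noncontractible distinction you flag as the main obstacle plays no role. Every region used ($X$, its radial sub-annuli, a thickening) is intrinsic to the annulus, so nothing has to be merged around a cycle of the torus. Second, if $\Sigma(X)$ is defined with entropic constraints, convexity is not automatic, since vanishing conditional mutual information is not preserved under mixing.
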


\noindent In this subsection, we focus on local annuli; noncontractible annuli will be considered in Sec.~\ref{sec:3c}. From the ICS on a local annulus, we define the set of anyons (superselection sectors).

\begin{definition}[Anyon types]
The set of anyons $\mathcal{C}$ is defined as the set of labels of the extreme points of $\Sigma(X)$ when $X$ is a local annulus.
\end{definition}
\noindent Note that $\mathcal{C}$ always contains the trivial anyon $1$, corresponding to the extreme point of the ICS on a local annulus obtained by partially tracing the reference state~\cite{shi_fusion_2020, huang_knots_2023}. We use lowercase letters to denote the anyons in $\mathcal{C}$.

After defining the anyons, we can further extract the fusion rules from two-hole disk subregions; a detailed discussion can be found in Refs.~\onlinecite{shi_fusion_2020, huang_knots_2023}. The fusion rules take the form
\begin{align}
a \times b = \sum_{c \in \mathcal{C}} N_{ab}^c c, \label{eq:fusion}
\end{align}
and we denote by $\bar{a}$ the unique antiparticle of $a$, defined by the condition $N_{a\bar{a}}^{1}=1$. We may also introduce the corresponding quantum dimensions~\cite{shi_immersed_2024}.

\begin{definition}[Quantum dimension $d_a$]\label{def:qd}
The quantum dimension $d_a$ associated with $\rho_X^{a} \in {\rm ext}\Sigma(X)$ can be expressed as
\begin{equation*}
d_a \equiv \exp\left( \frac{1}{4} \Delta(A,B,C)_{\rho^{a}_X} \right),\label{eq:QD}
\end{equation*}
where $X = A B C$ as illustrated in Fig.~\ref{fig:QD}, with $A C = \partial X$ denoting the thickened boundary and $B = X \setminus \partial X$ the interior. Each of $A$ and $C$ is a union of two disks.
\end{definition}

\begin{figure}[ht]
\centering
\includegraphics[width=0.40\linewidth]{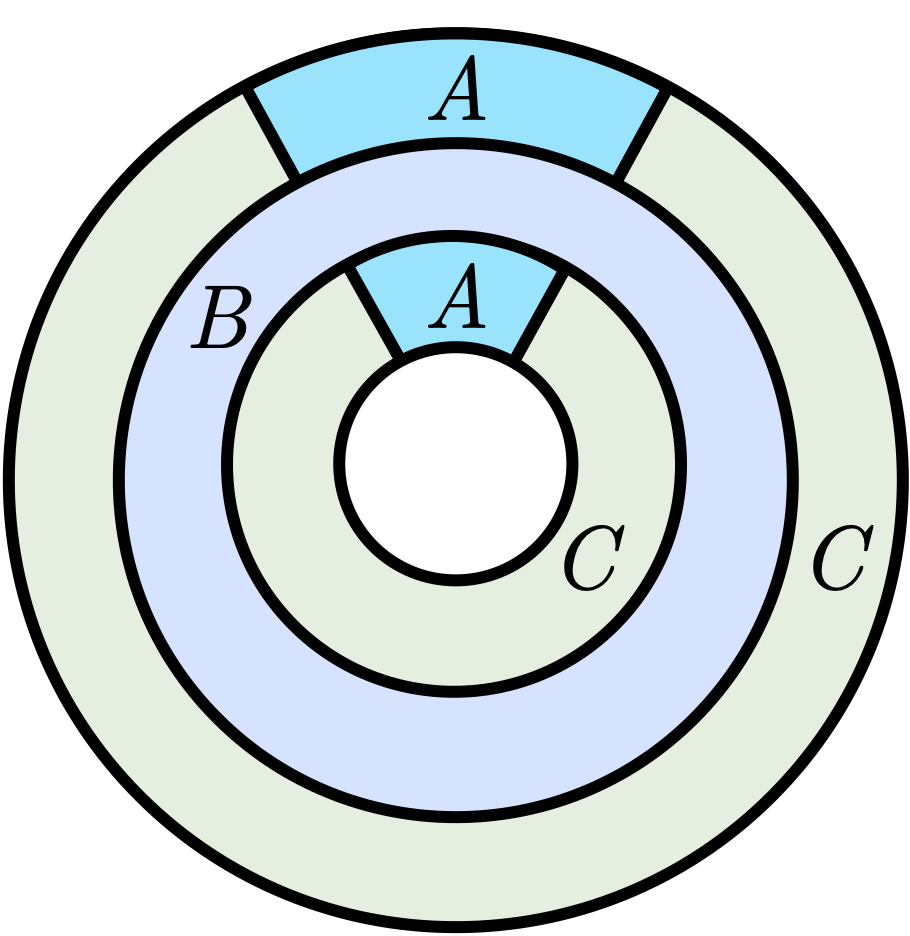}
\caption{Partition of $X = ABC$ used in Definition~\ref{def:qd}.}
\label{fig:QD}
\end{figure}

We then obtain the following proposition and theorem, whose proofs can be found in Refs.~\onlinecite{shi_immersed_2024,shi_fusion_2020}.

\begin{proposition}
\begin{align}
2\log (d_a/d_b)=S(\rho_X^a)-S(\rho_X^b),
\end{align}
where $a, b \in \mathcal{C},$ and $X$ is a local annulus.
\end{proposition}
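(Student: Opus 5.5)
The plan is to compare $\rho_X^a$ and $\rho_X^b$ through the quantity $\Delta(A,B,C)$ of Definition~\ref{def:qd}, which by definition equals $4\log d_a$ on $\rho_X^a$. The tripartition is the one fixed there: $X=ABC$, with $AC=\partial X$ the thickened boundary and $B$ the interior. I would write
\begin{equation*}
\Delta(A,B,C)_{\rho}=S(\rho_{AB})+S(\rho_{BC})-S(\rho_A)-S(\rho_C)
\end{equation*}
for $\rho=\rho_X^a$ and for $\rho=\rho_X^b$, and then track how each term changes when $a$ is replaced by $b$.

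\textbf{Terms on $A$ and $C$.} Each of $A$ and $C$ is a union of two disks. Elements of $\Sigma(X)$ are locally indistinguishable from the reference state on such disks, so $\rho_A$ and $\rho_C$ coincide with the reference reductions for every element of $\Sigma(X)$. Hence $S(\rho_A)$ and $S(\rho_C)$ are the same for $a$ and $b$.

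\textbf{Terms on $AB$ and $BC$.} In the geometry of Fig.~\ref{fig:QD}, the region $AB$ (and likewise $BC$) is an annulus homotopic to $X$ with part of its boundary thickening removed. Using the isomorphism theorem for ICSs under deformations that add or remove disks along the boundary (Ref.~\onlinecite{shi_fusion_2020}), the entropy differences between extreme points are preserved under such an extension. Applying this to $AB\subset X$ gives
\begin{equation*}
S(\rho^a_{AB})-S(\rho^b_{AB})=S(\rho_X^a)-S(\rho_X^b),
\end{equation*}
and the same identity holds for $BC$.

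\textbf{Combining.} Subtracting the two expressions for $\Delta$ gives
\begin{equation*}
4\log(d_a/d_b)=2\bigl[S(\rho_X^a)-S(\rho_X^b)\bigr],
\end{equation*}
which is the claim after dividing by two.

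\textbf{Main obstacle.} The hardest step is justifying the entropy-difference preservation for $AB$ and $BC$. The isomorphism $\Sigma(AB)\cong\Sigma(X)$ comes from a sequence of elementary extensions, each a Markov-type merging justified by \textbf{A1}. Each extension preserves entropy differences because $S(\rho_{XD})-S(\rho_X)$ is fixed by the local reference data when $I(D:X\setminus\text{nbhd}\mid\text{nbhd})=0$. I would first verify that the boundary layers in Fig.~\ref{fig:QD} can be grown by such elementary steps. If this fails, the fallback is to use the structure $\rho^a_X$ saturating strong subadditivity across the annulus boundary. That would give $S(\rho^a_{X})=S(\rho^a_{AB})+S(\rho^a_{BC})-S(\rho^a_B)$ with the $B$-terms cancelling in the difference, and I would derive the result from this identity instead.
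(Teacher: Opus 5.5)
Your main argument is correct and matches the paper's. The paper cites earlier work for this local-annulus statement, but it proves the noncontractible analogue, Proposition~\ref{pro:alphabeta}, by exactly your steps: the $A$ and $C$ terms are sector-independent (strictly, because each of $A$ and $C$ lies inside a disk contained in $X$, whose reduced state is unique), and the entropy differences on $AB$ and $BC$ each equal that on $X$ by the isomorphism theorem. Drop the fallback, however. For this partition $I(A:C\mid B)\neq 0$, since the pieces of $A$ and $C$ abut along the cuts of the boundary annuli. Even if it vanished, the $B$-terms would not cancel, because $B$ is itself an annulus with $S(\rho_B^a)-S(\rho_B^b)=2\log(d_a/d_b)$; treating them as cancelling would give the wrong factor of two.
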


\begin{theorem}\label{th:qda}

The quantum dimensions $d_a$ associated with $a \in \mathcal{C}$ satisfy the following set of equations:
\begin{equation}
d_a d_b = \sum_{c \in \mathcal{C}} N^{c}_{ab} d_c ,
\end{equation}
where $N^{c}_{ab}$ is obtained in \eqref{eq:fusion}.     
\end{theorem}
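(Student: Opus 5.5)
The plan is to prove $d_a d_b = \sum_c N_{ab}^c d_c$ by computing the entropy of one well-chosen element of the information convex set of a two-hole disk in two different ways.

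\textbf{Setup.} Let $Y$ be a two-hole disk. Its three boundary annuli are $X_a$ and $X_b$ around the two holes and $X_c$ around the outer boundary. Following Refs.~\onlinecite{shi_fusion_2020,huang_knots_2023}, $\Sigma(Y)$ decomposes into blocks $\Sigma_{ab}^c(Y)$, labeled by the extreme sectors on the three boundary annuli. Each block is isomorphic to the state space $\mathcal{S}(\mathbb{V}_{ab}^c)$ of a Hilbert space of dimension $N_{ab}^c$.

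\textbf{First computation: an entropy formula for the blocks.} The first step is to use axioms ${\bf A0}$ and ${\bf A1}$ to show that $\Sigma(Y)$ obeys a merging structure. For a state in $\Sigma_{ab}^c(Y)$, extreme up to the internal $N_{ab}^c$-dimensional degeneracy, the entropy is fixed relative to the reference (vacuum) state by the boundary sectors. Concretely, comparing with the reference element $\sigma_Y^{11,1}$ and using the preceding proposition on each boundary annulus gives
\[
S(\rho_Y)-S(\sigma_Y^{11,1}) = \log d_a^2+\log d_b^2+\log d_c^2 - 2\log d_c + S_{\mathrm{int}} ,
\]
where $S_{\mathrm{int}}$ is the entropy of the internal state in $\mathcal{S}(\mathbb{V}_{ab}^c)$. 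I will need to check the exact bookkeeping of the outer-boundary term against the orientation conventions of Ref.~\onlinecite{shi_fusion_2020}. I expect the net result to be
\[
S(\rho_Y)-S(\sigma_Y^{11,1}) = \log(d_a d_b d_c) + S_{\mathrm{int}} .
\]

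\textbf{Second computation: the maximum-entropy element.} Next, consider the maximum-entropy element of $\Sigma(Y)$ with the hole sectors fixed to $a$ and $b$, obtained by maximizing over $c$ and over internal states. Maximizing a sum of block entropies plus the Shannon term over the block weights is a standard Gibbs-type optimization. It gives
\[
S_{\max} = S(\sigma_Y^{11,1}) + \log(d_a d_b) + \log \sum_c N_{ab}^c d_c ,
\]
with weights $p_c \propto N_{ab}^c d_c$.

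The same maximal entropy can be computed independently. Fixing $a$ and $b$, the maximum-entropy state on $Y$ is obtained by merging, via the merging theorem (strong subadditivity saturation from ${\bf A1}$), the maximum-entropy states on $X_a$ and $X_b$ with the reference state on the rest. This independent route yields
\[
S_{\max} = S(\sigma_Y^{11,1}) + 2\log d_a + 2\log d_b .
\]
Equating the two expressions for $S_{\max}$ gives $d_a d_b = \sum_c N_{ab}^c d_c$.

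\textbf{Main obstacle.} The hardest step is justifying the merging claim in the independent computation: that the merged state really is the entropy maximizer and simultaneously realizes all $c$ with the Gibbs weights. If merging proves delicate, my fallback is an iteration argument. Apply the fusion rules repeatedly to obtain $\rho^{\otimes}$-type multiplicities $N_{a^n}$, bound entropies of $n$-hole disks linearly in $n$ on both sides, and conclude via the Perron--Frobenius eigenvalue of the matrix $N_a$, identifying that eigenvalue with $d_a$ through the entropy formula.
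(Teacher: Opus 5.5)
Your overall strategy is the one behind this theorem. The paper quotes it from Ref.~\onlinecite{shi_fusion_2020}, and that argument is reproduced almost verbatim in the paper's proof of Theorem~\ref{th:qdalpha}(i). It evaluates the entropy of the merged state $\rho^{a\triangleright\!\triangleleft b}$ on the two-hole disk twice: once from entropy additivity of merging, and once as the maximum-entropy element with weights $p_c\propto N_{ab}^c d_c$.

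\textbf{The gap is in your first computation.} The expression you write, $\log d_a^2+\log d_b^2+\log d_c^2-2\log d_c$, equals $2\log d_a+2\log d_b$. With it, your Gibbs step would give $S_{\max}=S(\sigma_Y^{11,1})+2\log d_a+2\log d_b+\log\sum_c N_{ab}^c$, which forces $\sum_c N_{ab}^c=1$.

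The correct block formula is $S(\rho_Y^{abc})-S(\sigma_Y^{11,1})=\log(d_ad_bd_c)$. It does not come from orientation bookkeeping: the outer annulus contributes $2\log d_c$ just like the holes, since its extreme point is the same density matrix whichever way you orient it. What you are missing is a factor of one half, supplied by the factorization property of extreme points, $S(\rho_Y)+S(\rho_{Y\setminus\partial Y})=S(\rho_{\partial Y})$. The argument runs as follows:
\begin{itemize}
\item The isomorphism theorem (deforming $Y\setminus\partial Y$ into $Y$) makes the two entropy differences on the left-hand side equal.
\item $\rho_{\partial Y}$ factorizes into extreme points on the three boundary annuli, each contributing $2\log d$.
\item Hence $2\bigl[S(\rho_Y^{abc})-S(\sigma_Y^{11,1})\bigr]=2\log d_a+2\log d_b+2\log d_c$.
\end{itemize}
This is Lemma~4.8 of Ref.~\onlinecite{shi_fusion_2020}, the local counterpart of Lemma~\ref{lemma:factorization}. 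With it, and with the additivity $S=S_{\rm ext}+S_{\rm int}$ that you already take from the Hilbert-space theorem, your second computation is correct.

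\textbf{The step you single out as the main obstacle is short.} Any $\rho\in\Sigma(Y)$ with hole labels $a,b$ has marginals $\sigma^a_{ABC}$ and $\sigma^b_{BCD}$. This holds because the isomorphism theorem lets the label on the thin boundary annulus fix the state on the larger annulus, and $BC$ is a disk. Strong subadditivity then gives $S(\rho_{ABCD})\le S(\sigma^a_{ABC})+S(\sigma^b_{BCD})-S(\sigma_{BC})$. Equality holds for the merged state, which lies in $\Sigma(Y)$ by the merging theorem. Because the maximum-entropy element of a convex set is unique, the merged state automatically coincides with your Gibbs-weighted state. No separate argument about realizing each $c$ with the right weight is needed, and you can drop the Perron--Frobenius fallback.
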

\noindent We will see how the quantum dimension can be defined in a similar way for noncontractible annuli and what this implies for the TEE. We further define the total quantum dimension $\mathcal{D}$.
\begin{definition}[Total quantum dimension]
The total quantum dimension $\mathcal{D}$ is defined as
\end{definition}
\begin{align*}
\mathcal{D} = \sqrt{\sum_{a \in \mathcal{C}} d_a^2}.
\end{align*}

\subsection{Transparent domain walls}

We can transport local annuli along a ``path,'' which, roughly speaking, specifies a way of moving a subregion while preserving its topology. A rigorous definition of the path is given in Appendix~\ref{asec:1} and Ref.~\onlinecite{shi_fusion_2020}. According to the isomorphism theorem~\cite{shi_fusion_2020}, the ICS of the transported local annulus is isomorphic to that of the original local annulus, with the isomorphism determined by the path of transport. Importantly, the isomorphism theorem applies only to states satisfying the axioms ${\bf A0}$ and ${\bf A1}$, namely, reference states.

In general, transporting a local annulus along two different paths that both lie entirely within a disk induces the same isomorphism of the ICS~\footnote{This statement holds under the definition of a path as a sequence of embedded regions related by elementary deformation steps that preserve the embedded-region condition, following Ref.~\onlinecite{shi_fusion_2020}. Whether it extends to immersed regions remains an open problem~\cite{shi_immersed_2024}.}. This then provides a natural way to label the extreme points of the local annulus after the transport (see the representation of the extreme points in Theorem~\ref{th:simplex}). In contrast, two paths not fully contained within \emph{the same disk} may induce different isomorphisms.

Similarly, transporting a local annulus along a path that lies entirely within a disk and returning it to its original position yields a trivial automorphism on the associated ICS~\cite{shi_fusion_2020}. However, a nontrivial automorphism may arise when the transport path winds around a noncontractible loop of the underlying manifold, which in the present case is a torus.

We then define automorphisms of $\mathcal{C}$ as follows.

\begin{definition}[Automorphism $\phi_{(n,m)}$]
The automorphisms of $\mathcal{C}$ associated with the noncontractible loop $l_{(n,m)}$ are defined as
\begin{align*}
\phi_{(n,m)}: \mathcal{C} \rightarrow \mathcal{C},
\end{align*}
where $n,m \in \mathbb{Z}$. Thus, $\phi_{(1,0)}$ and $\phi_{(0,1)}$ can be denoted as $\phi_x$ and $\phi_y$.
\end{definition}

\noindent Moreover, we have the following propositions.

\begin{proposition}\label{pro:trivial}
The trivial anyon $1$ is invariant under $\phi_{(n,m)}$, i.e., $\phi_{(n,m)}(1)=1$ for all $n,m \in \mathbb{Z}$.
\begin{proof}

Let us refer to the extreme point of a local annulus labeled by the trivial anyon $1$ as the vacuum state. Consider first transporting a local annulus within a larger disk, $X_0 \to X_1$. By Lemma~4.3 of Ref.~\onlinecite{shi_fusion_2020}, the vacuum state on $X_0$ is mapped to the vacuum state on $X_1$.

The same reasoning applies to transporting a local annulus on a torus, except that the full transport must be decomposed into a sequence of local steps supported in overlapping disks. Consider a path $X_0 \to X_1 \to X_2 \to \cdots \to X_N=X_0$ representing the homology class $(n,m)$. Applying the above local argument at each step shows that the vacuum state is mapped back to itself after completing the path. Therefore, the trivial anyon is invariant under $\phi_{(n,m)}$, namely, $\phi_{(n,m)}(1)=1 .$
\end{proof}
\end{proposition}

\begin{proposition}\label{pro:phi}
The maps $\phi_{(n,m)}$ are elements of the permutation group $S_k$, with $k=|\mathcal{C}|-1$, where the permutations act on the elements of $\mathcal{C}$ excluding the trivial anyon $1$. Furthermore,
\begin{align} \label{eq: phi commute}
\phi_{(n,m)}&=\phi_x^n \circ \phi_y^m= \phi_y^m \circ \phi_x^n.
\end{align}

\begin{proof}

Since $\phi_{(n,m)}$ is an automorphism of the ICS of a local annulus, it permutes its extreme points and therefore defines an element of $S_{|\mathcal{C}|}$. By
Proposition~\ref{pro:trivial}, the trivial anyon is fixed by every $\phi_{(n,m)}$. Hence, $\phi_{(n,m)}$ acts as an element of $S_k$, where $k=|\mathcal{C}|-1$.

The path corresponding to $\phi_{(n,m)}$ is homotopic to the concatenation of $n$ $(1,0)$ transports and $m$ $(0,1)$ transports. Therefore, it remains to prove that
\begin{equation}
\label{eq:xy}
\phi_y \circ \phi_x = \phi_x \circ \phi_y .
\end{equation}
The general claim then follows by applying \eqref{eq:xy} a finite number of times. 

Choose an initial local annulus $\widetilde{X}_0$. The transport corresponding to $\phi_y\circ\phi_x$ is a path of local annuli from $\widetilde{X}_0$ to its translate by $(L_x,L_y)$, obtained by first translating in the $x$ direction and then in the $y$ direction. Similarly, the transport corresponding to $\phi_x\circ\phi_y$ is a path with the same initial and final annuli, but with the two translations performed in the opposite order.
The two paths are contained within a common immersed disk $\mathfrak{D}$ of size $(L_x+d)\times(L_y+d)$, where $d$ is the diameter of the local annulus, and share the same initial and final annuli, where the immersion is defined in Ref.~\onlinecite{huang_knots_2023}. Importantly, every intermediate annulus along both paths remains embedded in $\mathfrak{D}$, and each consecutive pair is related by an elementary deformation supported within an embedded disk. Thus, no immersed annulus is required during the transport process. Applying path independence successively to these elementary deformations shows that the two paths induce the same isomorphism between the ICSs of their common initial and final annuli; this is by Lemma~4.3 of Ref.~\onlinecite{shi_fusion_2020}. Therefore, this proves \eqref{eq:xy}.
\end{proof}
\end{proposition}

\noindent In other words, all distinct automorphisms can be described in terms of commuting automorphisms $\phi_x$ and $\phi_y$. To provide a more intuitive interpretation of these automorphisms and Proposition~\ref{pro:phi}, we introduce the notion of transparent domain walls for $\mathcal{C}$ along $l_{x(y)}$. This picture provides an interpretation in which an automorphism arises when an element of $\mathcal{C}$ is transported across the wall, while also reflecting the fact that $\phi_x$ and $\phi_y$ generate all distinct automorphisms.

\begin{definition}[Transparent domain walls for $\mathcal{C}$]
A transparent domain wall for $\mathcal{C}$ along $l_{x(y)}$ is a wall in $\mathtt{R}$, topologically equivalent to $l_{x(y)}$, such that transporting an annulus across the wall induces a nontrivial automorphism $\phi_{y(x)}$ on the elements of $\mathcal{C}$.
\end{definition}

\noindent Here, ``transparent'' refers to the fact that the presence of the transparent domain wall is invisible to the axioms ${\bf A0}$ and ${\bf A1}$~\cite{shi_entanglement_2021, buican_algebraic_2025}. If multiple such domain walls are present, we regard them as being moved to the same position. It is therefore sufficient to work with a single transparent domain wall for $\mathcal{C}$ along $l_{x(y)}$, and to identify its induced automorphism on the labels with that obtained after transporting the annulus back to its original position. We adopt this convention throughout the paper.

The commutativity of the automorphisms $\phi_x$ and $\phi_y$ established in Proposition \ref{pro:phi} can also be understood in the domain-wall picture. General defect theories allow branch lines corresponding to domain walls labeled by noncommuting symmetry elements. When a defect labeled by $h$ is transported across a $g$-branch line, its group label changes to $ghg^{-1}$, and its attached branch line changes accordingly \cite{PhysRevB.100.115147}. However, closing branch lines around both fundamental cycles of a torus imposes an additional consistency condition. In the construction of Ref.~\onlinecite{PhysRevB.100.115147}, the two branch lines can be closed without leaving point-defect endpoints only when $gh=hg$. At the level of their induced permutations of anyon labels, this is consistent with the relation $\phi_x\circ\phi_y=\phi_y\circ\phi_x$ in \eqref{eq:xy}. Thus, Proposition \ref{pro:phi} restricts which wall actions can be simultaneously realized as the two cycle automorphisms of a reference state satisfying \textbf{A0} and \textbf{A1}.

Several remarks concerning transparent domain walls for $\mathcal{C}$ are in order. First, transparent domain walls for $\mathcal{C}$ are topologically deformable. This property stems from the fact that the set $\mathcal{C}$ is not physically fixed, but is instead conventional. For example, for the $e$-$m$ exchange transparent domain wall of the toric code phase, which we discuss in more detail in Sec.~\ref{sec:5}, the corresponding noncontractible loop cannot be deformed away from both disks shown in Fig.~\ref{fig:two-disks}. It can, however, be chosen to avoid any one disk individually.

\begin{figure}
\centering
\includegraphics[width=0.6\linewidth]{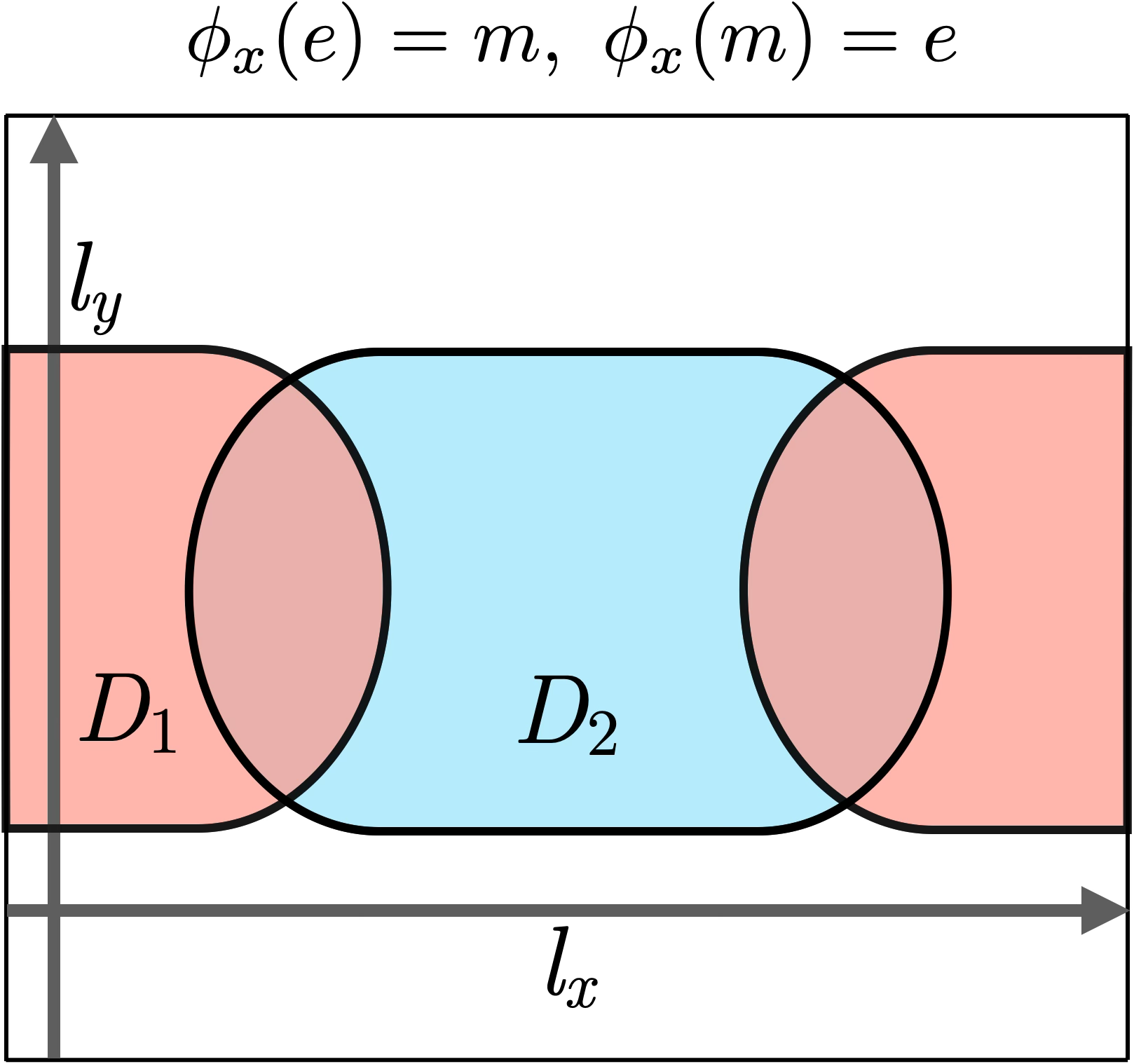}
\caption{A transparent domain wall along $l_y$ permutes the anyons as $e \leftrightarrow m$ when they are transported along $l_x$, i.e., $\phi_x(e)=m$ and $\phi_x(m)=e$. Whether the transparent domain wall is regarded as being away from the disk $D_1$ or $D_2$ is a matter of convention; however, it cannot be deformed away from both disks simultaneously. The two topologically independent noncontractible loops on $\mathtt{R}$, denoted by $l_x$ and $l_y$, respectively, are depicted. }
\label{fig:two-disks}
\end{figure}

Second, a transparent domain wall for $\mathcal{C}$ acts solely by inducing a nontrivial automorphism on the ICSs defined on local annuli, while leaving the ICSs themselves invariant. Consequently, the quantum information extractable from ICSs on local annuli remains unchanged in the presence of a transparent domain wall.

Third, these automorphisms must preserve the fusion rules. This can be shown using the following theorem.

\begin{theorem}
If
\begin{equation}
a \times b = \sum_{c \in \mathcal{C}} N_{ab}^c c
\end{equation}
is a fusion rule, then
\begin{align}
\phi_{(n,m)}(a) \times \phi_{(n,m)}(b) =\sum_{c \in \mathcal{C}}
N_{ab}^c \phi_{(n,m)}(c)\label{eq:phiabc}
\end{align}
must also hold. Equivalently, the fusion coefficients are invariant under $\phi_{(n,m)}$, i.e.,
\begin{equation}
N_{ab}^c=N_{\phi_{(n,m)}(a),\phi_{(n,m)}(b)}^{\phi_{(n,m)}(c)} .\label{eq:NNphi}
\end{equation}

\begin{proof}
By the definition of the fusion rules in Ref.~\onlinecite{shi_fusion_2020}, there are $N_{ab}^{c}$ mutually orthogonal extreme points of the ICS of a two-hole disk for which the local annuli encircling the two holes are labeled by $a$ and $b$, while the local annulus encircling the entire two-hole disk is labeled by $c$.

Now consider these mutually orthogonal extreme points of the ICS of the two-hole disk, and transport the two-hole disk along a path winding around the noncontractible loop $l_{(n,m)}$ before returning it to its original region. Under the induced automorphism, the labels $a$, $b$, and $c$ are mapped to $\phi_{(n,m)}(a)$, $\phi_{(n,m)}(b)$, and $\phi_{(n,m)}(c)$, respectively. Therefore, the transported states are extreme points of the ICS of a two-hole disk specified by the labels $\phi_{(n,m)}(a)$, $\phi_{(n,m)}(b)$, and $\phi_{(n,m)}(c)$. This gives
\begin{equation}
N_{ab}^{c}
\leq
N_{\phi_{(n,m)}(a)\phi_{(n,m)}(b)}^{\phi_{(n,m)}(c)}.
\end{equation}
Applying the same argument to the inverse transport gives the opposite inequality,
\begin{equation}
N_{ab}^{c}
\geq
N_{\phi_{(n,m)}(a)\phi_{(n,m)}(b)}^{\phi_{(n,m)}(c)}.
\end{equation}
Hence \eqref{eq:NNphi} follows, which directly implies \eqref{eq:phiabc}.

\end{proof}
\end{theorem}
 
\noindent We also note that, in determining the fusion rules of $\mathcal{C}$ using the two-hole disk~\cite{shi_fusion_2020}, we implicitly assume that transparent domain walls are located away from the two-hole disk, since topological deformations of the two-hole disk—restricted to a sufficiently large disk containing it—do not change the labels of the annular boundaries.

In addition, we define the following subsets of $\mathcal{C}$; these subsets will be useful for the subsequent analysis.
\begin{definition}
$\mathcal{C}^{\phi_{(n,m)}}$ is defined as the subset of $\mathcal{C}$ given by
\begin{align*}
\mathcal{C}^{\phi_{(n,m)}}=\{ a \in \mathcal{C} \mid \phi_{(n,m)}(a)=a \},
\end{align*}
where $n,m \in \mathbb{Z}$.
\end{definition}

\subsection{ICSs on noncontractible annuli}
\label{sec:3c}

We now consider ICSs associated with noncontractible annuli. As briefly discussed in Sec.~\ref{sec:2}, such ICSs can encode various types of information associated with transparent domain walls. In this subsection, we make this connection explicit.

Let $\mathtt{A}_x$ and $\mathtt{A}_y$ denote the following sets of noncontractible annuli.

\begin{definition}
$\mathtt{A}_{x(y)}$ is defined as the set of noncontractible annuli $X \subset \mathtt{R}$ that contain noncontractible loops $l_{x(y)}$.
\end{definition}
\noindent Even for noncontractible annuli, Theorem~\ref{th:simplex} implies that the ICS consists of finitely many mutually orthogonal extreme points and therefore forms a simplex.

\begin{definition}
Let $X \in \mathtt{A}_{x(y)}$. 
$\mathfrak{C}_{x(y)}$ is defined as the set of labels of the extreme points of $\Sigma(X)$.
\end{definition}
\noindent We use Greek letters to denote the elements in $\mathfrak{C}_{x(y)}$.

In what follows, we extract and analyze various properties associated with $\mathfrak{C}_{x(y)}$. We begin by determining, following Ref.~\onlinecite{shi_immersed_2024}, the number of extreme points of the ICSs defined on noncontractible annuli. This counting directly reveals how the structure of the ICS reflects the presence of transparent domain walls intersecting the annuli.

\begin{theorem}\label{th:cc}
For $X \in \mathtt{A}_{x(y)}$, the corresponding set $\mathfrak{C}_{x(y)}$ has cardinality
\begin{align}
|\mathfrak{C}_{x(y)}|=|\mathcal{C}^{\phi_{x(y)}}|.
\end{align}

\begin{proof}
We note that the noncontractible annulus $X$ is the thickening of the anyon-transport loop, which induces the automorphism $\phi_{x(y)}$ of the set of anyons $\mathcal{C}$. Thus, the claim in the theorem can be rephrased as stating that the number of extreme points of the ICS on the noncontractible annulus thickening the transport path equals the number of anyons preserved by the transport. The theorem is therefore a special case of Proposition~12 of Ref.~\onlinecite{shi_immersed_2024}. This completes the proof.
\end{proof}
\end{theorem}

Next, noncontractible annuli also satisfy the isomorphism theorem~\cite{shi_fusion_2020}. In this setting, one can show that transporting a noncontractible annulus to another noncontractible annulus along two different paths induces the same isomorphism, provided that both paths, as well as both annuli, are fully contained in a larger noncontractible annulus. In contrast, if the two paths are not contained in a common noncontractible annulus, they may induce distinct isomorphisms. The proof is given in Appendix~\ref{asec:1}. This further implies that a nontrivial automorphism of $\mathfrak{C}_{x(y)}$ can arise from transport along only the noncontractible loop $l_{y(x)}$, and not along $l_{x(y)}$.

We now define the automorphisms of $\mathfrak{C}_{x(y)}$ and the corresponding transparent domain walls.
\begin{definition}[Automorphism $\tau_{y(x)}$]
The automorphisms of $\mathfrak{C}_{x(y)}$ associated with the noncontractible loop $l_{y(x)}$ are defined as $\tau_{y(x)}: \mathfrak{C}_{x(y)} \rightarrow \mathfrak{C}_{x(y)}$.
\end{definition}

\begin{definition}[Transparent domain walls for $\mathfrak{C}_{x(y)}$]
A transparent domain wall for $\mathfrak{C}_{x(y)}$ along $l_{x(y)}$ is a wall in $\mathtt{R}$, topologically equivalent to $l_{x(y)}$, such that transporting an annulus across the wall induces a nontrivial automorphism $\tau_{y(x)}$ on the elements of $\mathfrak{C}_{x(y)}$.
\end{definition}

\noindent These transparent domain walls are likewise topologically deformable, and the deformability leads to several propositions concerning the fusion rules of $\mathcal{C}$ and $\mathfrak{C}_{x(y)}$, as explained below.

\begin{figure*}
\centering
\begin{overpic}[width=0.9\linewidth]{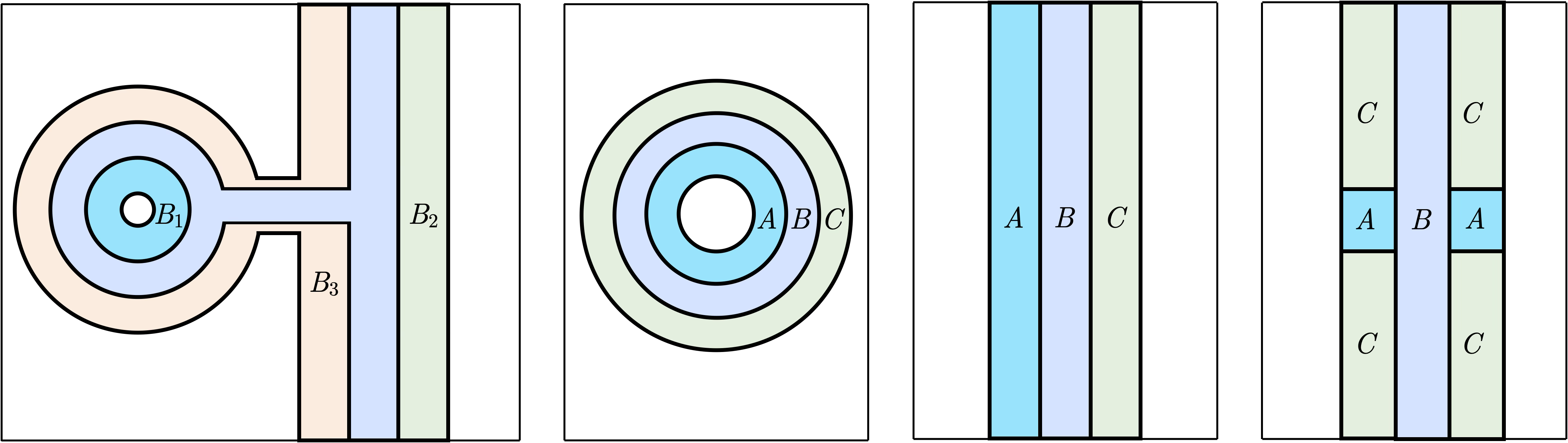}
\put(-2.2,29){(a)}
\put(33.5,29){(b)}
\put(56,29){(c)}
\put(78,29){(d)}
\end{overpic}
\caption{(a) Partitions of the punctured noncontractible annulus $Y$. Partition of $Z=ABC$ for (b) the local annulus case and (c) the noncontractible annulus case, used in Theorem~\ref{th:ext}. (d) Partition of $X = A BC$ used in Definition~\ref{def:qdalpha}.}
\label{fig:fusion}
\end{figure*}

We now define the fusion rules of $\mathcal{C}$ and $\mathfrak{C}_{x(y)}$. For simplicity, we mainly focus on $\mathfrak{C}_y$ throughout this section. The discussion applies straightforwardly to $\mathfrak{C}_x$ as well. To define the fusion rules between $\mathcal{C}$ and $\mathfrak{C}_{y}$, we consider the punctured noncontractible annulus $Y$ shown in Fig.~\ref{fig:fusion}(a). We also assume that a transparent domain wall along $l_y$ is located sufficiently far from $Y$. This ensures that topological deformations of $Y$, restricted within a sufficiently large noncontractible annulus containing $Y$, do not change the labels of extreme points of the ICS on the local or noncontractible annuli, thereby maintaining the consistency of the fusion rules.

The subregion $Y$ has three thickened boundaries, $B_1$, $B_2$, and $B_3$, which are, respectively, a local annulus and two noncontractible annuli. Accordingly, the extreme points of the ICS on $B_1$ are labeled by elements of $\mathcal{C}$, while those on $B_2$ and $B_3$ are labeled by elements of $\mathfrak{C}_y$.

For the region shown in Fig.~\ref{fig:fusion}(a), we define $\Sigma_{a\alpha}^{\beta}(Y)$ as the convex subset of $\Sigma(Y)$ consisting of states whose reductions to the three boundary annuli have fixed extreme-point labels \(a\) on \(B_1\), \(\alpha\) on \(B_2\), and \(\beta\) on \(B_3\):

\begin{definition}
For the subregion $Y$ shown in Fig.~\ref{fig:fusion}(a), we define $\Sigma_{a\alpha}^{\beta}(Y)$ as
\begin{align*}
\Sigma_{a\alpha}^{\beta}(Y)\equiv&\{\rho_{Y}\in \Sigma(Y)|{\rm Tr}_{Y \setminus B_1 } \rho_Y=\sigma_{B_1}^{a},\nn
&{\rm Tr}_{Y \setminus B_2 } \rho_Y=\sigma_{B_2}^{\alpha}, {\rm Tr}_{Y \setminus B_3 } \rho_Y=\sigma_{B_3}^{\beta}  \},
\end{align*}
where $\sigma_{B_1}^{a} \in {\rm ext}\Sigma(B_1)$, $\sigma_{B_2}^{\alpha} \in {\rm ext}\Sigma(B_2)$, $\sigma_{B_3}^{\beta} \in {\rm ext}\Sigma(B_3)$, $a \in \mathcal{C}$, and $\alpha,\beta \in \mathfrak{C}_y$.
\end{definition}
\noindent In addition, using these convex subsets, we can express the corresponding ICSs as follows.

\begin{theorem}\label{th:ext}
For the subregion $Y$ shown in Fig.~\ref{fig:fusion}, the ICS $\Sigma(Y)$ takes the form
\begin{align}\label{eq:convex-1}
\Sigma(Y) =\left\{ \rho_{Y} = \hspace{-10pt} \sum_{a \in \mathcal{C}, ~ \alpha, \beta \in \mathfrak{C}_y} \hspace{-15pt} p^{\beta}_{a\alpha}  \rho^{a\alpha \beta}_{Y}
\;\middle|\;
\rho^{a\alpha \beta}_{Y} \in \Sigma^{\beta}_{a\alpha}(Y)
\right\},
\end{align}
where $\{p_{a \alpha}^{\beta}\}$ is a probability distribution.
\begin{proof}
First, we need a property of any extreme point of $\Sigma(X)$, known as the factorization property:
\begin{equation}
    (S_X + S_{X'} - S_{X'\setminus X})_\rho =0, \quad {\rm for} \quad \rho\in {\rm ext}\Sigma(X),
\end{equation}
where $X'$ is the thickening of $X$. (The proof follows from Appendix C of Ref.~\onlinecite{shi_entanglement_2021}.)

This statement has two important consequences. First, for any extreme point of the ICS associated with a given region, the reduced density matrix factorizes over the distinct connected components of the thickened boundary. Second, let $Z$ be any connected component of the thickened boundary. In the present setting, $Z$ is an annulus thickening an entanglement boundary of $Y$. The onion-like decomposition $Z=ABC$, illustrated for the local and noncontractible annuli in Figs.~\ref{fig:fusion}(b) and~\ref{fig:fusion}(c), respectively, satisfies $I(A:C)=0$ in both cases (see Proposition D.6 of Ref.~\onlinecite{shi_fusion_2020}). It follows that the reduced density matrix on $Z$ must itself be an extreme point. Therefore, every extreme point of $\Sigma(Y)$ belongs to some $\Sigma_{a\alpha}^{\beta}$.

We next use the fact that $\Sigma(Y)$ is a compact convex set (see Proposition~3.4 of Ref.~\onlinecite{shi_fusion_2020}). Hence, any state in $\Sigma(Y)$ can be written as a convex combination of extreme points, which establishes \eqref{eq:convex-1}.
\end{proof}
\end{theorem}

We now show that the following convex subsets are isomorphic to state spaces of finite-dimensional Hilbert spaces, i.e., to the sets of all density matrices on the corresponding Hilbert spaces. This fact is essential for a precise definition of the fusion rules.

\begin{proposition}\label{pro:statespace}
\begin{align}
&\Sigma_{a\alpha}^{\beta}(Y)\cong \mathcal{S}(\mathbb{V}_{a\alpha}^\beta)
\end{align}
where $A \cong B$ indicates that $A$ is isomorphic to $B$, and $\mathcal{S}(\mathbb{V}_{a\alpha}^\beta)$ is the state space of the Hilbert space $\mathbb{V}_{a\alpha}^\beta$.
\begin{proof}
The proof follows from the Hilbert space theorem, which is Theorem~D.1 of Ref.~\onlinecite{shi_entanglement_2021}.
\end{proof}
    
\end{proposition}

\begin{definition}[Fusion rules]\label{def:fusion}
We define the fusion rules of labels $a \in {\cal C}$ and $\alpha, \beta \in \mathfrak{C}_y$ by the formal product
\begin{align*}
a \times \alpha =\sum_{\beta \in \mathfrak{C}_y} N_{a\alpha}^{\beta} \beta
\end{align*}
where $N^{\beta}_{a \alpha} \equiv \dim \mathbb{V}^{\beta}_{a \alpha}$~\footnote{We use the same symbol $N$ for distinct fusion coefficients; the different index structures distinguish them unambiguously, so we keep this notation.}.
\end{definition}

\noindent Now we can define quantum dimensions and the corresponding property.
\begin{definition}[Quantum dimension $d_\alpha$]\label{def:qdalpha}
The quantum dimension $d_\alpha$ associated with $\rho_X^{\alpha} \in {\rm ext}\Sigma(X)$ and $X \in \mathtt{A}_y$ is defined as
\begin{equation*}
d_\alpha \equiv \exp\left( \frac{1}{4} \Delta(A,B,C)_{\rho^{\alpha}_X} \right),
\end{equation*}
where $X = A B C$ as illustrated in Fig.~\ref{fig:fusion}(d), with $A C = \partial X$ denoting the thickened boundary and $B = X \setminus \partial X$ the interior. Each of $A$ and $C$ is a union of two disks.    
\end{definition}
\noindent One readily sees that $d_\alpha$ is the analogue of $d_a$, as is evident from comparing Definition~\ref{def:qdalpha} with Definition~\ref{def:qd}. Therefore, we obtain the following proposition.

\begin{proposition}\label{pro:alphabeta}
\begin{align}
2\log (d_\alpha/d_\beta)=S(\rho_X^\alpha)-S(\rho_X^\beta),
\end{align}
where $\alpha, \beta \in \mathfrak{C}_y$ and $X\in \mathtt{A}_y$.

\begin{proof}
From Definition~\ref{def:qdalpha}, we get 
\begin{align}
\Delta(A,B,C)_{\rho^{\alpha}}-\Delta(A,B,C)_{\rho^{\beta}}= 4\log(d_\alpha/d_\beta)
\end{align}
for the regions $A,B,C$ as in Fig. \ref{fig:fusion}(d).
Then 
\begin{align}
&\left[ S(\rho^{\alpha}_{AB})-S(\rho^{\beta}_{AB})\right]+\left[S(\rho^{\alpha}_{BC})-S(\rho^{\beta}_{BC})\right]\nn
&=4\log(d_\alpha/d_\beta).\label{eq: aux}
\end{align}
Note that both regions $AB$ and $BC$ are related to each other by elementary steps of deformation (in the sense of Ref.~\onlinecite{shi_fusion_2020}). Then the isomorphism theorem in Ref.~\onlinecite{shi_fusion_2020} gives that the entropy difference between two elements in the ICS is preserved under the deformation of the subregion $AB$ into $BC$ (or, equivalently, $BC$ into $AB$). That is,
\begin{equation}
S(\rho^{\alpha}_{AB})-S(\rho^{\beta}_{AB})=S(\rho^{\alpha}_{BC})-S(\rho^{\beta}_{BC})=2 \log(d_\alpha/d_\beta),
\end{equation}
where \eqref{eq: aux} was used to obtain the last equality. Note that the regions $AB$ and $BC$ are also deformable to subregion $X \in \mathtt{A}_y$. Then, by the isomorphism theorem, the difference between the entanglement entropies for that subregion is also preserved, which means
\begin{equation}
S(\rho^\alpha_X)-S(\rho_X^{\beta})=2\log(d_\alpha/d_\beta).
\end{equation}
\end{proof}
\end{proposition}

Moreover, we obtain the following properties, which will be useful for the subsequent analysis.

\begin{lemma}\label{lemma:factorization}
Consider the punctured noncontractible annulus $Y$ in Fig.~\ref{fig:fusion}(a). The entropy difference of two extreme points $\rho^{a\alpha \beta}_{Y} \in \Sigma^{\beta}_{a\alpha}(Y)$ and $\rho_Y^{a'\alpha'\beta'} \in \Sigma^{\beta'}_{a'\alpha'}(Y)$ is
\begin{equation}\label{eq:extreme-pt-entropy}
    S(\rho_Y^{a \alpha\beta}) - S(\rho_Y^{a'\alpha'\beta'}) = \log \frac{d_a d_\alpha d_\beta}{d_{a'} d_{\alpha'} d_{\beta'}}.
\end{equation}
\end{lemma}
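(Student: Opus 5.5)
The plan is to reduce the entropy of an extreme point of $\Sigma(Y)$ to the entropies of its three thickened-boundary reductions, using the factorization property from the proof of Theorem~\ref{th:ext}. First, Proposition~\ref{pro:statespace} lets us identify $\Sigma^{\beta}_{a\alpha}(Y)$ with $\mathcal{S}(\mathbb{V}^\beta_{a\alpha})$. The extreme points are then the pure states of $\mathbb{V}^\beta_{a\alpha}$, so it suffices to compare $\rho_Y^{a\alpha\beta}$ with $\rho_Y^{a'\alpha'\beta'}$ for arbitrary extreme points of these two subsets.

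\textbf{Step 1: entropy depends only on the labels.} I will show that $S(\rho_Y)$ is constant on the extreme points of $\Sigma^\beta_{a\alpha}(Y)$. For extreme points, I would apply the factorization property $(S_Y+S_{Y'}-S_{Y'\setminus Y})_\rho=0$, where $Y'$ is a thickening of $Y$. Its consequences are:
\begin{itemize}
\item the reduction to $Y'\setminus Y$ factorizes over the three boundary components;
\item each factor is the corresponding extreme point $\sigma^a_{B_1}$, $\sigma^\alpha_{B_2}$, $\sigma^\beta_{B_3}$, transported slightly outward.
\end{itemize}
Alternatively, I can use the merging/Markov structure of extreme points. Either way, differences of $S_Y$ between two extreme points should equal differences of
\begin{equation*}
S_{B_1}+S_{B_2}+S_{B_3}
\end{equation*}
evaluated on the respective boundary extreme points, up to label-independent terms. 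Concretely, I would pick a partition $Y=B\cup C$ with $C$ a slightly thinner copy of $Y$ and $B$ the thickened boundary $B_1B_2B_3$, arranged so that $I(\text{outer}:C\mid B)=0$ for extreme points. The entropy difference is then carried entirely by the boundary annuli. The isomorphism theorem guarantees that the remaining bulk contribution is the same for both states.

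\textbf{Step 2: evaluate the boundary differences.} By the Proposition preceding Theorem~\ref{th:qda}, on the local annulus $B_1$,
\begin{equation*}
S(\sigma^a_{B_1})-S(\sigma^{a'}_{B_1})=2\log(d_a/d_{a'}).
\end{equation*}
By Proposition~\ref{pro:alphabeta}, the analogous identities hold on $B_2$ and $B_3$ with $\alpha,\alpha'$ and $\beta,\beta'$.

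\textbf{Step 3: fix the factor of two.} Summing the Step~2 identities would give twice the claimed logarithm. The correct accounting must therefore be through the half-boundary structure. Because the boundary pieces are annuli adjacent to $Y$, a relation of the form
\begin{equation*}
S_Y = S_{Y\cup\text{collar}} + \text{const} - \tfrac12\textstyle\sum S_{\text{boundary annuli}}
\end{equation*}
is expected, obtained from the strong-subadditivity saturation that defines extreme points. I would try to derive it by the same purification trick as in Ref.~\onlinecite{shi_fusion_2020}. Embed $\rho_Y^{a\alpha\beta}$ into a state on a larger region where the three holes are filled, in a region $Y\cup Y^c$ extended by the complementary annuli. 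Then use that the total state is locally pure, so $S_Y=S_{Y^c}$ up to label-independent terms. Averaging the two equal expressions yields the factor $\tfrac12\cdot2\log d$ for each label. For example, the contribution of $B_2$ would be $\tfrac12(S(\sigma^\alpha)-S(\sigma^{\alpha'}))=\log(d_\alpha/d_{\alpha'})$, and the three labels sum to the claimed $\log\frac{d_ad_\alpha d_\beta}{d_{a'}d_{\alpha'}d_{\beta'}}$.

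\textbf{Main obstacle.} I expect Step~3 to be the hardest: justifying the half-weight rigorously. The fallback is to mimic the local-annulus proof for $N$-hole disks in Ref.~\onlinecite{shi_fusion_2020}, Section~4. There, the analogous formula $S(\rho^{abc})-S(\rho^{a'b'c'})=\log\frac{d_ad_bd_c}{d_{a'}d_{b'}d_{c'}}$ is proven, using Definition~\ref{def:qdalpha} in place of the local quantum dimension for the two noncontractible boundaries.
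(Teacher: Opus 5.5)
\textbf{There is a genuine gap, and it sits exactly at the factor of two you flag in Step~3.} The identity you quote in Step~1 already contains the answer, but you misread it. The factorization property $S_{Y}+S_{Y'}=S_{Y'\setminus Y}$ (in the paper's notation $S(\rho_Y)+S(\rho_{Y\setminus\partial Y})=S(\rho_{\partial Y})$) saturates the Araki--Lieb inequality with a \emph{plus} sign between the two bulk terms. It therefore does not say that differences of $S_Y$ equal differences of $S_{B_1}+S_{B_2}+S_{B_3}$. Your Step~1 conclusion is false, and off by precisely the factor you later try to repair.

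The missing step is one line. Apply the isomorphism theorem to the deformation of the interior $Y\setminus\partial Y$ into $Y$: the reductions of extreme points to the interior have the same entropy differences as on $Y$. Subtracting the factorization identity for the two extreme points then gives
\begin{equation*}
2\left[S(\rho_Y^{a\alpha\beta})-S(\rho_Y^{a'\alpha'\beta'})\right]=S(\rho_{\partial Y}^{a\alpha\beta})-S(\rho_{\partial Y}^{a'\alpha'\beta'})=2\log\frac{d_ad_\alpha d_\beta}{d_{a'}d_{\alpha'}d_{\beta'}} .
\end{equation*}
The last equality is your correct content: the boundary state factorizes over the three components, each factor is the labeled extreme point, and you then use the local-annulus identity $S(\rho^a)-S(\rho^{a'})=2\log(d_a/d_{a'})$ together with Proposition~\ref{pro:alphabeta}. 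Dividing by two finishes the proof, and this is the paper's argument.

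Step~3 as written is not a derivation. The relation for $S_Y$ is only ``expected,'' and the purification-and-averaging route has no justification within the framework.
\begin{itemize}
\item The hole of $Y$ may carry a nontrivial charge $a$, and the two noncontractible boundaries may carry different labels $\alpha\neq\beta$. So there is no evident label-preserving purification on the torus without placing extra excitations in the complement.
\item Even granting a global pure state, $S_Y=S_{Y^c}$ holds exactly. But $S_{Y^c}$ is itself label dependent, so ``averaging'' the two equal expressions does not produce a coefficient of $\tfrac12$.
\item The partition in Step~1 with $I(\text{outer}:C\mid B)=0$ is ill-defined. For a state supported on $Y=B\cup C$ there is no ``outer'' region.
\end{itemize}
Your fallback to Ref.~\onlinecite{shi_fusion_2020} would succeed, but only because that proof uses exactly the Araki--Lieb-plus-isomorphism step above. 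As a deferral, it does not supply the argument.
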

\begin{proof}
The statement and the proof are analogous to Lemma 4.8 of Ref.~\onlinecite{shi_fusion_2020}, though we have to prove this again for generality, including defects. The key property we shall need is the so-called factorization property of extreme points~\cite{shi_entanglement_2021}, which says
\begin{equation}
    S(\rho_{Y}) + S(\rho_{Y \setminus \partial Y}) = S(\rho_{\partial Y}), \quad \rho \in \text{ext}\Sigma(Y).
\end{equation}
Here $\partial Y$ is the thickened boundary of $Y$.
Applying this to two extreme points $\rho_Y^{a \alpha\beta}$ and $\rho_Y^{a'\alpha'\beta'}$, we have
\begin{align}\label{eq:details}
&S({\rho_Y^{a \alpha \beta}}) + S({\rho_{Y \setminus \partial Y}^{a \alpha \beta}})-S({\rho_Y^{a' \alpha' \beta'}}) - S({\rho_{Y \setminus \partial Y}^{a' \alpha' \beta'}})\nn
&= S({\rho_{\partial Y}^{a \alpha \beta}})-S({\rho_{\partial Y}^{a' \alpha' \beta'}})
\end{align}
One can deform $Y\setminus \partial Y$ to $Y$ smoothly, and thus the isomorphism theorem implies that the left-hand side of \eqref{eq:details} is twice $S({\rho_Y^{a \alpha \beta}})-S({\rho_Y^{a' \alpha' \beta'}})$. The right-hand side can also be simplified as the states on the three connected components factorize and give entropy-difference contributions as $2\log (d_a/d_{a'})$, $2\log (d_\alpha/d_{\alpha'})$ and $2\log (d_\beta/d_{\beta'})$ respectively. This leads to \eqref{eq:extreme-pt-entropy}.
\end{proof}

\begin{figure*}
\centering
\begin{overpic}[width=0.93\linewidth]{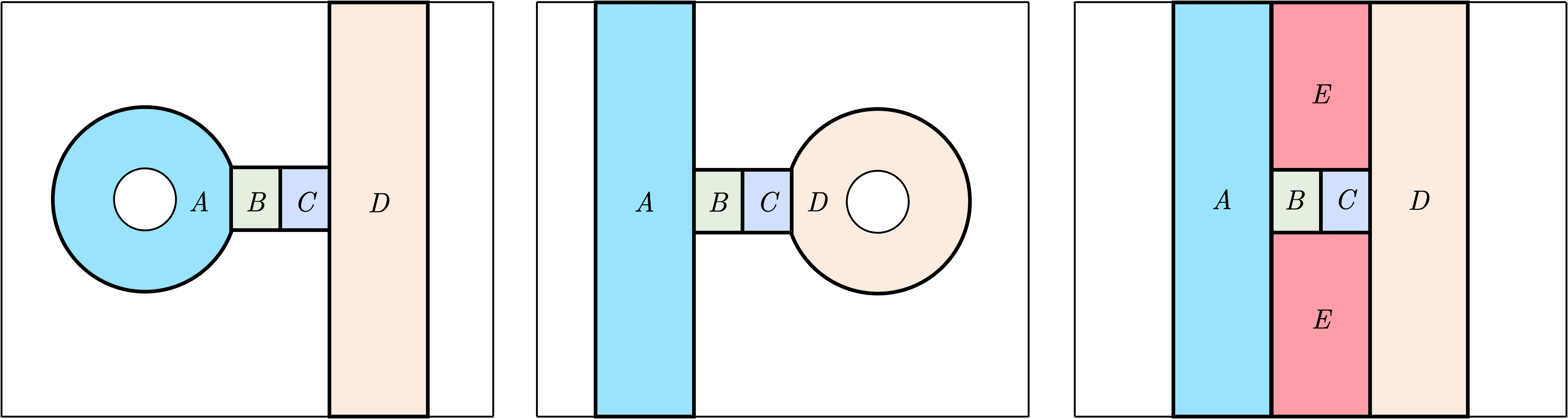}
\put(-2,27){(a)}
\put(32,27){(b)}
\put(66,27){(c)}
\end{overpic}
\caption{The subregion $ABCD$ used in the proof of Theorem~\ref{th:qdalpha}: (a) (i), (b) (ii), (c) (iii).}
\label{fig:qdalpha}
\end{figure*}

\begin{theorem}\label{th:qdalpha}
The quantum dimensions $d_a$ associated with $a \in \mathcal{C}$, together with $d_\alpha$ and $d_\beta$ associated with $\alpha,\beta \in \mathfrak{C}_y$, satisfy the following set of equations:
\begin{align}
{\rm (i)}~~~d_a d_\alpha &=\sum_{\beta \in \mathfrak{C}_y} N_{a\alpha}^{\beta} d_\beta,\nn
{\rm (ii)}~~~d_a d_\beta &=\sum_{\alpha \in \mathfrak{C}_y} N_{a \alpha}^{\beta} d_\alpha, \nn
{\rm (iii)}~~d_\alpha d_\beta &= \sum_{a \in {\cal C}} N_{a \alpha}^{\beta} d_a,
\end{align}
where $N_{a\alpha}^{\beta}$ is defined in Definition~\ref{def:fusion}, while $d_a$ and $d_\alpha$ are defined in Definitions~\ref{def:qd} and \ref{def:qdalpha}, respectively.
\begin{proof}
We establish these equalities one by one. The key nontrivial step is to compute the entropy difference in two different ways: first by decomposing the region into pieces, and second by using the fusion space of the full region. The consistency between the two computations leads to a ``bootstrap equation.'' In particular, we shall need the merging technique explained in Ref.~\onlinecite{shi_fusion_2020}. Such a method of deriving the bootstrap equation has become standard in the entanglement bootstrap approach, and has many applications~\cite{shi_entanglement_2021,huang_knots_2023,shi_seeing_2019}. The uniqueness of the merged state is guaranteed by axioms $A_0$ and $A_1$ in \eqref{eq:axioms}. We denote by $\rho_{ABCD}^{a\triangleright\!\triangleleft b}$ the unique state obtained by merging $\rho_{ABC}^a \in {\rm ext}\Sigma(ABC)$ and $\rho_{BCD}^b\in {\rm ext}\Sigma(BCD)$.

(i) Consider merging a local annulus $ABC$ with a noncontractible annulus
$BCD\in\mathtt{A}_y$, which yields the subregion $ABCD$ shown in
Fig.~\ref{fig:qdalpha}(a). We compare the entanglement entropies of the merged states $\rho_{ABCD}^{a\triangleright\!\triangleleft\alpha}$ and $\rho_{ABCD}^{1\triangleright\!\triangleleft\alpha}$, obtained by merging $\rho_{ABC}^a$ and $\rho_{BCD}^\alpha$, and $\rho_{ABC}^1$ and $\rho_{BCD}^\alpha$, respectively. Namely,
\begin{align}
\delta S
=
S(\rho_{ABCD}^{a\triangleright\!\triangleleft\alpha})
-
S(\rho_{ABCD}^{1\triangleright\!\triangleleft\alpha}),
\end{align}
where $a,1\in\mathcal C$ and $\alpha\in\mathfrak C_y$.

Using the entropy-difference preservation property of merging~\cite{shi_fusion_2020},
we obtain
\begin{align}
\delta S=S(\rho_{ABC}^{a})-S(\rho_{ABC}^{1})=2\log d_a .
\end{align}
Alternatively, the merged state is the maximum-entropy element in the convex hull
of the allowed fusion sectors. Then, by solving an entropy maximization problem to find the probabilities with which each extreme point appears, using Lemma~\ref{lemma:factorization}, we find
\begin{align}
\rho_{ABCD}^{a\triangleright\!\triangleleft\alpha}=\sum_{\beta \in \mathfrak C_y} \sum_{j=1}^{N_{a\alpha}^\beta} \frac{d_\beta}{\sum_{\beta'\in\mathfrak C_y}
N_{a\alpha}^{\beta'}d_{\beta'} }\rho_{ABCD}^{a \alpha \beta, j}.
\end{align}
Additionally, using Lemma~\ref{lemma:factorization}, we obtain
\begin{align}
\delta S=\log d_a+\log\left(
\sum_{\beta\in\mathfrak C_y}
N_{a\alpha}^{\beta}d_\beta
\right)-\log d_\alpha .
\end{align}
Comparing the two expressions for $\delta S$ gives
\begin{align}
d_a d_\alpha
=
\sum_{\beta\in\mathfrak C_y}
N_{a\alpha}^{\beta}d_\beta .
\end{align}

(ii) We now consider merging a noncontractible annulus $ABC\in\mathtt A_y$ with a local annulus $BCD$, yielding the subregion $ABCD$ shown in Fig.~\ref{fig:qdalpha}(b). The proof is essentially identical to that of case (i), and is therefore omitted.

(iii) Consider merging two noncontractible annuli $ABC, BCD \in \mathtt A_y$, which gives the subregion $ABCD$ shown in
Fig.~\ref{fig:qdalpha}(c). We compare the merged state $\rho^{\beta\triangleright\!\triangleleft\alpha}_{ABCD}$, obtained by merging $\rho_{ABC}^\beta$ and $\rho_{BCD}^\alpha$, with the state $\rho^\alpha_{ABCD}$ obtained by partially tracing out the disk $E$ from $ABCDE$ in the state $\rho_{ABCDE}^\alpha$:
\begin{align}
\delta S
=
S(\rho_{ABCD}^{\beta \triangleright\!\triangleleft \alpha})-S(\rho_{ABCD}^{\alpha}),
\end{align}
where $\alpha, \beta\in\mathfrak C_y$.

Using the fact that $I(A:D|BC)=0$ for $\rho_{ABCD}^{\beta \triangleright\!\triangleleft \alpha}$ and $I(A:D|BC)=2\log d_\alpha$ for $\rho_{ABCD}^{\alpha}$,
\begin{align}
&S(\rho_{ABCD}^{\beta \triangleright\!\triangleleft\alpha})-S(\rho_{ABCD}^{\alpha})\nn
&=S(\rho_{ABC}^\beta)+S(\rho_{BCD}^\alpha)-S(\rho_{BC}^1)\nn
&-\Bigl(S(\rho_{ABC}^\alpha)+S(\rho_{BCD}^\alpha)-S(\rho_{BC}^1)-2\log d_\alpha \Bigl)\nn
&=2\log d_\beta -2\log d_\alpha +2\log d_\alpha\nn
&=2 \log d_\beta.
\end{align}

Another way is to consider the following:
\begin{align}
&S(\rho_{ABCD}^{\beta \triangleright\!\triangleleft \alpha})-S(\rho_{ABCD}^{\alpha})\nn
&= [S(\rho_{ABCD}^{\beta \triangleright\!\triangleleft\alpha})-S(\rho_{ABCD}^{\alpha\triangleright\!\triangleleft\alpha})]+[S(\rho_{ABCD}^{\alpha \triangleright\!\triangleleft\alpha})-S(\rho_{ABCD}^{\alpha})],
\end{align}
where $\rho_{ABCD}^{\alpha\triangleright\!\triangleleft\alpha}$ is the state obtained by merging $\rho_{ABC}^\alpha$ and $\rho_{BCD}^\alpha$. Then, by solving an entropy maximization problem making use of the information convex set of the whole region $ABCD$, we find
\begin{align}
\rho_{ABCD}^{\beta \triangleright\!\triangleleft\alpha}&=\sum_{a \in \mathcal{C}} \sum_{j=1}^{N_{a\alpha}^\beta} \frac{d_a}{\sum_{a' \in {\cal C}} N_{a' \alpha}^{\beta} d_{a'}}\rho_{ABCD}^{a \alpha \beta, j}.
\end{align}
Moreover, we can obtain
\begin{align}
&S(\rho_{ABCD}^{\beta \triangleright\!\triangleleft\alpha})-S(\rho_{ABCD}^{\alpha})\nn
&= [S(\rho_{ABCD}^{\beta \triangleright\!\triangleleft\alpha})-S(\rho_{ABCD}^{\alpha \triangleright\!\triangleleft\alpha})]+[S(\rho_{ABCD}^{\alpha \triangleright\!\triangleleft \alpha})-S(\rho_{ABCD}^{\alpha})]\nn
&=[(\log d_\beta+\log \sum_{a \in \mathcal{C}} N_{a \alpha}^\beta d_a)-(\log d_\alpha+\log \sum_{a \in \mathcal{C}} N_{a \alpha}^\alpha d_a)]\nn
&+\log \sum_{a \in \mathcal{C}} N_{a \alpha}^\alpha d_a\nn
&=\log(d_\beta / d_\alpha)+\log \sum_{a \in \mathcal{C}} N_{a \alpha}^\beta d_a.
\end{align}
Comparing the two expressions for $\delta S$ gives
\begin{align}
d_\alpha d_\beta &= \sum_{a \in {\cal C}} N_{a \alpha}^{\beta} d_a.
\end{align}

\end{proof}
\end{theorem}

\noindent We also note that all associativity relations among the labels in $\mathcal{C}$ and $\mathfrak{C}_{y}$ are satisfied by the associativity theorem of Ref.~\onlinecite{huang_knots_2023}. The only relation that requires explicit verification is $(a \times b) \times \alpha = a \times (b \times \alpha),$
which is guaranteed by
\begin{align}
    \sum_{c \in \mathcal{C}}N_{a b}^{c} N_{c \alpha}^{\beta}= \sum_{\gamma \in \mathfrak{C}_y} N_{a \gamma}^{\beta} N_{b \alpha }^{\gamma}.
\end{align}

Now, considering the automorphism of local and noncontractible annuli, we find the following relations.

\begin{theorem}\label{th:NNNtau}
\begin{align}
&N_{a\alpha}^{\beta}=N_{\phi_x(a)\tau_x(\alpha)}^{\tau_x(\beta)}=N_{\phi_y^{n}(a)\alpha}^\beta,
\end{align}
where $a\in \mathcal{C}$, $\alpha, ~\beta \in \mathfrak{C}_y$, and $n \in \mathbb{Z}$.
\begin{proof}
This follows straightforwardly by topologically deforming $Y$ in Fig.~\ref{fig:fusion}(a) along the relevant paths back to the original region and identifying which of the annuli $B_1$, $B_2$, and $B_3$ acquire the automorphism associated with $\phi_x$, $\tau_x$, or $\phi_y$.
\end{proof}
\end{theorem}
\noindent Theorem~\ref{th:NNNtau} also allows us to address the following question: when both a transparent domain wall for $\mathcal{C}$ along $l_y$ and one for $\mathfrak{C}_y$ along the same direction are present, can they be assigned different positions? At first sight, this may appear possible, since the position of a transparent domain wall is a matter of convention. In general, however, such a choice is incompatible with the consistency of the fusion rules. If the transparent domain walls were placed at different positions, the label of the extreme point of the ICS on the local annulus could change under the automorphism, while that on the noncontractible annulus remained fixed. This would in turn imply the additional constraint
\begin{align}
N_{a\alpha}^{\beta}=N_{\phi_x(a)\alpha}^{\beta},
\end{align}
which is generically incompatible with Theorem~\ref{th:NNNtau}. We therefore conclude that all transparent domain walls along $l_y$ must be assigned the same position. Accordingly, we henceforth refer to transparent domain walls for $\mathcal{C}$ and $\mathfrak{C}_{x(y)}$ simply as transparent domain walls.

Additionally, we show that the total quantum dimension of $\mathfrak{C}_{y}$ is equal to that of $\mathcal{C}$, a fact that is useful for determining the quantum dimension of each $\alpha \in \mathfrak{C}_{y}$.
\begin{theorem}\label{th:tqd}
\begin{align}
\sum_{\alpha \in \mathfrak{C}_{y}} d_\alpha^2=\mathcal{D}^2.
\end{align}
\begin{proof}
For a fixed $\beta \in \mathfrak{C}_y$, identities (ii) and (iii) of Theorem~\ref{th:qdalpha} imply that
\begin{align}
\sum_{\alpha \in \mathfrak{C}_{y}} d_\alpha^2&=\sum_{\alpha \in \mathfrak{C}_{y}, ~a \in {\cal C}} d_\alpha d_\beta^{-1}N_{a \alpha }^{\beta} d_a ~~({\rm using ~(iii)})\nn
&=\sum_{\alpha \in \mathfrak{C}_{y}, ~a \in {\cal C}} N_{a \alpha}^{\beta}d_\alpha d_\beta^{-1} d_a\nn
&=\sum_{a \in {\cal C}} d_a^2=\mathcal{D}^2.~~({\rm using ~(ii)})
\end{align}

\end{proof}
\end{theorem}

\section{Topological Entanglement Entropy, symmetries, and Entanglement Asymmetry}
\label{sec:4}

We now specialize to the case $d_a = 1$ for all $a \in \mathcal{C}$ and derive additional properties specific to fixed points of Abelian topological phases. We begin by establishing several general results that will be used throughout this section and by clarifying their implications for the TEE. We then define the MESs and explain their relation to the extreme points of the ICSs on noncontractible annuli. We also clarify the properties of symmetries. Finally, building on these results, we interpret the entanglement asymmetry from the perspective of the ICSs.

\subsection{Topological entanglement entropy}

We begin by proving several propositions specific to Abelian topological phases that will be useful for understanding the TEE.

\begin{proposition}\label{pro:N=1}
Let $a$ be an Abelian anyon. For the fusion rules
\begin{equation}
a\times\alpha = \sum_{\beta\in\mathfrak{C}_y}
N_{a\alpha}^{\beta}\beta,
\end{equation}
where $\alpha\in\mathfrak{C}_y$, we have
\begin{equation}
\sum_{\beta\in\mathfrak{C}_y} N_{a\alpha}^{\beta}=1.
\end{equation}
Equivalently, fusion with anyon $a$ permutes the labels in $\mathfrak{C}_y$.
\begin{proof}
Because $a$ is Abelian, it is invertible under fusion, with
inverse $\bar a$:
\begin{equation}
\bar a\times a
=
a\times\bar a
=
1.
\end{equation}
Consequently, fusion with $a$ is reversible, with inverse given by fusion with $\bar a$.

Suppose, for contradiction, that $a\times\alpha$ is not a single label in $\mathfrak{C}_y$. Then its fusion decomposition contains either at least two distinct labels with nonzero multiplicities, or a single label with multiplicity greater than one. Fusing this decomposition with $\bar a$ and using associativity gives
\begin{align}
\alpha&= (\bar a\times a)\times\alpha =
\bar a\times(a\times\alpha)\nn
&= \sum_{\beta\in\mathfrak{C}_y}
N_{a\alpha}^{\beta} \left(\bar a\times\beta\right).
\end{align}
Every $\bar a\times\beta$ appearing on the
right-hand side is nonzero, since fusion with $a$ recovers $\beta$. Moreover, all fusion multiplicities are nonnegative integers, so distinct contributions cannot cancel. The right-hand side would therefore yield a nontrivial decomposition of the single label $\alpha$, leading to a contradiction.

It follows that $a\times\alpha$ is itself a single label. Therefore, there exists a unique $\beta\in\mathfrak{C}_y$ such that
\begin{equation}
a\times\alpha=\beta.
\end{equation}
Equivalently,
\begin{equation}
N_{a\alpha}^{\gamma}
=\delta_{\gamma,\beta},
\end{equation}
where $\gamma\in\mathfrak{C}_y$. Hence,
\begin{equation}
\sum_{\gamma\in\mathfrak{C}_y}
N_{a\alpha}^{\gamma} =1.
\end{equation}
Since fusion with $\bar a$ is the inverse operation, fusion with $a$ defines a permutation of
$\mathfrak{C}_y$.
\end{proof}
\end{proposition}

\begin{proposition}\label{pro:aalphabeta}
For $\alpha, \beta \in \mathfrak{C}_y$, we can find an $a \in {\cal C}$ that satisfies $a \times \alpha=\beta$.
\begin{proof}
Consider merging two noncontractible annuli labeled by $\beta$ and $\alpha$. Since the merged state exists as an element of the information convex set, it must be a convex combination of $\{\rho_X^{a\alpha\beta}\}$ for choices of $a \in \mathcal{C}$. Any choice of $a$ in this convex combination must have $N_{a\alpha}^\beta$ nonzero. Therefore, we can find an $a$ that satisfies $a \times \alpha =\beta$. Precisely, the probability that a given $a$ appears is $d_a N_{a\alpha}^\beta/(d_\alpha d_\beta)$.

\end{proof}
\end{proposition}

\begin{corollary} \label{cor: abelian d}
$d_{\alpha}=d_{\alpha'}$ where $\alpha, \alpha' \in \mathfrak{C}_y$.
\end{corollary}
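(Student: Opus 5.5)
The plan is to combine Proposition~\ref{pro:aalphabeta}, which links any two labels by an anyon, with the fusion rule for quantum dimensions in Theorem~\ref{th:qdalpha}(i). Throughout, the Abelian specialization of this section gives $d_a=1$ for all $a\in\mathcal{C}$.

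Fix $\alpha,\alpha'\in\mathfrak{C}_y$. First, Proposition~\ref{pro:aalphabeta} supplies an $a\in\mathcal{C}$ with $N_{a\alpha}^{\alpha'}\neq 0$. Next, Proposition~\ref{pro:N=1} shows that for Abelian $a$ the fusion is a single label: $N_{a\alpha}^{\gamma}=\delta_{\gamma,\alpha'}$. Substituting this into identity (i) of Theorem~\ref{th:qdalpha} gives
\begin{equation}
d_a d_\alpha=\sum_{\gamma\in\mathfrak{C}_y}N_{a\alpha}^{\gamma}d_\gamma=d_{\alpha'} .
\end{equation}
Since $d_a=1$, this yields $d_\alpha=d_{\alpha'}$.

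As an independent check, Lemma~\ref{lemma:factorization} and Proposition~\ref{pro:alphabeta} would also work. One would compare entropies of the extreme points in $\Sigma^{\alpha'}_{a\alpha}(Y)$ and $\Sigma^{\alpha}_{1\alpha}(Y)$, but the route above is more direct.

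The only delicate step is justifying that the label $a$ from Proposition~\ref{pro:aalphabeta} really gives $N_{a\alpha}^{\alpha'}=1$ rather than merely nonzero. This is exactly what Proposition~\ref{pro:N=1} settles, using the Abelian invertibility $a\times\bar a=1$. Beyond that no obstacle is expected.

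As a byproduct, combining the result with Theorem~\ref{th:tqd} gives $|\mathfrak{C}_y|\,d_\alpha^2=\mathcal{D}^2=|\mathcal{C}|$. Hence $d_\alpha=\sqrt{|\mathcal{C}|/|\mathfrak{C}_y|}$, the formula quoted in Sec.~\ref{sec:2}.
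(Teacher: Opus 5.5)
Your proof is correct and follows the paper's intended argument. The corollary sits directly after Proposition~\ref{pro:aalphabeta} and is obtained by combining it with Proposition~\ref{pro:N=1} and identity (i) of Theorem~\ref{th:qdalpha} at $d_a=1$, exactly as you do. Incidentally, Proposition~\ref{pro:N=1} is not strictly needed: $N_{a\alpha}^{\alpha'}\ge 1$ together with positivity of all $d_\gamma$ already gives $d_\alpha\ge d_{\alpha'}$ from (i), and exchanging the roles of $\alpha$ and $\alpha'$ gives the reverse inequality.
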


We now introduce the equivalence classes of anyons, defined through an equivalence relation among them, and establish several related propositions.

\begin{definition}\label{def:5.1}
We define the set of equivalence classes of anyons associated with $\mathfrak{C}_{x(y)}$, denoted by $E_{x(y)}$, as
\begin{align*}
E_{x(y)}=\left\{\, a/\overset{x(y)}{\sim} \;\middle|\; a \in \mathcal{C} \right\},
\end{align*}
where the equivalence relation $\overset{x(y)}{\sim}$ is defined as follows: $a \overset{x(y)}{\sim} b$ if
\[
a \times \alpha = b \times \alpha
\]
for all $\alpha \in \mathfrak{C}_{x(y)}$.
\end{definition}
\noindent One can easily check with associativity that $a \overset{x(y)}{\sim} b$ implies $c \times a \overset{x(y)}{\sim} c \times b$ for any $c \in \mathcal{C}$. Moreover, we can prove the following proposition.

\begin{proposition}\label{pro:sim}
If
\begin{align}\label{eq:aalphabbeta}
    a \times \alpha =b \times \alpha
\end{align}
for a fixed $\alpha \in \mathfrak{C}_{x(y)}$, then $a \overset{x(y)}{\sim} b$.
\begin{proof}
It remains to show that $a \times \beta = b \times \beta$ for any $\beta \in \mathfrak{C}_{x(y)}$. By Proposition~\ref{pro:aalphabeta}, any $\beta \in \mathfrak{C}_{x(y)}$ can be written as $\beta=c\times\alpha$ for some $c\in\mathcal{C}$. \eqref{eq:aalphabbeta} gives
\begin{align}
a\times c\times\alpha=b\times c\times\alpha,
\end{align}
and therefore $a\times\beta=b\times\beta$ for any $\beta \in \mathfrak{C}_{x(y)}$. Thus, $a \overset{x(y)}{\sim} b$.

\end{proof}
\end{proposition}

\noindent We can further derive the following propositions.

\begin{proposition}\label{pro:equiv1}
The number of anyons in $\mathcal{C}$ that are equivalent to $1$ under the relation $\overset{x(y)}{\sim}$ is $|\mathcal{C}|/{|\mathcal{C}^{\phi_{x(y)}}|}$.
\begin{proof}
From Theorem~\ref{th:qdalpha} and Corollary~\ref{cor: abelian d}, together with the fact that $d_a=1$ for all $a \in \mathcal{C}$, we obtain
\begin{equation}
\sum_{a \in \mathcal{C}} N_{a \alpha}^\beta = d_\alpha d_\beta = d_\alpha^2.\label{eq:C1}
\end{equation}
This means that the number of anyons mapping $\alpha$ to $\beta$ is independent of $\beta$ and equals $d_\alpha^2$. In particular, the number of anyons that map $\alpha$ to itself is also $d_\alpha^2$. Using Theorems~\ref{th:cc} and \ref{th:tqd}, together with Corollary~\ref{cor: abelian d}, we obtain
\begin{equation}
\sum_{\alpha \in \mathfrak{C}_{x(y)}} d_\alpha^2
=
|\mathfrak{C}_{x(y)}|d_\alpha^2
=
|\mathcal{C}^{\phi_{x(y)}}|d_\alpha^2
=
\mathcal{D}^2
=
|\mathcal{C}|. 
\end{equation}
Therefore, we find
$d_\alpha^2=|\mathcal{C}|/{|\mathcal{C}^{\phi_{x(y)}}|}.
$

\end{proof}
\end{proposition}

\begin{proposition}
$|E_{x(y)}|=|\mathcal{C}^{\phi_{x(y)}}|$.
\begin{proof}
Using Proposition~\ref{pro:equiv1}, we conclude that each equivalence class of anyons associated with $\mathfrak{C}_{x(y)}$ contains $|\mathcal{C}|/|\mathcal{C}^{\phi_{x(y)}}|$ distinct anyons. Therefore, the set $E_{x(y)}$ has cardinality
\begin{align}
|E_{x(y)}|=|\mathcal{C}^{\phi_{x(y)}}|.
\end{align}
\end{proof}
\end{proposition}

\begin{lemma}\label{lem:equiv2}
For $b \in \mathcal{C}$, one has $1 \overset{x(y)}{\sim} b$ if and only if
\begin{align}
b=\phi_{x(y)}^{n}(a)\times \bar{a}
\end{align}
for some $a \in \mathcal{C}$ and $n \in \mathbb{Z}$.
\begin{proof}
We first prove the ``if'' direction. Suppose that
\[
b=\phi_{x(y)}^{n}(a)\times \bar{a}.
\]
Theorem~\ref{th:NNNtau} implies
\[
N_{a\alpha}^{\beta}
=
N_{\phi_{x(y)}^{n}(a)\alpha}^{\beta},
\]
and hence
\[
a \overset{x(y)}{\sim} \phi_{x(y)}^{n}(a).
\]
Multiplying both sides by $\bar{a}$ gives
\begin{align}
1 \overset{x(y)}{\sim} \phi_{x(y)}^{n}(a)\times \bar{a}=b.
\end{align}

We next prove the converse by counting the number of distinct elements of the form $\phi_{x(y)}^{n}(a)\times \bar{a}$ for all $a\in\mathcal{C}$ and $n\in\mathbb{Z}$. By Proposition~\ref{pro:phi}, we may restrict to $n\geq 0$ without loss of generality. Define
\begin{align}
S
&=
\left\{
\phi_{x(y)}^{n}(a)\times \bar{a}
\;\middle|\;
a\in\mathcal{C},\ n\in\mathbb{Z}
\right\},
\nonumber\\
S_1
&=
\left\{
\phi_{x(y)}(a)\times \bar{a}
\;\middle|\;
a\in\mathcal{C}
\right\}.
\end{align}
By definition, $S_1\subseteq S$. Conversely, for $n=0$, one has $1\in S$. Moreover, $1\in S_1$ as well, since $\phi_{x(y)}(1)\times \bar{1}=1$. For any $n>0$, define
\begin{align}
c
=
a \times \phi_{x(y)}(a) \times \phi_{x(y)}^2(a)
\times \cdots \times \phi_{x(y)}^{n-1}(a).
\end{align}
Then
\begin{align}
\phi_{x(y)}(c)
=
\phi_{x(y)}(a) \times \phi_{x(y)}^2(a)
\times \cdots \times \phi_{x(y)}^{n}(a),
\end{align}
and therefore
\begin{align}
\phi_{x(y)}(c)\times \bar{c}
=
\phi_{x(y)}^n(a)\times \bar{a}.
\end{align}
Thus every element of $S$ is contained in $S_1$, and hence $S=S_1$. Now define the homomorphism
\begin{align}
\eta:\mathcal{C}\to\mathcal{C},
\qquad
\eta(a)=\phi_{x(y)}(a)\times \bar{a}.
\end{align}
Then
\begin{equation}
S_1=\operatorname{Im}(\eta).
\end{equation}
The kernel of $\eta$ is
\begin{align}
\ker(\eta)
&=
\left\{
a\in\mathcal{C}
\;\middle|\;
\phi_{x(y)}(a)\times \bar{a}=1
\right\}
\nonumber\\
&=
\mathcal{C}^{\phi_{x(y)}}.
\end{align}
By the first isomorphism theorem, we obtain
\begin{equation}
|\operatorname{Im}(\eta)|
=
\frac{|\mathcal{C}|}{|\ker(\eta)|}
=
\frac{|\mathcal{C}|}{|\mathcal{C}^{\phi_{x(y)}}|}.
\end{equation}
Finally, by Proposition~\ref{pro:equiv1}, the equivalence class of $1$ under $\overset{x(y)}{\sim}$ has the same cardinality. Therefore, if $1 \overset{x(y)}{\sim} b$, then $b\in S$, i.e.,
\[
b=\phi_{x(y)}^{n}(a)\times \bar{a}
\]
for some $a\in\mathcal{C}$ and $n\in\mathbb{Z}$. This completes the proof.
\end{proof}
\end{lemma}

This lemma will be useful when we discuss symmetries in Sec.~\ref{sec:4c}.

\begin{porism}\label{porism:equiv2}
For $b \in \mathcal{C}$, one has $1 \overset{x(y)}{\sim} b$ if and only if
\begin{align}
b=\phi_{x(y)}(a)\times \bar{a}
\end{align}
for some $a \in \mathcal{C}$.
\end{porism}

We now relate the tunneling process to the fusion rules and prove a property of the entanglement entropy of the extreme points of the ICS.

\begin{proposition}\label{pro:fusion}
Consider a local annulus $X$ associated with an extreme point labeled by $a \in \mathcal{C}$, and a noncontractible annulus $X' \in \mathtt{A}_y$ associated with an extreme point labeled by $\alpha \in \mathfrak{C}_y$. If the fusion rule is of the form $a \times \alpha = \beta$, then tunneling $X$ along $l_x$ across $X'$ transforms $\alpha$ into $\beta$; see Fig.~\ref{fig:tunneling}.

\begin{figure}
\centering
\includegraphics[width=0.95\linewidth]{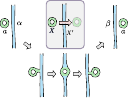}
\caption{Tunneling a contractible annulus through a noncontractible annulus by applying a sequence of quantum channels, as in Section~3.1 of Ref.~\onlinecite{shi_immersed_2024}.}
\label{fig:tunneling}
\end{figure}

\begin{proof}
The tunneling process is as follows. First, the local annulus labeled by $a$ is merged with the noncontractible annulus labeled by $\alpha$, resulting in the configuration shown in Fig.~\ref{fig:fusion}(a). The entire region can then be continuously deformed, as illustrated in the bottom panels of Fig.~\ref{fig:tunneling}. We can then trace out the local annulus. By the definition of the fusion rules and the relation $a \times \alpha = \beta$, the remaining noncontractible annulus must be labeled by $\beta$.
\end{proof}
\end{proposition}

\begin{lemma} \label{lem:sameQD}
The extreme points of $\Sigma(X)$ for $X \in \mathtt{A}_y$, labeled by $\alpha\in\mathfrak{C}_y$, have the same entanglement entropy $S(\rho^{\alpha}_X)$ on the subregion $X$.
\end{lemma}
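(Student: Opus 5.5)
The plan is to reduce the claim to the entropy-difference formula of Proposition~\ref{pro:alphabeta} and the equality of quantum dimensions in Corollary~\ref{cor: abelian d}. For any two labels $\alpha,\beta\in\mathfrak{C}_y$ and $X\in\mathtt{A}_y$, Proposition~\ref{pro:alphabeta} gives
\begin{equation*}
S(\rho_X^\alpha)-S(\rho_X^\beta)=2\log(d_\alpha/d_\beta).
\end{equation*}
It therefore suffices to show that $d_\alpha=d_\beta$.

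We are in the Abelian setting, where $d_a=1$ for all $a\in\mathcal{C}$. The equality $d_\alpha=d_\beta$ is exactly Corollary~\ref{cor: abelian d}, and I will derive it explicitly so the argument is self-contained. By Proposition~\ref{pro:aalphabeta}, there exists $a\in\mathcal{C}$ with $N_{a\alpha}^{\beta}\neq 0$. By Proposition~\ref{pro:N=1}, fusion with an Abelian $a$ sends $\alpha$ to a single label, so $a\times\alpha=\beta$ with $N_{a\alpha}^{\gamma}=\delta_{\gamma\beta}$. Substituting into identity (i) of Theorem~\ref{th:qdalpha} gives
\begin{equation*}
d_\alpha=d_a d_\alpha=\sum_{\gamma}N_{a\alpha}^{\gamma}d_\gamma=d_\beta.
\end{equation*}
Combining this with the displayed entropy difference yields $S(\rho_X^\alpha)=S(\rho_X^\beta)$ for all $\alpha,\beta$, which is the claim.

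The only point needing care is that Proposition~\ref{pro:alphabeta} compares extreme points on the same region $X$. It also relies on the isomorphism theorem, applied along deformations that stay inside a noncontractible annulus and so involve no $\tau_x$ relabeling. Both conditions hold here, since the lemma concerns a fixed $X$. I therefore expect no real obstacle: the substance lies in the earlier results, and this lemma is essentially a corollary of them.
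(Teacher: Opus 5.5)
Your proof is correct and matches the paper's argument, which likewise combines the entropy-difference formula of Proposition~\ref{pro:alphabeta} with the equality $d_\alpha=d_\beta$ from Corollary~\ref{cor: abelian d}. Your explicit derivation of that corollary is a valid filling-in of a step the paper states without proof; it uses Propositions~\ref{pro:aalphabeta} and~\ref{pro:N=1}, identity (i) of Theorem~\ref{th:qdalpha}, and $d_a=1$.
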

\begin{proof}
From Corollary \ref{cor: abelian d} and Proposition~\ref{pro:alphabeta}, we find 
\begin{equation}
S(\rho^\alpha_X)=S(\rho_X^{\beta}),
\end{equation}
where $\alpha, \beta \in \mathfrak{C}_y$.
\end{proof}

\begin{figure}[!t]
\centering
\begin{overpic}[width=0.75\linewidth]{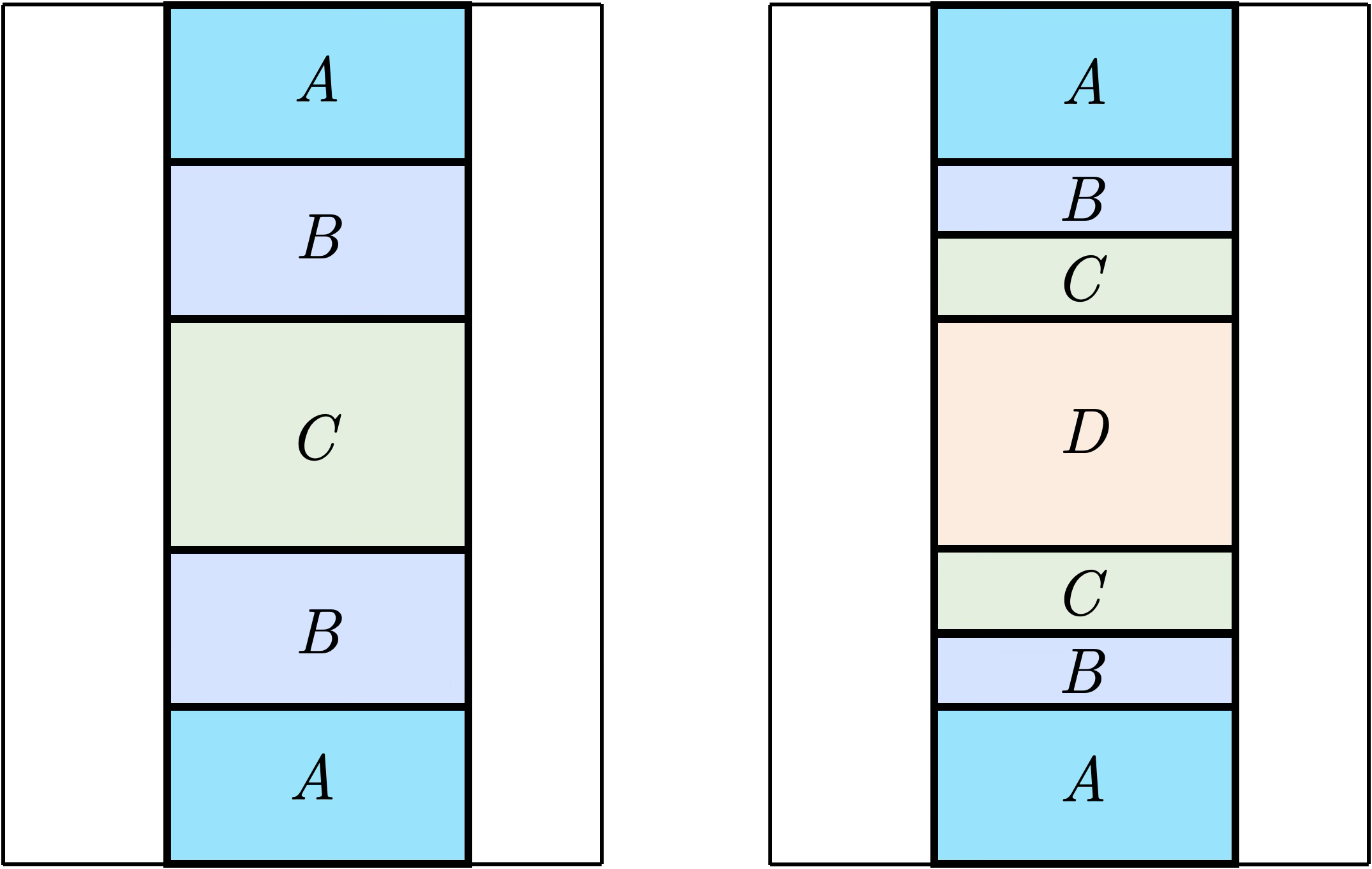}
\put(-6.5,63){(a)}
\put(49,63){(b)}
\end{overpic}
\caption{(a) Partition of $X=ABC \in \mathtt{A}_y$ used in Definition~\ref{def:TEE} and Theorem~\ref{th:TEE}. (b) Partition of $X=ABCD \in \mathtt{A}_y$ used in Theorem~\ref{th:TEE}.}
\label{fig:TEE}
\end{figure}

We now introduce the TEE for a noncontractible annulus.

\begin{definition}[Topological entanglement entropy]\label{def:TEE}
For $X \in \mathtt{A}_{y}$, the topological entanglement entropy of $\rho_X$ is defined as
\begin{align*}
\gamma_{{\rm LW},y}&\equiv \frac{1}{2} I(A:C|B)_{\rho_X}\nn
&=\frac{1}{2}\Bigl( S(\rho_{AB})+S(\rho_{BC})-S(\rho_{B})-S(\rho_{ABC})\Bigl),\label{eq:TEE}
\end{align*}
where the corresponding partition is illustrated in Fig.~\ref{fig:TEE}(a).
\end{definition}
\noindent Note that Definition~\ref{def:TEE} is analogous to the definition of TEE for a local annulus introduced in Ref.~\onlinecite{levin_tee_2006}. We can then establish the following theorem, which constitutes one of the main results of this work.

\begin{theorem}\label{th:TEE}
Consider the partition of the subregion $X=ABC$ in Fig.~\ref{fig:TEE}(a). We can show that
\begin{equation}
    \gamma_{\mathrm{LW},y}^{\max} \equiv \frac{1}{2}I(A:C|B)_{\rho_X^\alpha} = \log \frac{\mathcal{D}}{d_{\alpha}},
\label{eq:gamma-MES} \end{equation}
where $\alpha \in \mathfrak{C}_y$, and $\rho_X^\alpha$ denotes the extreme point of $\Sigma(X)$ labeled by $\alpha$.
\end{theorem}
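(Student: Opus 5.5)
We will evaluate $I(A:C|B)$ on an extreme point $\rho_X^\alpha$ by the same merging-and-bootstrap method used in Theorem~\ref{th:qdalpha}(iii). The partition in Fig.~\ref{fig:TEE}(a) is taken to be the Levin--Wen-type partition of the noncontractible annulus: $B$ is a region winding around, and $A$, $C$ are its complements cut so that $AB$ and $BC$ are disk-like, or at least locally deformable, pieces. We write
\begin{equation}
I(A:C|B)_{\rho^\alpha_X}=S(\rho^\alpha_{AB})+S(\rho^\alpha_{BC})-S(\rho^\alpha_B)-S(\rho^\alpha_{ABC})
\end{equation}
and compare each term to the corresponding entropy of a reference (vacuum-like) state, so that the area-law pieces cancel by the isomorphism theorem.

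The first step is to introduce the finer partition $X=ABCD$ of Fig.~\ref{fig:TEE}(b). Here $BC$ and $CD$ are two overlapping noncontractible annuli in $\mathtt{A}_y$ (or a local annulus and a noncontractible piece), arranged so that $I(\cdot)$ can be evaluated using the merged state. We compare two states on the same region:
\begin{itemize}
\item $\rho^\alpha_X$ itself, and
\item the maximum-entropy merged state $\rho^{\alpha\triangleright\!\triangleleft\alpha}$, which has vanishing conditional mutual information across the merge.
\end{itemize}
The entropy-difference preservation of merging gives one expression for $S(\rho^{\alpha\triangleright\!\triangleleft\alpha})-S(\rho^\alpha)$. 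The explicit decomposition of the merged state over fusion sectors $\rho^{a\alpha\alpha,j}$ gives a second expression. That decomposition uses weights $d_a N^\alpha_{a\alpha}/(d_\alpha^2)$, from Proposition~\ref{pro:aalphabeta} and Theorem~\ref{th:qdalpha}(iii), together with the sector entropies from Lemma~\ref{lemma:factorization}. As in the proof of Theorem~\ref{th:qdalpha}(iii), this gives a term $\log\sum_a N^\alpha_{a\alpha}d_a = 2\log d_\alpha$, while the sum over all charges absorbed at the boundary of the disk-like pieces contributes $\log\sum_a d_a^2=2\log\mathcal{D}$. Combining these, we expect to obtain $I(A:C|B)_{\rho^\alpha_X}=2\log\mathcal{D}-2\log d_\alpha$, which is the claimed value.

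The second step is to show that this value is the maximum over $\Sigma(X)$. By Theorem~\ref{th:simplex}, any $\rho_X=\sum_\alpha p_\alpha\rho^\alpha_X$ with orthogonal extreme points. The four regions $AB$, $BC$, $B$ and $ABC$ reduce differently: the pieces that retain the noncontractible cycle, namely $ABC$ and possibly $B$, keep the orthogonality of the sectors, whereas $AB$ and $BC$ are disk-like and give identical reductions. Hence $I(A:C|B)_{\rho_X}=\sum_\alpha p_\alpha I(A:C|B)_{\rho^\alpha_X}-H(\{p_\alpha\})$ up to the correct sign, which is at most the extreme-point value. By Lemma~\ref{lem:sameQD} and Corollary~\ref{cor: abelian d}, all extreme points give the same value, so the maximum is attained at every $\rho^\alpha_X$. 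As a consistency check, Proposition~\ref{pro:equiv1} gives $\mathcal{D}/d_\alpha=\sqrt{|\mathcal{C}^{\phi_y}|}$.

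The main obstacle is the first step: correctly identifying which subregions are disk-like versus noncontractible in Fig.~\ref{fig:TEE}, and carrying out the merging so that the $\log\mathcal{D}^2$ arises from summing over local-annulus sectors. If the direct merge is awkward, we would try the alternative route of reducing to the local-annulus Levin--Wen result. That route writes $I(A:C|B)_{\rho^\alpha}=I(A:C|B)_{\sigma}-2\log d_\alpha$, where $\sigma$ is a reference-type state with value $2\log\mathcal{D}$, and uses Proposition~\ref{pro:alphabeta} to convert the entropy shifts on $ABC$ and $B$ into $\log d_\alpha$ contributions.
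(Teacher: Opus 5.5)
Your Step~1 does not establish the value, and the merge it relies on is not the relevant one. Merging two noncontractible annuli, as in Theorem~\ref{th:qdalpha}(iii), produces a state on the punctured annulus of Fig.~\ref{fig:qdalpha}(c). Its sector decomposition over $\rho^{a\alpha\alpha,j}$ and its Markov condition $I(A:D|BC)=0$ concern that region, not the Levin--Wen combination on $X$, and that bootstrap only returns the identity $d_\alpha^2=\sum_a N_{a\alpha}^{\alpha}d_a$. The other ingredient, a term $\log\sum_a d_a^2$ ``from charges absorbed at the disk-like pieces,'' is asserted rather than derived. So $2\log\mathcal D-2\log d_\alpha$ is only expected, not proved.

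The fallback route also fails. A local-annulus reference state cannot be related to states on $X$ by the isomorphism theorem. Moreover, $\Sigma(X)$ need not contain any state with $I(A:C|B)=2\log\mathcal D$: for Wen's plaquette model with a wall along $l_x$, every extreme point on $X\in\mathtt A_y$ has $d_\alpha=\sqrt2$, and the maximum over $\Sigma(X)$ is $\log 2<2\log 2$. You also have the geometry backwards. In Fig.~\ref{fig:TEE}(a), $B$ does not wind around; $AB$ and $BC$ are disks, and $B$ is a disjoint union of two disks.

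The missing idea is a zero-CMI state inside $\Sigma(X)$. Because $AB$, $BC$, and $B$ are disks, every element of $\Sigma(X)$ has the same marginals on them, so only $S(\rho_{ABC})$ varies. This is exactly why your Step~2 identity $I(A:C|B)_{\rho_X}=\sum_\alpha p_\alpha I(A:C|B)_{\rho_X^\alpha}-H(\{p_\alpha\})$ is correct; $B$ carries no sector information.

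Now merge the two disk-like pieces $ABC$ and $BCD$ of Fig.~\ref{fig:TEE}(b), which overlap on two disks. This yields $\tilde\rho_X\in\Sigma(X)$ with $I(A:C|B)_{\tilde\rho_X}=0$, and by strong subadditivity it is the maximum-entropy element of $\Sigma(X)$. Hence
$I(A:C|B)_{\rho_X^\alpha}=S(\tilde\rho_X)-S(\rho_X^\alpha)=\max_p\bigl[H(p)+\sum_\beta p_\beta\,2\log(d_\beta/d_\alpha)\bigr]=\log\bigl(\sum_{\beta\in\mathfrak C_y}d_\beta^2/d_\alpha^2\bigr)$,
using Proposition~\ref{pro:alphabeta}. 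The total quantum dimension $\mathcal D$ enters only through Theorem~\ref{th:tqd}, $\sum_{\beta}d_\beta^2=\mathcal D^2$, which your proposal never invokes. This is the paper's argument; in the Abelian case the weights are uniform by Lemma~\ref{lem:sameQD}.
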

\begin{proof}
We consider the merging of $\rho_{ABC}^a \in {\rm ext}\Sigma(ABC)$ and $\rho_{BCD}^b\in {\rm ext}\Sigma(BCD)$ for the regions shown in Fig.~\ref{fig:TEE}(b). The unique merged state $\rho_{ABCD}^{a\triangleright\!\triangleleft b}$ is defined on the noncontractible annulus $X$ and will be denoted by $\tilde{\rho}_X$ for simplicity. By the property of merging, $\tilde{\rho}_X$ lies at the center of $\Sigma(X)$. Dividing $X$ as in Fig.~\ref{fig:TEE} gives
\begin{equation}
I(A:C|B)_{\tilde{\rho}_X} = 0,\label{eq:IACB}
\end{equation}
due to the properties of the merged state. Because of the simplex structure of $\Sigma(X)$, the state $\tilde{\rho}_X$ can be expressed as a convex combination of the extreme points:
\begin{align}
\tilde{\rho}_X = \sum_{\alpha \in \mathfrak{C}_y} p_\alpha \rho_X^{\alpha},
\end{align}
where $\{p_\alpha \}$ is a probability distribution. Since the extreme points are orthogonal and have the same entanglement entropy (by Lemma \ref{lem:sameQD}), 
\begin{align}
S({\tilde{\rho}_X}) =- \sum_{\alpha \in \mathfrak{C}_y}p_\alpha \log p_\alpha +S(\rho_X^{\alpha}),\label{eq:EEarb}
\end{align}
and by the standard structure of the central point of the information convex set, the central state has weights $p_\alpha=\frac{d_\alpha^2}{\mathcal{D}^2}$. Therefore, $\tilde{\rho}_X$ is expressed as
\begin{align}
\tilde{\rho}_X = \sum_{\alpha \in \mathfrak{C}_y}  \frac{d_{\alpha}^2}{D^2} \rho_X^{\alpha}.
\end{align}
Since $d_{\alpha}=d_{\alpha'}$ by Corollary~\ref{cor: abelian d}, one finds that $\tilde\rho_X$ is a uniform mixture of the extreme points of the ICS, and hence lies at the center of $\Sigma(X)$. As a consequence,
\begin{equation}
S(\tilde{\rho}_X) - S(\rho_X^{\alpha})= 2\log \frac{D}{d_{\alpha}}.
\end{equation}
By the isomorphism theorem, $\tilde{\rho}_X$ and $\rho_X^\alpha$ have identical entropies on $AB$, $BC$, and $B$. Therefore, the difference between $I(A:C|B)_{\rho_X^\alpha}$ and $I(A:C|B)_{\tilde{\rho}_X}$ comes entirely from the $S(ABC)$ term. It then follows from \eqref{eq:IACB} that
\begin{equation}
I(A:C|B)_{\rho_X^\alpha}=2\log \frac{\mathcal{D}}{d_{\alpha}} .
\end{equation}
Thus, \eqref{eq:gamma-MES} follows.
\end{proof}

\noindent Because the density matrices corresponding to distinct extreme points have mutually orthogonal supports, while $S(\rho_{AB})$, $S(\rho_{BC})$, and $S(\rho_{B})$ are fixed throughout the information convex set, $I(A:C|B)_{\rho_X}$ in Fig.~\ref{fig:TEE}(a) attains its maximum at an extreme point. This observation motivates the definition of $\frac{1}{2}I(A:C|B)_{\rho_X^\alpha}$ as $\gamma_{\mathrm{LW},y}^{\max}$. We emphasize that $\gamma_{{\rm LW},y}^{\rm max}$ captures only the contributions from transparent domain walls intersecting the subregion $X \in \mathtt{A}_y$, because the quantum dimensions associated with the extreme points of the ICS on $X$ encode only the effects of such transparent domain walls.

A similar result was obtained in previous work~\cite{kim_unveiling_2025}, where the lattice model was interpreted as realizing a translation symmetry-enriched topological phase, and the TEE was understood to reflect the quantum dimension of the corresponding translation symmetry defect. In contrast, our interpretation is in terms of transparent domain walls and does not invoke any translation symmetry.

\subsection{Minimum entropy states}
\label{sec:4b}

We now establish the connection between the information extracted from the ICS on the torus $\mathtt{R}$ and that obtained from the ICS on a noncontractible annulus $X$. To this end, we first note that the ICS on $\mathtt{R}$, denoted by $\Sigma(\mathtt{R})$, contains infinitely many extreme points when $N(\mathtt{R})>1$, where $N(\mathtt{R})$ denotes the maximal number of mutually orthogonal extreme points of $\Sigma(\mathtt{R})$. For example, when a toric code ground state is chosen as the reference state, ${\rm ext}\Sigma(\mathtt{R})$ consists of the pure state density matrices associated with all normalized superpositions of the toric code ground states and therefore contains infinitely many elements. Moreover, $N(\mathtt{R})$ is equal to the GSD of the parent Hamiltonian associated with the reference state~\cite{yang_topological_2025}.

Among the extreme points of the ICS on $\mathtt{R}$, we define the MESs associated with $X \in \mathtt{A}_{x(y)}$ as follows:
\begin{definition}[Minimum entropy states]\label{def:MES}
MESs associated with $X \in \mathtt{A}_{x(y)}$ are extreme points of $\Sigma(\mathtt{R})$ whose reduced density matrices on the subregion $X$, obtained by tracing out its complement, are extreme points of the ICS $\Sigma(X)$.
\end{definition}

\noindent The MESs then satisfy the following proposition, which explains both why we refer to them as \textit{minimum entropy states} and how they are equivalent to the original definition introduced in Ref.~\onlinecite{zhang_quasiparticle_2012}.

\begin{proposition} The MES associated with $X \in \mathtt{A}_{x(y)}$ gives the minimum entanglement entropy for the subregion $X\in \mathtt{A}_{x(y)}$.
\begin{proof}
By definition, taking the partial trace over the complement of a subregion $X \in \mathtt{A}_{x(y)}$ yields a density matrix that is an extreme point of $\Sigma(X)$. Then, using Lemma~\ref{lem:sameQD} together with the fact that the entanglement entropy of an arbitrary point in $\Sigma(X)$ is given by \eqref{eq:EEarb}, we conclude that the corresponding extreme points realize the minimum entanglement entropy.
\end{proof}
\end{proposition}

\begin{definition}
We denote by $\mathfrak{C}_{x(y)}^{\tau_{y(x)}}$ the subset of $\mathfrak{C}_{x(y)}$ whose corresponding density matrices are obtained by taking the partial trace of MESs over the complement of a subregion $X\in \mathtt{A}_{x(y)}$~\footnote{We have only two choices: $\mathfrak{C}_{x}^{\tau_{y}}$ and $\mathfrak{C}_{y}^{\tau_{x}}$.}.
\end{definition}

We now show with the following propositions that the definition above implies that  $\mathfrak{C}_{x(y)}^{\tau_{y(x)}}$ is the subset of extreme points in $\mathfrak{C}_{x(y)}$ which are left invariant under the action of $\tau_{y(x)}$.

\begin{proposition}
If $\rho$ is an extreme point of  $\Sigma(\mathtt{R})$ and it is an MES associated with $X_0 \in \mathtt{A}_{x(y)}$, then $\rho$ is also an MES associated with $X_1 \in \mathtt{A}_{x(y)}$.
\begin{proof}
Let $\rho_{X_0}$ be an extreme point of the ICS on $X_0$. Consider the region $ABC$ shown in Fig.~\ref{fig:qdalpha}(b), where $X_0 = AB$ and $BC$ is a disk. Since Axiom ${\bf A1}$ applies to disks, we have $I(A:C | B)=0$ for the region $ABC$. Moreover, $\rho_{BC}$ is completely fixed because $BC$ is a disk. Hence, the density matrix on $ABC$ is uniquely determined and coincides with the reduced density matrix obtained from $\rho$ by tracing out the complement of $ABC$. Similarly, when $X_0 = A'B'C'$, we conclude that $\rho_{A'B'}$ coincides with the reduced density matrix obtained from $\rho$ by tracing out the complement of $A'B'$. These two procedures correspond to enlarging and reducing the region $X_0$.

Repeating this argument sequentially, we find that the density matrix on the region $X_1$ is likewise uniquely determined and is given by the reduced density matrix of $\rho$ on $X_1$. By the isomorphism theorem, this density matrix on $X_1$ must be an extreme point of the ICS. Therefore, $\rho$ is an MES associated with $X_1 \in \mathtt{A}_{x(y)}$.
\end{proof}
\end{proposition}

\begin{porism}\label{porism:1}
For any extreme point in $\mathfrak{C}_{x(y)}^{\tau_{y(x)}}$ on a subregion $X \in \mathtt{A}_{x(y)}$, transporting it back to the original region along the path returns the same extreme point. In other words, $\tau_{y(x)}$ acts trivially on the extreme points of $\mathfrak{C}_{x(y)}^{\tau_{y(x)}}$.
\end{porism}

\begin{proposition}\label{pro:Ctau}
For  $\alpha \in \mathfrak{C}_{x(y)}$, we have $ \alpha \in \mathfrak{C}_{x(y)}^{\tau_{y(x)}}$ if and only if $\tau_{y(x)}(\alpha) = \alpha$. Equivalently,
\begin{align}
\mathfrak{C}_{x(y)}^{\tau_{y(x)}}=\{\alpha \in \mathfrak{C}_{x(y)}| \tau_{y(x)}(\alpha)=\alpha \}.
\end{align}

\begin{proof}
By Porism~\ref{porism:1}, any $\alpha \in \mathfrak{C}_{x(y)}^{\tau_{y(x)}}$ is invariant under $\tau_{y(x)}$. It thus remains to show that any $\alpha \notin \mathfrak{C}_{x(y)}^{\tau_{y(x)}}$ is not invariant under $\tau_{y(x)}$, i.e., $\tau_{y(x)}(\alpha)\neq \alpha$. We establish this by contradiction. Suppose that there exists an $\alpha \notin \mathfrak{C}_{x}^{\tau_y}$ that is nevertheless invariant under $\tau_y$. We then partition the torus into annuli $A$, $B$, $C$, and $D$, as illustrated in Fig.~\ref{fig:e_merge}. The state $\rho^\alpha_{BCD}$ is obtained from $\rho^\alpha_{ABC}$ by a deformation. Since $\alpha$ is not permuted, the two states coincide on the overlap region $BC$. Moreover, both states satisfy the Markov condition, namely, $I(A:C|B)=0$ and $I(B:D|C)=0$, respectively. It then follows from the merging theorem that one can construct a merged state $\sigma_{ABCD}\in \Sigma(\mathtt{R})$.

We then show that such a state $\sigma_{ABCD}$ must be a pure state. One way to show this is to use the associativity theorem of Ref.~\onlinecite{huang_knots_2023} (Theorem~2.22 there). It implies that $\sigma_{ABCD}$ is an extreme point as the fusion multiplicity for both $ABC$ and $BCD$ equals one. The purity of $\sigma$ follows from the fact that any extreme point on a torus (or any closed manifold) must be a pure state. 

Such a pure state $|\sigma\rangle$ must be an MES associated with $X \in \mathtt{A}_{x(y)}$, i.e., $\alpha \in \mathfrak{C}_{x(y)}^{\tau_{y(x)}}$. This contradicts the assumption.
\end{proof}
\end{proposition}

\begin{figure}[!t]
\centering
\includegraphics[width=0.55\linewidth]{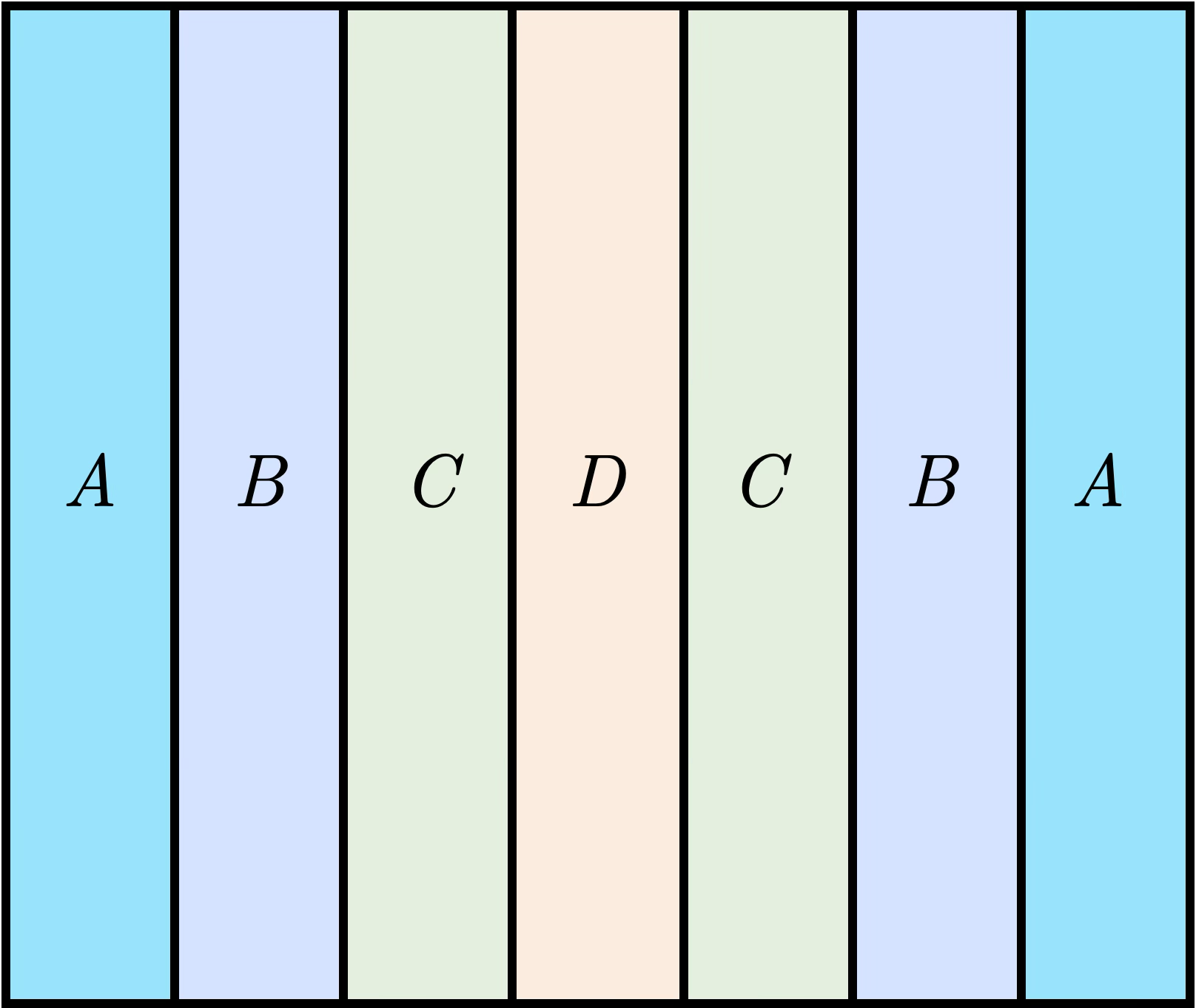}
\caption{Partition of $\mathtt{R}$ used in Propositions~\ref{pro:Ctau} and \ref{pro:oto}}
\label{fig:e_merge}
\end{figure}

\begin{proposition}[One-to-one correspondence]\label{pro:oto}
Each $\alpha \in \mathfrak{C}_{x(y)}^{\tau_{y(x)}}$ is in one-to-one correspondence with an MES associated with $X \in \mathtt{A}_{x(y)}$. 
\begin{proof}
We have already shown that for each $\alpha \in \mathfrak{C}_{x(y)}^{\tau_{y(x)}}$, there is at least one MES. Furthermore, each MES is associated with a label in $\mathfrak{C}_{x(y)}^{\tau_{y(x)}}$.

It remains to show that there cannot be multiple MESs that correspond to the same label $\alpha \in \mathfrak{C}_{x(y)}^{\tau_{y(x)}}$. We prove this using the uniqueness of a quantum Markov state given its marginals. Specifically, we consider the partition $ABCD$ in Fig.~\ref{fig:e_merge}, and suppose we have two MESs $|\psi\rangle$ and $|\varphi\rangle$ associated with the same $\alpha$. Both states reduce to $\rho^\alpha_{ABC}$ on $ABC$ and to $\rho^\alpha_{BCD}$ on $BCD$.
 Both states have $I(A:D|BC)=0$. Therefore, according to the property of a quantum Markov state (under the partition $A,BC,D$) we must have $|\psi\rangle$ equal to $|\varphi\rangle$ up to a global phase. Therefore, there is a one-to-one correspondence between the MESs associated with $X \in \mathtt{A}_{x(y)}$ and $\alpha \in \mathfrak{C}_{x(y)}^{\tau_{y(x)}}$.

\end{proof}
\end{proposition}

\noindent Following Proposition~\ref{pro:oto}, we introduce the following notation for the MESs and use it throughout the remainder of the paper.
\begin{definition}\label{def:M}
For $X \in \mathtt{A}_{x(y)}$, the density matrix of an MES is denoted by $|\alpha\rangle\langle \alpha|$, where $\alpha \in \mathfrak{C}_{x(y)}^{\tau_{y(x)}}$. When emphasizing that the MES is a pure state, we simply denote it by $|\alpha\rangle$.
\end{definition}

Moreover, we establish the following propositions, which will be useful throughout the remainder of the paper.

\begin{proposition}[Mutual orthogonality of MESs]\label{pro:mo}
MESs associated with $X \in \mathtt{A}_{x(y)}$ are mutually orthogonal.
\begin{proof}
By Theorem~\ref{th:simplex}, the reduced density matrices on $X$ of distinct MESs associated with $X \in \mathtt{A}_{x(y)}$ have mutually orthogonal supports and hence zero fidelity. Since fidelity is nondecreasing under partial trace, the fidelity between the corresponding MESs must also vanish. Therefore, the MESs associated with $X \in \mathtt{A}_{x(y)}$ are mutually orthogonal.
\end{proof}
\end{proposition}

\begin{proposition}\label{pro:GSD}
The maximal number of mutually orthogonal extreme points of the ICS on $\mathtt{R}$ is equal to the cardinality of $\mathfrak{C}_x^{\tau_{y}}$ and also to that of $\mathfrak{C}_y^{\tau_{x}}$.
\begin{proof}
First, the region $\mathtt{R}$ is a closed manifold. Therefore, $\Sigma(\mathtt{R})$ is isomorphic to the state space of an $N(\mathtt{R})$-dimensional Hilbert space.

Next, we show  that $N (\mathtt{R})= |\mathfrak{C}_x^{\tau_{y}}|$. For this, we apply the associativity theorem (Theorem~2.22 of Ref.~\onlinecite{huang_knots_2023}). We partition the torus as $\mathtt{R}= X X'$ where $X,X' \in \mathtt{A}_x$. Applying the simplex theorem to each annulus, we find a set of integers $N^{\alpha \beta}(X)$ and $N^{\alpha \beta}(X')$, such that
\begin{equation}
    N^{\alpha \beta}(X)= \delta_{\alpha, \beta}, \quad N^{\alpha \beta}(X')= \delta_{\alpha, \tau_y(\beta)},
\end{equation}
assuming the defect is in $X'$. By the associativity theorem, gluing the two annuli to obtain the torus gives a way to compute $N(\mathtt{R})$ from $\{ N^{\alpha \beta}(X)\}$ and $\{ N^{\alpha \beta}(X')\}$ as
\begin{equation}
\begin{aligned}
    N(\mathtt{R}) &= \sum_{\alpha, \beta \in \mathfrak{C}_x} N^{\alpha \beta}(X) N^{\alpha \beta}(X') \nn
    &= \sum_{\alpha ,\beta \in \mathfrak{C}_x}  \, \delta_{\alpha, \beta} \delta_{\alpha, \tau_y(\beta)} \nn
    &= \sum_{\alpha \in \mathfrak{C}_x} \delta_{\alpha, \tau_y(\alpha)}= |\mathfrak{C}_x^{\tau_{y}}|.
\end{aligned}   
\end{equation}
Finally, we can repeat the same argument in the other direction, and obtain $N (\mathtt{R})= |\mathfrak{C}_y^{\tau_{x}}|$. From this we conclude that $N(\mathtt{R})=|\mathfrak{C}_y^{\tau_{x}}|= |\mathfrak{C}_x^{\tau_{y}}|$.
\end{proof}
\end{proposition}

As noted at the beginning of this subsection, $N(\mathtt{R})$ is equal to the GSD of the parent Hamiltonian for the reference state. Therefore, the GSD also coincides with the number of extreme points of $\Sigma(X)$ that arise as reduced density matrices of MESs.

Using the properties of MESs established above, we can prove the following proposition.

\begin{proposition}\label{pro:id}
\begin{align}
\sum_{\alpha \in \mathfrak{C}_{x(y)}^{\tau_{y(x)}}} |\alpha\rangle \langle\alpha|=I,
\end{align}
where $I$ denotes the identity operator on the Hilbert space whose state space is $\Sigma(\mathtt{R})$.

\begin{proof}
By Propositions~\ref{pro:mo} and~\ref{pro:GSD}, the MESs are mutually orthogonal, and their number equals the dimension of the space spanned by the mutually orthogonal extreme points of $\Sigma(\mathtt{R})$. Therefore, the MESs form a complete orthonormal basis of this space, and the resolution of the identity follows.
\end{proof}
\end{proposition}

\noindent This also leads to the general expression of $\rho \in \Sigma(\mathtt{R})$.

\begin{proposition}\label{pro:rho}
Any $\rho \in \Sigma(\mathtt{R})$ can be expressed as
\begin{align}
\rho=\sum_{i} q_i |\psi_i\rangle \langle\psi_i|,
\end{align}
where $\{q_i\}$ is a probability distribution and
\begin{align}
|\psi_i\rangle
=
\sum_{\alpha \in \mathfrak{C}_{x(y)}^{\tau_{y(x)}}}
c_\alpha^i |\alpha \rangle,
\end{align}
with $\sum_{\alpha \in \mathfrak{C}_{x(y)}^{\tau_{y(x)}}} |c_\alpha^i|^2=1$.

\begin{proof}
By Proposition~\ref{pro:id}, the MESs ${|\alpha\rangle}$, with $\alpha \in \mathfrak{C}_{x(y)}^{\tau_{y(x)}}$, form a complete orthonormal basis for the relevant Hilbert subspace. Hence, every normalized pure state $|\psi_i\rangle$ in this subspace can be expanded as
\begin{align}
|\psi_i\rangle
=
\sum_{\alpha \in \mathfrak{C}_{x(y)}^{\tau_{y(x)}}}
c_\alpha^i |\alpha \rangle.
\end{align}
The normalization of $|\psi_i\rangle$ and the orthonormality of the MESs imply
\begin{align}
\sum_{\alpha \in \mathfrak{C}_{x(y)}^{\tau_{y(x)}}}
|c_\alpha^i|^2=1.
\end{align}
Finally, by the spectral decomposition, any $\rho \in \Sigma(\mathtt{R})$ can be written as
\begin{align}
\rho=\sum_i q_i |\psi_i\rangle\langle\psi_i|,
\end{align}
where $q_i\geq 0$ and $\sum_i q_i=1$. This proves the proposition.
\end{proof}
\end{proposition}

\subsection{Symmetries}
\label{sec:4c}

At fixed points of Abelian topological phases without transparent domain walls, the generalized symmetries underlying the spontaneous symmetry breaking description of topological order can be realized by creating an anyon--anti-anyon pair, transporting the anyons along prescribed paths, and fusing all localized excitations back to the vacuum. These processes define $1$-form symmetry operators~\cite{mcgreevy_generalized_2023,lake_ssb_2018,xu_entanglement_2025,liu_information-theoretic_2025}. The noncommutative algebra of these symmetry operators reveals the structure of the ground state degeneracy and how the symmetries connect distinct ground states, consistent with the interpretation of topological order in terms of spontaneous $1$-form symmetry breaking. We choose a ground state basis of simultaneous eigenstates of the symmetry operators along $l_x$, while those along $l_y$ connect different basis states.

In the presence of transparent domain walls, however, different phenomena arise. To explain them, we first formalize states with localized anyon excitations and the various operations used in this construction. We then define symmetries and characterize their induced actions on the ICS and the MES labels.

\begin{definition}[States with localized anyon excitations]\label{def:slae}
Let $\sigma_{\mathtt{R}}$ be a reference state defined on $\mathtt{R}$. A state $\rho_{\mathtt{R}}$ is said to contain localized anyon excitations $a_1,\ldots,a_n$ within mutually disjoint disks $D_1,\ldots,D_n$ if the following local properties are satisfied:
\begin{enumerate}
    \item Away from the excitation disks, $\rho_{\mathtt R}$ is locally indistinguishable from $\sigma_{\mathtt R}$. More precisely,
    \begin{equation*}
        \Tr_{\mathtt R\setminus X}\rho_{\mathtt R}
        =
        \Tr_{\mathtt R\setminus X}\sigma_{\mathtt R}
    \end{equation*}
    for every local disk $X$ whose buffered neighborhood is disjoint from $\bigcup_iD_i$. Here, for a fixed microscopic width $w>0$, we define the buffered neighborhood of a region $X$ by
    \begin{align*}
    X^{+w} \equiv \{p\in\mathtt{R} :
        \operatorname{dist}(p,X)\le w\}.
    \end{align*}
    \item For each $i\in \{1, \cdots, n\}$, the reduced density matrix on an annulus $A_i$ surrounding only $D_i$ is the extreme point of $\Sigma(A_i)$ labeled by $a_i$.

    \item The state $\rho_{\mathtt R}$ satisfies Axiom {\bf A0} on every admissible local disk whose buffered neighborhood intersects $\bigcup_iD_i$, including disks containing one or more anyon locations.
 
\end{enumerate}
\end{definition}

We remark that the third condition excludes additional entanglement between the positions of spatially separated excitations. In particular, it rules out coherent superpositions of joint position configurations in which the positions of different anyons are correlated, even when their charges are fixed, provided that the corresponding position degrees of freedom can be separated by an admissible {\bf A0} partition. Furthermore, {\bf A0} on the excitation and the Abelian nature of the excitation imply that {\bf A1} is satisfied on the state.

\begin{figure}[ht]
\centering
\includegraphics[width=0.95\linewidth]{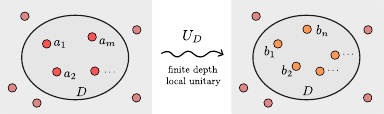}
\caption{Using a finite-depth local unitary $U_D$ to change the anyon content within a disk $D$ from ${a_1, a_2, \cdots, a_m}$ to ${b_1, b_2, \cdots, b_n}$ without affecting the excitations outside $D$.}
\label{fig:placeholder}
\end{figure}

\begin{proposition}[Anyon-transport operators, \cite{remote_braiding_2026}] \label{pro:anyon-transport}
For Abelian topological phases, consider a state defined on $\mathtt{R}$ with localized anyon excitations $a_1,a_2,\cdots, a_m$ contained within a disk $D$, denoted by $\rho_{{\bf a};D}$, which may also contain additional excitations outside $D$. If there exists a finite set of anyon labels ${ b_1, \cdots, b_n}$ such that
\begin{equation}
    a_1 \times a_2 \times \cdots \times a_m = b_1 \times b_2 \times \cdots \times b_n,
    \end{equation}
then the following statements hold:
\begin{enumerate}
    \item For any chosen positions of the localized anyons ${ b_1, \cdots, b_n}$ within the disk $D$, there exists a state $\rho_{{\bf b};D}'$ whose reduced density matrix on the complement of $D$ is identical to that of $\rho_{{\bf a};D}$.
    \item There exists a finite-depth local unitary $U_D$ supported on the disk $D$ such that
    \begin{align}
    &\rho_{{\bf b};D}' = U_D \rho_{{\bf a};D} U_D^\dagger.
    \end{align}
    Moreover, $U_D$ is independent of the state outside $D$.
\end{enumerate}
\end{proposition}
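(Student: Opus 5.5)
The plan is to argue both statements locally, entirely inside the disk $D$, and then glue the local result to the unchanged exterior. The key inputs are the information convex set of the region $D'=D^{+w}$ (a slightly enlarged disk) with its outer boundary annulus $\partial D'$, and the fact that for Abelian anyons all fusion multiplicities are $0$ or $1$.

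\textbf{Step 1: the fixed total charge and the fusion space.}
First I will show that the reduced state of $\rho_{{\bf a};D}$ on $D'$ minus small disks around the anyon locations is an element of the ICS of a multi-hole disk. Its hole labels are $a_1,\dots,a_m$. By the factorization property used in Theorem~\ref{th:ext}, the outer boundary annulus carries a definite extreme point, labeled by the total charge $c=a_1\times\cdots\times a_m$.

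Since every anyon is Abelian, iterating the fusion-space Hilbert space theorem (Proposition~\ref{pro:statespace} in the local-annulus version) together with associativity shows two things. The multi-hole fusion space with labels $(a_1,\dots,a_m;c)$ is one-dimensional. The same holds for the labels $(b_1,\dots,b_n;c)$ at any chosen positions, because $b_1\times\cdots\times b_n=c$.

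\textbf{Step 2: construction of $\rho'_{{\bf b};D}$ (claim 1).}
I will use the merging theorem. The unique state $\lambda^{\bf b}$ in the one-dimensional fusion space on the $n$-hole disk is merged with the reduced state of $\rho_{{\bf a};D}$ on $\mathtt{R}\setminus D$. The overlap is the annulus $\partial D'$, where both carry the same extreme point $c$.

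Axiom \textbf{A1} near the annulus, together with the Markov property of extreme points, gives the required $I(\cdot:\cdot|\cdot)=0$ conditions. The resulting merged state therefore has the desired exterior marginal. Purifying within $D$ and completing the holes with the corresponding disk states (again by merging) gives a state with localized $b_j$ that satisfies conditions 1--3 of Definition~\ref{def:slae}.

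\textbf{Step 3: the unitary $U_D$ (claim 2).}
Both $\rho_{{\bf a};D}$ and $\rho'_{{\bf b};D}$ are purifications, or extensions, of the same exterior state. I will therefore apply Uhlmann's theorem on $D$, with the complement of $D$ as the reference system, to obtain a unitary $U_D$ supported on $D$ that maps one to the other.

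Independence from the outside state follows from one fact: the reduced states on $D'$ are determined solely by the labels. This is the one-dimensionality from Step 1 combined with the uniqueness of merged states. Hence the same $U_D$ works for every exterior configuration, including configurations with additional excitations outside $D$, because the Markov decomposition across $\partial D'$ separates the two sides.

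\textbf{Main obstacle: finite depth.}
Uhlmann's theorem only gives an arbitrary unitary on $D$. To obtain a finite-depth local unitary, I would first try decomposing the process into elementary moves: pair creation from the vacuum, hopping an anyon by one $\mu$-disk, and fusing two anyons within a $\mu$-disk. Each move is realized by a unitary on a bounded-size disk via the same Uhlmann argument applied to a $\mu$-disk. Composing a number of moves that depends only on the geometry of $D$ and the label lists then gives a circuit of finite depth.

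The delicate point is that each elementary move must keep the state in the class of Definition~\ref{def:slae}. In particular, \textbf{A0} must hold on disks containing anyons, so that the next step's local uniqueness argument applies. This is where I expect most of the work.
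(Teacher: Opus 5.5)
The paper gives no proof of this proposition; it defers it to Ref.~\cite{remote_braiding_2026}. Your outline therefore has to stand on its own, and it has a genuine gap exactly where you suspect it, plus unjustified steps feeding into it. Uhlmann's theorem yields an arbitrary unitary on $D$. Your remedy, composing elementary hops and fusions one $\mu$-disk at a time, gives a circuit whose depth equals the number of moves, which grows with the diameter of $D$. The paper applies the proposition to thickened strings as long as the system (e.g.\ transport once around $l_x$), so this gives depth $O(L)$, not finite depth in the usual size-independent sense. The missing idea is parallelization. Tile each transport string by $O(1)$-size disks, create a $b\bar b$ pair in every tile in one layer of disjointly supported gates, then annihilate adjacent $\bar b b$ pairs in a second layer on the shifted tiling, leaving only the endpoints. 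With $m,n$ fixed, the total depth is $O(m+n)$. This is legitimate only if you first prove, for $O(1)$-size disks, that each elementary gate acts correctly whatever the rest of the state is, including all the freshly created pairs. You must also show that every intermediate configuration satisfies Definition~\ref{def:slae}. So state-independence has to be the local lemma driving the construction, not a property derived at the end for all of $D$.

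That local lemma is where Steps 2--3 need repair. Uhlmann's theorem applies to purifications, not to ``extensions.'' Two mixed states with the same marginal on $E=\mathtt R\setminus D$ (e.g.\ $\rho_D\otimes\rho_E$ and $\rho'_D\otimes\rho_E$ with different spectra) are in general not related by any unitary on $D$. The paper needs mixed inputs, since Definition~\ref{def:symmetry} requires one $U$ to act on every $\rho\in\Sigma(\mathtt R)$. The correct route is to make your Markov remark primary. Suppose both global states satisfy $I(D:C\,|\,B)=0$, with $B$ a buffer annulus outside $D$ and $C=\mathtt R\setminus(D\cup B)$, and share the exterior marginal. Then any $U_D$ with $U_D\rho_{DB}U_D^\dagger=\rho'_{DB}$ maps $\rho$ to $\rho'$. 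This holds because conjugation by $U_D$ preserves that conditional mutual information and the exterior marginal, and the Markov extension is unique.

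Existence of $U_D$ for the fixed local pair $(\rho_{DB},\rho'_{DB})$ must still be shown. Uhlmann gives it only from one \emph{pure} global realization of both configurations with identical exterior marginals. That in turn depends on your hole filling in Step 2, which is unsupported. ``Completing the holes with the corresponding disk states'' presupposes a state on an $O(1)$-size disk that carries a single localized $b_j$, is purified within the hole's physical qudits, and obeys \textbf{A0} at its core. Nothing in \textbf{A0}, \textbf{A1}, or the ICS results you cite supplies such a state, because the ICS only sees the region with the holes removed. For labels absent from the input (e.g.\ pair creation, $m=0$), you cannot borrow one from $\rho_{{\bf a};D}$.

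Relatedly, your claim that the state on $D'$ is ``determined solely by the labels'' is false as stated. The one-dimensional fusion space fixes it only on $D'$ minus the hole disks, not the core contents. This is harmless for the proposition, which lets $U_D$ depend on the state inside $D$. It cannot, however, be used to argue that one $U_D$ serves all inputs with the same labels.
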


\noindent This proposition is proved in a forthcoming work~\cite{remote_braiding_2026}. It is sufficient to characterize the creation, transport, and fusion processes and provides a concrete description of states with localized anyon excitations. The case $m=0$ corresponds to an initial state with no localized anyon excitations in $D$. When $m=0$, $n=2$, and $b_1 \times b_2=1$, the proposition guarantees the existence of a finite-depth local unitary that creates an anyon--anti-anyon pair. Moreover, the supporting disk can be chosen as a thickened string terminating at the positions of $b_1, b_2$. Similarly, when $m=2$, $n=0$, and $a_1 \times a_2=1$, the proposition guarantees the existence of a finite-depth pair-annihilation operator whose support can also be chosen as a thickened string. The cases $(m,n)=(1,1)$ and $(m,n)=(1,2)$ correspond to the transport and splitting operators, respectively, introduced in Ref.~\onlinecite{Kwagoe2020}. Using these elementary operations, we can construct various states with localized anyon excitations starting from $\rho \in \Sigma(\mathtt{R})$.

Then, we can define the symmetry as follows.

\begin{definition}[Symmetry]
\label{def:symmetry}
A symmetry is a unitary $U$ associated with a fixed finite,
charge-labeled Abelian anyon-transport process
$\mathcal P_U$. The process $\mathcal P_U$ consists of a finite
sequence of definite-charge pair-creation, transport, splitting,
and fusion operations along prescribed paths, and terminates
without leaving any localized excitations. Both $\mathcal P_U$
and the corresponding unitary $U$ are fixed independently of the
input state.

We require that $U$ induce an automorphism of the ICS:
\begin{equation*}
U\Sigma(\mathtt R)U^\dagger=\Sigma(\mathtt R).
\end{equation*}
Equivalently, for every $\rho\in\Sigma(\mathtt R)$,
\begin{equation*}
U\rho U^\dagger\in\Sigma(\mathtt R),
\qquad
U^\dagger\rho U\in\Sigma(\mathtt R).
\end{equation*}
Thus, a single choice of $(\mathcal P_U,U)$ must act consistently
on every state in $\Sigma(\mathtt R)$. Coherent superpositions or
coherently controlled combinations of distinct charge-labeled
anyon-transport processes are excluded.
\end{definition}

\noindent Based on this definition, we can show that the symmetries form a group and admit a 1-skeleton support structure.

\begin{proposition}[Group structure of symmetries]
\label{pro:symmetry-group}
Let $\mathcal G(\mathtt R)$ denote the set of symmetries. Then $\mathcal G(\mathtt R)$ forms a group under operator multiplication.
\end{proposition}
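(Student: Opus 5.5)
We verify the four group axioms for $\mathcal G(\mathtt R)$ under operator multiplication, at each step checking both parts of Definition~\ref{def:symmetry}: that the operator is realized by a fixed, finite, charge-labeled Abelian anyon-transport process with no leftover excitations, and that it preserves $\Sigma(\mathtt R)$ in the two-sided sense $U\Sigma(\mathtt R)U^\dagger=\Sigma(\mathtt R)$. Associativity is inherited from operator multiplication, so the work is in closure, identity and inverses.

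\emph{Closure.} Take $U,V\in\mathcal G(\mathtt R)$ with processes $\mathcal P_U,\mathcal P_V$, and let $\mathcal P_{UV}$ be the concatenation: run $\mathcal P_V$, then $\mathcal P_U$. Each ends with no localized excitations, so the concatenation is again a finite sequence of definite-charge creation, transport, splitting and fusion operations. It is state-independent because its two pieces are, and it terminates without excitations. Its unitary is the product $UV$ of the finite-depth local unitaries supplied by Proposition~\ref{pro:anyon-transport}. For the ICS condition, $U V\rho V^\dagger U^\dagger\in\Sigma(\mathtt R)$ follows by applying the defining property of $V$ and then that of $U$; the adjoint direction $(UV)^\dagger\rho(UV)$ is handled the same way.

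\emph{Identity.} The empty process, or equivalently creating and immediately annihilating a vacuum pair, realizes $I$, which trivially preserves $\Sigma(\mathtt R)$.

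\emph{Inverses.} This is the step I expect to need the most care. I would take $\mathcal P_{U^{-1}}$ to be the time reversal of $\mathcal P_U$: traverse the operations in reverse order, replacing each pair creation by the corresponding fusion into the vacuum and vice versa, each splitting by a fusion, and each transport by transport along the reversed path. Every elementary operation has such a reverse by Proposition~\ref{pro:anyon-transport}: the conditions $a_1\times\cdots\times a_m=b_1\times\cdots\times b_n$ are symmetric, and the finite-depth unitary $U_D$ can be replaced by $U_D^\dagger$, which maps $\rho'_{\mathbf b;D}$ back to $\rho_{\mathbf a;D}$.

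The subtle point is that Proposition~\ref{pro:anyon-transport} guarantees only existence of some $U_D$, so the reversed process must be shown to give exactly $U^\dagger$ and not merely an operator with the same action on $\Sigma(\mathtt R)$. I handle this by fixing the reverse of each step to be precisely the adjoint of the chosen forward unitary. The product of these adjoints is then $U^\dagger$ as an operator, and it corresponds to a legitimate charge-labeled process with the reversed charge labels. This process is state-independent and ends with no excitations, since the reversed sequence ends where $\mathcal P_U$ began, namely in the excitation-free configuration.

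Finally, the ICS condition for $U^\dagger$ is exactly the second half of the definition applied to $U$: $U^\dagger\rho U\in\Sigma(\mathtt R)$ and $U\rho U^\dagger\in\Sigma(\mathtt R)$. The two-sided automorphism requirement was included in Definition~\ref{def:symmetry} precisely so that this step goes through. Together with closure and the identity, this shows $\mathcal G(\mathtt R)$ is a group.
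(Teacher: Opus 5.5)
Your proposal is correct and follows essentially the same route as the paper: the empty process gives the identity, concatenation of processes gives closure, and the reversed process built from the adjoints $W_j^\dagger$ of the fixed elementary unitaries realizes exactly $U^\dagger$. The only difference is in showing that $U^\dagger$ preserves $\Sigma(\mathtt R)$: the paper argues that $\Sigma(\mathtt R)$ is the state space of a finite-dimensional Hilbert space that $U$ preserves, whereas you read it off directly from the two-sided condition in Definition~\ref{def:symmetry}, which is equally valid and slightly more direct.
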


\begin{proof}
Proposition~\ref{pro:anyon-transport} implies that every elementary pair-creation, transport, splitting, or fusion operation can be implemented by a finite-depth local unitary that is independent of the state outside its support. If $W$ implements such an elementary operation, then $W^\dagger$ has the same depth and support and implements the reverse operation. Thus, pair creation and annihilation are mutual inverses, splitting and fusion are mutual inverses, and transport is inverted by transport along the reversed path.

First, allowing a finite process to have length zero, the empty process is admissible and implements the identity $I$. Hence, $I\in\mathcal G(\mathtt R)$.

Next, let $U,V\in\mathcal G(\mathtt R)$ be implemented by processes $\mathcal P_U$ and $\mathcal P_V$, respectively. Performing $\mathcal P_U$ followed by $\mathcal P_V$ gives a finite charge-labeled process
\begin{equation}
\mathcal P_{VU}
=
\mathcal P_V\circ\mathcal P_U
\end{equation}
implementing $VU$. Moreover,
\begin{align}
&(VU)\Sigma(R)(VU)^\dagger=
V\left(U\Sigma(R)U^\dagger\right)V^\dagger\nn
&=V\Sigma(R)V^\dagger= \Sigma(R).
\end{align}
Therefore, $VU\in\mathcal G(\mathtt R)$.

Finally, suppose that
\begin{equation}
\mathcal P_U=(W_1,\ldots,W_N),
\qquad
U=W_N\cdots W_1.
\end{equation}
Reversing the process gives
\begin{equation}
\bar{\mathcal P}_U
=
(W_N^\dagger,\ldots,W_1^\dagger),
\end{equation}
which is admissible by Proposition~\ref{pro:anyon-transport} and implements
\begin{equation}
W_1^\dagger\cdots W_N^\dagger
=
U^\dagger.
\end{equation}
Since $\Sigma(\mathtt R)$ is the state space of the finite-dimensional Hilbert space identified above and $U$ preserves this space, $U^\dagger$ also preserves it. Hence, $U^\dagger\in\mathcal G(\mathtt R)$.

Associativity is inherited from operator multiplication. Therefore, $\mathcal G(\mathtt R)$ is a group.
\end{proof}

\begin{proposition}[One-skeleton support]
\label{pro:fdlu}
Let $U$ be a symmetry with a fixed realization $\mathcal P_U$. Then there exist a finite embedded graph $\Gamma_{\mathcal P_U}\subset\mathtt R$, a finite cell decomposition $K_{\mathcal P_U}$ of $\mathtt R$, and a microscopic width $w=O(1)$ such that
\begin{equation}
    \operatorname{supp}(U)
    \subseteq
    \mathcal N_w(\Gamma_{\mathcal P_U})
    \subseteq
    \mathcal N_w\!\left(K_{\mathcal P_U}^{(1)}\right).
    \label{eq:one-skeleton-support}
\end{equation}
Here, $\mathcal N_w(S)=\left\{p\in\mathtt R \,\middle|\, \operatorname{dist}(p,S)\leq w \right\}$,
where the distance is measured using the microscopic metric on
$\mathtt R$, and $K_{\mathcal P_U}^{(1)}$ denotes the
$1$-skeleton of $K_{\mathcal P_U}$.

The graph, cell decomposition, and width are determined by the
chosen realization $\mathcal P_U$ and are independent of the
input state $\rho\in\Sigma(\mathtt R)$.
\begin{proof}
Write the fixed process as a finite composition
\begin{equation}
    U=V_N\cdots V_1,
\end{equation}
where each $V_j$ implements a definite-charge pair-creation,
transport, splitting, or fusion operation.

Let $\gamma_j\subset\mathtt R$ be the finite piecewise-smooth path
network associated with $V_j$, having only finitely many endpoints
and intersections. By the bounded-width realization of the elementary operation, there exists a microscopic constant
$w_j=O(1)$ such that
\begin{equation}
    \operatorname{supp}(V_j)
    \subseteq
    \mathcal N_{w_j}(\gamma_j).
\end{equation}
The constants $w_j$ are independent of the input state and the
system size. Define
\begin{equation}
    \Gamma_{\mathrm{raw}}
    :=
    \bigcup_{j=1}^{N}\gamma_j,
    \qquad
    w:=\max_{1\leq j\leq N}w_j.
\end{equation}
Since the support of a product of operators is contained in the
union of the supports of its factors,
\begin{align}
    \operatorname{supp}(U)
    &\subseteq
    \bigcup_{j=1}^{N}\operatorname{supp}(V_j) \subseteq
    \mathcal N_w(\Gamma_{\mathrm{raw}}).
    \label{eq:raw-graph-support}
\end{align}

Subdivide $\Gamma_{\mathrm{raw}}$ at all endpoints, intersections, overlapping-segment endpoints, and pair-creation, splitting, fusion, and charge-changing events. Introduce an auxiliary vertex on every circular component containing no such vertex. Because $\Gamma_{\mathrm{raw}}$ is a finite union of piecewise-smooth paths with only finitely many endpoints and intersections, subdividing
it at these points yields a finite embedded graph $\Gamma_{\mathcal P_U}$ with the same underlying subset as $\Gamma_{\mathrm{raw}}$. Equation \eqref{eq:raw-graph-support}
therefore gives
\begin{equation}
    \operatorname{supp}(U)
    \subseteq
    \mathcal N_w(\Gamma_{\mathcal P_U}).
\end{equation}

Finally, the finite piecewise-smooth graph $\Gamma_{\mathcal P_U}$ embedded in the compact surface $\mathtt R$ can be included in the $1$-skeleton of a finite cell decomposition of $\mathtt R$. Hence, there exists a finite cell decomposition $K_{\mathcal P_U}$ satisfying
\begin{equation}
    \Gamma_{\mathcal P_U}
    \subseteq
    K_{\mathcal P_U}^{(1)}.
\end{equation}
It follows that
\begin{equation}
    \mathcal N_w(\Gamma_{\mathcal P_U})
    \subseteq
    \mathcal N_w\!\left(K_{\mathcal P_U}^{(1)}\right),
\end{equation}
which proves \eqref{eq:one-skeleton-support}.
\end{proof}
\end{proposition}

Among these symmetries, many induce the same action on the ICS $\Sigma(\mathtt{R})$. It is therefore natural to introduce an effective symmetry group by identifying symmetries that act equivalently on $\Sigma(\mathtt{R})$.

\begin{definition}[Equivalence classes of symmetries]
\label{def:effective-total-symmetry}
For $U,V\in\mathcal G(\mathtt R)$, define the equivalence relation
\begin{equation*}
U\sim V
\quad\Longleftrightarrow\quad
U\rho U^\dagger=V\rho V^\dagger
\quad
\text{for every }\rho\in\Sigma(\mathtt R).
\label{eq:symmetry-equivalence}
\end{equation*}
The group of equivalence classes of symmetries is then defined as
\begin{equation*}
F_{\mathrm{total}}
\equiv
\mathcal G(\mathtt R)/\sim.
\label{eq:F-total}
\end{equation*}
\end{definition}
\noindent Since every symmetry maps $\Sigma(\mathtt R)$ onto itself, the equivalence relation is compatible with composition, and the quotient $ F_{\mathrm{total}}$ inherits a group structure. The group operation is given by $[U][V]=[UV]$, with identity $[I]$ and inverse $[U]^{-1}=[U^\dagger]$.

\begin{theorem}\label{th:equiv}
Two symmetries $U^{(1)}$ and $U^{(2)}$ are equivalent if and only if $U^{(1)}|\alpha\rangle=e^{i\theta_{(1),(2)}}U^{(2)}|\alpha\rangle$ for every $\alpha \in \mathfrak{C}_{x(y)}^{\tau_{y(x)}}$, where $e^{i\theta_{(1),(2)}}$ is independent of $\alpha$.

\begin{proof}
The ``only if'' direction is proved as follows. By the equivalence relation defined in Definition~\ref{def:effective-total-symmetry}, if $U^{(1)}$ and $U^{(2)}$ are equivalent, then
\begin{align}
(U^{(2)})^\dagger U^{(1)} \rho (U^{(1)})^\dagger U^{(2)}=\rho
\end{align}
for every $\rho \in \Sigma(\mathtt{R})$. First, choosing $\rho=|\alpha\rangle\langle\alpha|$ implies
\begin{align}
(U^{(2)})^\dagger U^{(1)} |\alpha\rangle = \langle \alpha |(U^{(2)})^\dagger U^{(1)}|\alpha \rangle |\alpha\rangle.
\end{align}
Thus, $(U^{(2)})^\dagger U^{(1)}|\alpha\rangle$ lies in the one-dimensional subspace spanned by $|\alpha\rangle$. Since $(U^{(2)})^\dagger U^{(1)}$ is unitary, $\langle \alpha |(U^{(2)})^\dagger U^{(1)}|\alpha \rangle$ has unit modulus and can therefore be written as $e^{i\theta_{(1),(2),\alpha}}$. It follows that $U^{(1)}|\alpha\rangle=U^{(2)}e^{i\theta_{(1),(2),\alpha}}|\alpha\rangle$ for every $\alpha \in \mathfrak{C}_{x(y)}^{\tau_{y(x)}}$. To show that the phase is independent of the MES label, consider any distinct pair $\alpha,\beta \in \mathfrak{C}_{x(y)}^{\tau_{y(x)}}$, where $\alpha \neq \beta$, and choose
\begin{align}
\rho=\frac{1}{2}(|\alpha\rangle \langle \alpha|+|\alpha\rangle \langle \beta|+|\beta\rangle \langle \alpha|+|\beta\rangle \langle \beta|).
\end{align}
The condition $(U^{(2)})^\dagger U^{(1)} \rho (U^{(1)})^\dagger U^{(2)}=\rho$ then requires the relative phase between $|\alpha\rangle$ and $|\beta\rangle$ to remain unchanged. Hence $\theta_{(1),(2),\alpha}=\theta_{(1),(2),\beta}$ modulo $2\pi$. Since $\alpha$ and $\beta$ are arbitrary, the phase is independent of the MES label, and therefore
$U^{(1)}|\alpha\rangle=U^{(2)}e^{i\theta_{(1),(2)}}|\alpha\rangle$
for every $\alpha \in \mathfrak{C}_{x(y)}^{\tau_{y(x)}}$.

The ``if'' direction follows directly. By Proposition~\ref{pro:rho}, $\rho$ can be expressed as
\begin{align}
\rho=\sum_{i} q_i |\psi_i\rangle \langle\psi_i|,
\end{align}
where $|\psi_i\rangle=\sum_{\alpha \in \mathfrak{C}_{x(y)}^{\tau_{y(x)}}} c_\alpha^i |\alpha \rangle$, and $\{q_i\}$ is a probability distribution. If $U^{(1)}|\alpha\rangle=U^{(2)}e^{i\theta_{(1),(2)}}|\alpha\rangle$ for every $\alpha \in \mathfrak{C}_{x(y)}^{\tau_{y(x)}}$, with a phase independent of $\alpha$, then $(U^{(2)})^\dagger U^{(1)}|\psi_i\rangle=e^{i\theta_{(1),(2)}}|\psi_i\rangle$
for every $i$. Consequently, $(U^{(2)})^\dagger U^{(1)} \rho (U^{(1)})^\dagger U^{(2)}=\rho$, which proves that $U^{(1)}$ and $U^{(2)}$ are equivalent.
\end{proof}
\end{theorem}

\noindent This theorem is particularly useful because it allows us to characterize the equivalence of symmetries entirely in terms of a basis of pure MESs.

\begin{definition}[Total charge transported through a noncontractible annulus]
Let $U$ be a symmetry with a fixed realization $\mathcal P_U$, let $\Gamma_{\mathcal P_U}$ be the spatial transport graph of $\mathcal P_U$, and let $X\in\mathtt A_{y(x)}$. Fix a noncontractible loop $\kappa_X\subset X\setminus\partial X$ topologically equivalent to $l_{y(x)}$, together with a choice of positive transverse direction locally agreeing with the orientation of $l_{x(y)}$. We require that $\kappa_X$ intersect the transport strands transversely and avoid all vertices and charge-transmutation points. The loop $\kappa_X$ is fixed as part of the data associated with $X$ and is suppressed from the notation below.

Let $p_1,\ldots,p_r$ denote the crossing events at which the individual charge-labeled transport strands of $\mathcal P_U$ cross $\kappa_X$, with each traversal counted separately. The local oriented strand passing through $p_j$ is called a through-strand. Let $a_j$ be its charge at the crossing, evaluated after any transmutation caused by transparent domain walls encountered before that crossing, and define
\begin{equation*}
\widetilde a_j:=
\begin{cases}
a_j,
& \text{if the through-strand crosses $\kappa_X$}\\ &\text{in the positive transverse direction},\\
\bar{a}_j,
& \text{if it crosses $\kappa_X$ in the negative}\\ &\text{transverse direction}.
\end{cases}
\end{equation*}

The total charge of $\mathcal P_U$ transported through $X$ is
defined by
\begin{equation*}
a_{x(y)}(\mathcal P_U;X)
:=
\widetilde a_1
\times\cdots\times
\widetilde a_r .
\label{eq:transported-charge}
\end{equation*}
When there are no crossings, the product is defined to be the vacuum charge $1$. Because the anyon theory is Abelian, the ordering of the crossing events does not affect the result.
\end{definition}

The total charge depends on the chosen realization and loop $\kappa_X$; moving the cut across a wall or a vertex can change its value. In contrast, the induced fusion action on the MES labels, defined below, is the invariant quantity relevant to the ICS.
\begin{lemma}[Action of transport on MESs]
\label{lem:transport-action}
Let $U$ be a symmetry with a fixed realization $\mathcal P_U$. Choose $X\in\mathtt A_{y(x)}$ to be a sufficiently narrow annular neighborhood of $\kappa_X$, disjoint from all process vertices and from the excitation endpoints occurring between elementary operations. Thus, every crossing of $\kappa_X$ restricts to a strand joining the two boundary components of $X$. For every $\alpha\in\mathfrak C_{y(x)}^{\tau_{x(y)}}$, the label
\begin{equation}
\beta= a_{x(y)}(\mathcal P_U;X)\times\alpha
\end{equation}
belongs to
$\mathfrak C_{y(x)}^{\tau_{x(y)}}$ and satisfies
\begin{equation}
U|\alpha\rangle\langle\alpha|U^\dagger
=
|\beta\rangle\langle\beta|.
\end{equation}

\begin{proof}
Refine the description of $\mathcal P_U$, if necessary, so
that every transport strand crossing $X$ defines a complete
tunneling step from one boundary component of $X$ to the other.
Immediately before and after each such step, $X$ contains no
localized excitation.

Order the crossing events according to the temporal order of
the refined process and set
\begin{equation}
\alpha_0=\alpha,
\qquad
\alpha_j
=
\widetilde a_j\times\alpha_{j-1},
\qquad
j=1,\ldots,r.
\end{equation}
For a positive crossing, Proposition~\ref{pro:fusion} implies that tunneling
the charge $a_j$ through $X$ changes the extreme point by
fusion with $a_j$. For a negative crossing, the inverse
tunneling process changes the sector by fusion with
$\bar{a}_j$. Thus, in either case, the sector changes by
fusion with the oriented charge $\widetilde a_j$. Consequently,
the successive extreme points are
\begin{equation}
\rho_X^{\alpha_0}
\longmapsto
\rho_X^{\alpha_1}
\longmapsto
\cdots
\longmapsto
\rho_X^{\alpha_r}.
\end{equation}

Elementary operations whose strands do not cross $\kappa_X$ preserve its extreme point label. Closed Abelian braiding processes may contribute a sector-dependent phase, but this phase disappears at the density-operator level. Therefore,
\begin{align}
\alpha_r
&=
\left(
\widetilde a_1
\times\cdots\times
\widetilde a_r
\right)\times\alpha
=
a_{x(y)}(\mathcal P_U;X)\times\alpha
=
\beta.
\end{align}

Since $\mathcal P_U$ terminates without leaving any localized excitation and $U$ is a symmetry, $U|\alpha\rangle\langle\alpha|U^\dagger$ is a pure element of $\Sigma(\mathtt R)$ whose reduced density matrix on $X$ is the extreme point $\rho_X^\beta$. It is therefore an MES associated with $X$, so $\beta \in \mathfrak C_{y(x)}^{\tau_{x(y)}}$. Proposition~\ref{pro:oto} then gives the unique MES projector carrying this label:
\begin{equation}
U|\alpha\rangle\langle\alpha|U^\dagger
=
|\beta\rangle\langle\beta|.
\end{equation}
\end{proof}
\end{lemma}

The preceding lemma shows that the action of a symmetry on the MES labels is determined by fusion with the total charge transported through the annulus. We now define the effective group of these annular actions.

\begin{definition}[Effective symmetry group on a noncontractible annulus]
\label{def:effective-annular-group}
Let $F_{\mathrm{total}}$ be the group of equivalence classes of
symmetries. For $X\in\mathtt A_{x(y)}$, define
\begin{equation*}
    \pi_{x(y)}:
    F_{\mathrm{total}}
    \longrightarrow
    {\rm Perm}(\mathfrak C_{x(y)}^{\tau_{y(x)}})
\end{equation*}
by
\begin{equation*}
    \pi_{x(y)}([U])(\alpha)=\beta
    \quad\Longleftrightarrow\quad
    U|\alpha\rangle\langle\alpha|U^\dagger
    =
    |\beta\rangle\langle\beta|.
    \label{eq:effective-annular-action}
\end{equation*}
Here, $U$ is any representative of $[U]$, and ${\rm Perm}(\mathfrak C_{x(y)}^{\tau_{y(x)}})$ denotes the permutation group of $\mathfrak C_{x(y)}^{\tau_{y(x)}}$. This definition is independent of the choice of representative because equivalent symmetries induce the same conjugation action on every state in $\Sigma(R)$.

The map $\pi_{x(y)}$ is a group homomorphism:
\begin{equation*}
    \pi_{x(y)}([U][V])
    =
    \pi_{x(y)}([U])
    \circ
    \pi_{x(y)}([V]).
\end{equation*}
We define the effective symmetry group associated with
$X\in\mathtt A_{x(y)}$ by
\begin{equation*}
    F_{x(y)}
    :=
    \operatorname{Im}\pi_{x(y)}
    \subseteq
    {\rm Perm}(\mathfrak C_{x(y)}^{\tau_{y(x)}}).
    \label{eq:effective-annular-group}
\end{equation*}
\end{definition}

\noindent By the first isomorphism theorem, $F_{x(y)}\simeq F_{\mathrm{total}}/\ker\pi_{x(y)}$. Thus, two elements of $F_{\mathrm{total}}$ determine the same element of $F_{x(y)}$ if and only if they induce the same permutation of the MES labels.

\begin{theorem} \label{th:uni}
The elements of $F_{x(y)}$ are precisely the permutations of
$\mathfrak C_{x(y)}^{\tau_{y(x)}}$ induced by fusion with
Abelian anyons. If $a\times\alpha_0=\beta$ for some
$\alpha_0,\beta \in\mathfrak C_{x(y)}^{\tau_{y(x)}}$, then fusion by $a$ preserves the entire set
$\mathfrak C_{x(y)}^{\tau_{y(x)}}$ and is realized by an element of $F_{x(y)}$.

\begin{proof}
Let $g\in F_{x(y)}$. Choose
$[U]\in F_{\mathrm{total}}$ satisfying $\pi_{x(y)}([U])=g$, together with a representative $U$, its fixed realization $\mathcal P_U$, and $X \in \mathtt{A}_{x(y)}$. By Lemma~\ref{lem:transport-action},
\begin{equation}
g(\alpha)= a_{y(x)}(\mathcal P_U;X)\times\alpha
\end{equation}
for every $\alpha\in\mathfrak C_{x(y)}^{\tau_{y(x)}}$.
Hence, every element of $F_{x(y)}$ is induced by fusion with
an Abelian anyon.

Conversely, suppose that $a\times\alpha_0=\beta$, where $\alpha_0,\beta \in \mathfrak C_{x(y)}^{\tau_{y(x)}}$. We prove the statement for $X\in\mathtt A_y$ and write $a=a_x$. The case $X\in\mathtt A_x$ follows by exchanging $x$ and $y$.

Create an $a_x$--$\bar{a}_x$ pair and transport $a_x$
once around $l_x$. When it returns to its initial position,
the remaining localized charge is
\begin{equation}
    b
    \equiv 
    \phi_x(a_x)\times\bar{a}_x.
    \label{eq:residual-charge}
\end{equation}
If $b=1$, the closed transport already leaves no localized
excitation.

Suppose that $b\neq1$. Since $\alpha_0$ and $\beta$ are invariant under $\tau_x$, Theorem~\ref{th:NNNtau} implies
\begin{equation}
\phi_x(a_x)\times\alpha_0 =a_x\times\alpha_0=\beta.
\end{equation}
By Proposition~\ref{pro:sim}, $a_x \overset{y}{\sim}  \phi_x(a_x)$, and therefore $1\overset{y}{\sim}b$. Porism~\ref{porism:equiv2} then gives an anyon $c\in\mathcal C$ such that
\begin{equation}
    b=\phi_y(c)\times\bar c.
\end{equation}
Split $b$ into $\phi_y(c)$ and $\bar c$, and transport
$\bar c$ once around $l_y$ along a path disjoint from
$X$. After the transport,
\begin{equation}
    \bar c
    \longmapsto
    \phi_y(\bar c)
    =
    \overline{\phi_y(c)},
\end{equation}
which annihilates with $\phi_y(c)$. The complete process
therefore leaves no localized excitation, while its total
charge transported through $X$ is $a_x$.

By Proposition~\ref{pro:anyon-transport}, the elementary
operators can be chosen independently of the input state. Their
composition defines a single unitary $U$ satisfying
\begin{equation}
    U\Sigma(\mathtt R)U^\dagger
    \subseteq
    \Sigma(\mathtt R).
\end{equation}
Applying the reversed process gives
\begin{equation}
    U^\dagger\Sigma(\mathtt R)U
    \subseteq
    \Sigma(\mathtt R),
\end{equation}
and hence
\begin{equation}
    U\Sigma(\mathtt R)U^\dagger
    =
    \Sigma(\mathtt R).
\end{equation}
Thus, $U$ is a symmetry.

Finally, Lemma~\ref{lem:transport-action} gives
\begin{equation}
    \pi_y([U])(\alpha)
    =
    a_x\times\alpha
\end{equation}
for every $\alpha\in\mathfrak C_y^{\tau_x}$. Therefore, fusion by $a_x$ preserves all of $\mathfrak C_y^{\tau_x}$ and defines an element of $F_y$.
\end{proof}

\end{theorem}

\begin{corollary}\label{cor:sym}
For every $g \in F_{x(y)}$, there exists an anyon-transport
symmetry $U \in \mathcal{G}(\mathtt{R})$ such that
\[
    \pi_{x(y)}([U]) = g,
    \qquad
    \operatorname{supp}(U)
    \subseteq N_w(l_x \cup l_y),
\]
where $N_w(S)$ denotes the neighborhood of $S$ of
microscopic width $w$, independent of the system size.
\end{corollary}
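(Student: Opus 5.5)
The plan is to reuse the explicit construction in the converse part of the proof of Theorem~\ref{th:uni}, and then check its support. Fix $g\in F_{x(y)}$; treat $X\in\mathtt A_y$, with the other case obtained by exchanging $x$ and $y$.

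\textbf{Step 1: reduce $g$ to fusion by a single anyon.} By the first part of Theorem~\ref{th:uni}, there is an Abelian anyon $a_x$ with $g(\alpha)=a_x\times\alpha$ for all $\alpha\in\mathfrak C_y^{\tau_x}$. In particular $a_x\times\alpha_0=\beta$ for some $\alpha_0,\beta\in\mathfrak C_y^{\tau_x}$. We would not use the original representative $U$, which may have wide support. Instead we build a new symmetry $U'$ with the same induced permutation and controlled support.

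\textbf{Step 2: build the compact process.} Choose representative loops $l_x$ and $l_y$ meeting at a single anchoring point $p$. Construct $\mathcal P_{U'}$ as follows.
\begin{enumerate}
\item Create an $a_x$--$\bar a_x$ pair in a disk of radius $O(w)$ around $p$.
\item Transport $a_x$ once along $l_x$ back to $p$, leaving the residual charge $b=\phi_x(a_x)\times\bar a_x$.
\item If $b\neq1$, use Theorem~\ref{th:NNNtau}, Proposition~\ref{pro:sim} and Porism~\ref{porism:equiv2} exactly as in Theorem~\ref{th:uni} to write $b=\phi_y(c)\times\bar c$.
\item Split $b$ near $p$ and transport $\bar c$ once along $l_y$ back to $p$.
\item Annihilate everything near $p$.
\end{enumerate}

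By Proposition~\ref{pro:anyon-transport}, each elementary operation is a finite-depth unitary supported in an $O(1)$-width neighborhood of its path. This holds for pair creation along a thickened string, for transport along $l_x$ or $l_y$, and for splitting or fusion in a small disk at $p$. The argument of Proposition~\ref{pro:fdlu}, with $\Gamma_{\mathcal P_{U'}}=l_x\cup l_y$, then gives $\operatorname{supp}(U')\subseteq N_w(l_x\cup l_y)$, where $w$ is the maximum of the finitely many elementary widths and is independent of system size. As in Theorem~\ref{th:uni}, $U'$ and its reverse preserve $\Sigma(\mathtt R)$, so $U'\in\mathcal G(\mathtt R)$.

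\textbf{Step 3: identify the induced permutation.} Take $\kappa_X\subset X$ parallel to $l_y$ and crossing $l_x$ once, away from $p$. The $l_y$-transport of $\bar c$ runs parallel to $\kappa_X$; after a small displacement of $X$ it has no crossings and contributes nothing. The only through-strand is therefore $a_x$, crossing positively once, so $a_y(\mathcal P_{U'};X)=a_x$. This requires placing any wall the strand meets after $\kappa_X$, so that the charge at the crossing is still $a_x$. Lemma~\ref{lem:transport-action} then gives $\pi_y([U'])(\alpha)=a_x\times\alpha=g(\alpha)$. Hence $\pi_y([U'])=g$.

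\textbf{Main obstacle.} The delicate step is Step 3: choosing the cut $\kappa_X$ and the wall positions compatibly. The total transported charge is only defined relative to $\kappa_X$ and can change when the cut is moved across a wall. I would handle this as follows.
\begin{itemize}
\item Fix the wall along $l_y$ to lie just after $\kappa_X$ in the transport direction, which is allowed by the deformability convention for walls.
\item Check that the $\bar c$ loop along $l_y$ can be taken disjoint from the narrow annulus $X$, as required by Lemma~\ref{lem:transport-action}.
\item Verify that nothing else about the induced permutation depends on these choices, since Lemma~\ref{lem:transport-action} makes the label action invariant.
\end{itemize}
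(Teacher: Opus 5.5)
Your proposal is correct and matches the paper, which gives no separate proof: the corollary is read off from the converse construction in the proof of Theorem~\ref{th:uni} (pair creation, transport of $a_x$ along $l_x$, splitting the residual charge $\phi_x(a_x)\times\bar a_x=\phi_y(c)\times\bar c$, transport of $\bar c$ along $l_y$), with the support bound from Proposition~\ref{pro:fdlu}. The wall-placement step you flag as the main obstacle is not actually needed: if the strand carries $\phi_x(a_x)$ at the cut, Theorem~\ref{th:NNNtau} applied to $\tau_x$-invariant labels gives $\phi_x(a_x)\times\alpha=a_x\times\alpha$, which is exactly the relation already used in the proof of Theorem~\ref{th:uni}.
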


According to Theorem~\ref{th:uni} and Corollary~\ref{cor:sym}, the symmetries can be associated with anyon transport along both fundamental cycles. The following proposition gives a criterion for when this two-cycle structure cannot be decomposed into a product of independent $1$-form symmetry operators supported on the individual cycles. We will show that this obstruction is associated with the presence of transparent domain walls. In the toric code without transparent domain walls, the $1$-form symmetry operators can always be factorized into separate $1$-form symmetry operators winding along the $x$- and $y$-cycles. In Wen's plaquette model, however, transparent domain walls arise when the system has odd linear dimensions, and certain $1$-form symmetry operators do not admit such a factorization. We analyze this phenomenon in detail in Sec.~\ref{sec:5a}. For now, the following proposition makes precise the criterion underlying this nonfactorizable structure.

\begin{proposition}[Obstruction to fundamental-cycle factorization]
\label{pro:symd}
Let $g\in F_{y(x)}$ be the permutation induced by fusion with an anyon $a_{x(y)}$, so that
\begin{equation}
    g(\alpha)=a_{x(y)}\times\alpha
\end{equation}
for every $\alpha\in\mathfrak C_{y(x)}^{\tau_{x(y)}}$. Suppose that there is no $d\in\mathcal C^{\phi_{x(y)}}$ satisfying $d\overset{y(x)}{\sim}a_{x(y)}$. Then no symmetry class $[U]\in F_{\mathrm{total}}$ satisfying $\pi_{y(x)}([U])=g$ admits a representative of either form $U_xU_y$ or $U_yU_x$, where $U_x$ and $U_y$ are $1$-form symmetry operators supported on $l_x$ and $l_y$, respectively.
\end{proposition}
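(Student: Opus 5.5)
I would argue by contradiction and work entirely through Lemma~\ref{lem:transport-action}, which reduces the action of any symmetry on the MES labels to fusion with the total charge transported through a noncontractible annulus. Take $X\in\mathtt A_{y(x)}$ as in the statement. Suppose $[U]$ with $\pi_{y(x)}([U])=g$ has a representative $U_xU_y$ (the case $U_yU_x$ is identical, since the order of composition does not affect the product of Abelian transported charges). Here $U_x$ and $U_y$ are $1$-form symmetry operators supported on $l_x$ and $l_y$. I take this to mean that each is realized by a closed transport of a single definite charge around its own cycle, returning to the vacuum without any auxiliary transport along the other cycle.

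The first step is to identify the charge that $U_x$ transports through $X$. Since $U_x$ is a closed transport of some charge $d$ once (or $n$ times) around $l_x$, and it terminates without leaving any localized excitation, the returning charge must annihilate against the partner left behind. This forces $\phi_x(d)\times\bar d=1$, that is $d\in\mathcal C^{\phi_x}$; for winding number $n$ the relevant charge is $\phi_x^k(d)=d$ at each crossing. Choose the cut $\kappa_X$ generically. The crossings of $U_x$ then contribute a net oriented charge $d^n$, which lies in the subgroup $\mathcal C^{\phi_x}$ because $\phi_x$ is a fusion automorphism by Theorem~\ref{eq:phiabc}. The second step concerns $U_y$. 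Its strands run parallel to $\kappa_X$, so after a deformation they can be kept inside a thin neighborhood of $l_y$ that meets $\kappa_X$ in a contractible fashion. Its oriented crossings therefore cancel, and it contributes the vacuum. Equivalently, one can use Lemma~\ref{lem:transport-action} with a different narrow annulus that avoids the support of $U_y$, together with Proposition~\ref{pro:sim}. The conclusion is that $U_y$ acts trivially on $\mathfrak C_{y(x)}^{\tau_{x(y)}}$.

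Combining the two steps, Lemma~\ref{lem:transport-action} and the homomorphism property of $\pi_{y(x)}$ give $g(\alpha)=d'\times\alpha$ for all MES labels $\alpha$, with $d'\in\mathcal C^{\phi_x}$. We also have $g(\alpha)=a_x\times\alpha$. Pick any single $\alpha$ and apply Proposition~\ref{pro:sim}: this yields $d'\overset{y}{\sim}a_x$, which contradicts the hypothesis.

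The main obstacle is the step that $U_y$ transports trivial net charge through $X$ and that $U_x$ transports only $\phi_x$-invariant charges. Both depend on making precise what a $1$-form symmetry "supported on $l_x$" means. I would fix this as a closed single-charge transport that creates no residual excitation at any stage requiring the other cycle. I would then show that a single strand of charge $d$ closing on itself around $l_x$ must satisfy $\phi_x(d)=d$, by comparing local-annulus labels before and after the loop using the definition of $\phi_x$. A secondary technical point is to check that the cut can be chosen so that the crossings of $U_y$ come in cancelling pairs. For this I would invoke the freedom in choosing $\kappa_X$ and the fact that the induced action is independent of the representative.
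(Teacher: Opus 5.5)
Your proposal is correct and follows essentially the same route as the paper: $U_x$ acts on the MES labels by fusion with a $\phi_x$-invariant charge, $U_y$ transports no net charge through $X$ and so acts trivially, the homomorphism property of $\pi_y$ combines the two, and Proposition~\ref{pro:sim} turns $d\times\alpha=a_x\times\alpha$ into $d\overset{y}{\sim}a_x$, contradicting the hypothesis. One small correction to your multiple-winding remark: closure only forces $\phi_x^n(d)=d$, not $\phi_x^k(d)=d$ at every crossing, but the net crossing charge $d\times\phi_x(d)\times\cdots\times\phi_x^{n-1}(d)$ is still $\phi_x$-invariant, so your conclusion is unaffected.
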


\begin{proof}
The statement obtained by exchanging $x$ and $y$ holds analogously. Suppose that such a factorization exists. Since $U_x$ is a
$1$-form symmetry supported on $l_x$, it is represented by
the closed transport of some $d\in\mathcal C^{\phi_x}$.
Its action on the MES labels associated with
$X\in\mathtt A_y$ is therefore
\begin{equation}
    \pi_y([U_x])(\alpha)=d\times\alpha.
\end{equation}
In contrast, $U_y$ transports no charge through $X$ and hence
does not permute these MES labels:
\begin{equation}
    \pi_y([U_y])
    =
    \operatorname{id}.
\end{equation}
Because $\pi_y$ is a homomorphism, either ordering of $U_x$
and $U_y$ induces fusion by $d$. Therefore,
\begin{equation}
    d\times\alpha
    =
    g(\alpha)
    =
    a_x\times\alpha
\end{equation}
for every $\alpha\in\mathfrak C_y^{\tau_x}$. Proposition~\ref{pro:sim} then implies $d\mathrel{\stackrel{y}{\sim}}a_x$, leading to a contradiction.
\end{proof}

The hypothesis of Proposition~\ref{pro:symd} automatically
implies $\phi_x(a_x)\times\bar{a}_x\neq 1$. Indeed, if $\phi_x(a_x)=a_x$, then $a_x\in\mathcal C^{\phi_x}$ itself would satisfy $a_x\mathrel{\stackrel{y}{\sim}}a_x$.

We make two remarks. First, Proposition~\ref{pro:symd} concerns factorization relative to the chosen fundamental cycles $l_x$ and $l_y$. It does not exclude a representative supported on a loop in another homology class or a more general product of $1$-form symmetry operators. Explicit examples are discussed in Secs.~\ref{sec:5a} and~\ref{sec:5c}.

Second, understanding the ground state degeneracy through these symmetries is less straightforward, because they cannot, in general, be decomposed into a product of two $1$-form symmetries supported individually on $l_x$ and $l_y$. However, we can clarify this structure by examining their action on the extreme points of the ICSs on noncontractible annuli. Choose a noncontractible annulus in $\mathtt{A}_x$ and consider the extreme points that arise as reductions of extreme points of the ICS on the torus. An analogous construction applies to a noncontractible annulus in $\mathtt{A}_y$. These extreme points correspond one-to-one to the MESs forming a basis for the ground state space. Symmetries supported along $l_x$ leave these extreme points unchanged and therefore act diagonally on the corresponding MESs, contributing only phases. Symmetries supported along $l_y$ or jointly on $l_x$ and $l_y$ can permute these extreme points and thereby connect distinct MESs. Thus, a chosen MES is invariant up to a phase under symmetries that fix its associated extreme point, whereas symmetries that map it to a different extreme point transform it into a distinct MES. These connections between ICSs on noncontractible annuli, MESs, and symmetry actions provide a constructive description of how symmetries are realized in the ground state space and a basis for discussing spontaneous symmetry breaking in the presence of transparent domain walls. In Sec.~\ref{sec:5a}, we illustrate this connection using Wen's plaquette model.

We now prove the following proposition, which will be useful for defining entanglement asymmetry.
\begin{proposition} \label{pro:Fx}
The natural action of $F_{x(y)}$ on $\mathfrak C_{x(y)}^{\tau_{y(x)}}$ is free and transitive. Consequently,
\begin{equation}
    |F_{x(y)}|
    =
    \left|
        \mathfrak C_{x(y)}^{\tau_{y(x)}}
    \right|.
    \label{eq:effective-group-cardinality}
\end{equation}
\end{proposition}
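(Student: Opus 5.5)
By Theorem~\ref{th:uni}, every element $g\in F_{x(y)}$ acts on $\mathfrak C_{x(y)}^{\tau_{y(x)}}$ as fusion with some Abelian anyon, $g(\alpha)=a\times\alpha$. The plan is to establish three things in turn: transitivity, freeness, and then the orbit--stabilizer count.

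\textbf{Transitivity.} Fix $\alpha_0,\beta\in\mathfrak C_{x(y)}^{\tau_{y(x)}}$. Proposition~\ref{pro:aalphabeta} gives an anyon $a\in\mathcal C$ with $a\times\alpha_0=\beta$. This uses only that $\alpha_0,\beta\in\mathfrak C_{x(y)}$; that they also lie in the invariant subset is a hypothesis of Theorem~\ref{th:uni}. The converse part of Theorem~\ref{th:uni} then applies: fusion by $a$ preserves $\mathfrak C_{x(y)}^{\tau_{y(x)}}$ and is realized by some $g\in F_{x(y)}$. That $g$ sends $\alpha_0$ to $\beta$, so the action is transitive.

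\textbf{Freeness.} Suppose $g\in F_{x(y)}$ fixes one label $\alpha_0$. By the first part of Theorem~\ref{th:uni}, $g$ is fusion by some anyon $a$, so $a\times\alpha_0=\alpha_0=1\times\alpha_0$. Proposition~\ref{pro:sim} gives $a\overset{x(y)}{\sim}1$, meaning $a\times\alpha=\alpha$ for every $\alpha\in\mathfrak C_{x(y)}$, and in particular on the invariant subset. Since $F_{x(y)}\subseteq{\rm Perm}(\mathfrak C_{x(y)}^{\tau_{y(x)}})$ consists of permutations, a permutation acting as the identity is the identity element, so $g=\mathrm{id}$. The only point to check is that $F_{x(y)}$ is defined as the image in the permutation group, not as $F_{\rm total}$. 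It is, so no kernel issue arises.

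\textbf{Counting.} A free and transitive action is regular. Fixing $\alpha_0$, the map $g\mapsto g(\alpha_0)$ is injective by freeness and surjective by transitivity, which gives $|F_{x(y)}|=|\mathfrak C_{x(y)}^{\tau_{y(x)}}|$.

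\textbf{Main obstacle.} The delicate step is transitivity. It needs Proposition~\ref{pro:aalphabeta} to produce an anyon connecting two invariant labels, and it needs the converse of Theorem~\ref{th:uni} to guarantee that this fusion is implemented by an actual symmetry. That construction, the two-cycle transport using Porism~\ref{porism:equiv2}, is already done in Theorem~\ref{th:uni}, so here it only has to be invoked with the correct index convention ($a_x$ acting on $\mathfrak C_y^{\tau_x}$, and symmetrically).
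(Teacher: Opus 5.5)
Your proposal is correct and follows essentially the same route as the paper. The paper also shows that the evaluation map $g\mapsto g(\alpha_0)$ is a bijection: surjectivity comes from Proposition~\ref{pro:aalphabeta} together with the converse part of Theorem~\ref{th:uni}, and injectivity from Theorem~\ref{th:uni} together with Proposition~\ref{pro:sim}. The only cosmetic difference is that the paper compares two elements directly (fusion by $a$ versus $d$ when $g(\alpha_0)=h(\alpha_0)$), whereas you compare with the vacuum to get trivial stabilizers and then use the group structure of $F_{x(y)}$.
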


\begin{proof}
Fix $\alpha_0 \in \mathfrak C_{x(y)}^{\tau_{y(x)}}$
and consider the evaluation map
\begin{equation}
    \operatorname{ev}_{\alpha_0}:
    F_{x(y)}
    \longrightarrow
    \mathfrak C_{x(y)}^{\tau_{y(x)}},
    \qquad
    g\longmapsto g(\alpha_0).
\end{equation}

For every $\beta \in \mathfrak C_{x(y)}^{\tau_{y(x)}}$, Proposition~\ref{pro:aalphabeta} provides an anyon $a$ satisfying
\begin{equation}
    a\times\alpha_0=\beta.
\end{equation}
By Theorem~\ref{th:uni}, fusion by $a$ is realized by an
element $g\in F_{x(y)}$. Therefore,
$\operatorname{ev}_{\alpha_0}$ is surjective.

Now suppose that $g,h\in F_{x(y)}$ satisfy
\begin{equation}
    g(\alpha_0)=h(\alpha_0).
\end{equation}
By Theorem~\ref{th:uni}, there exist anyons $a,d\in\mathcal C$
such that
\begin{equation}
    g(\alpha)=a\times\alpha,
    \qquad
    h(\alpha)=d\times\alpha
\end{equation}
for every
$\alpha\in\mathfrak C_{x(y)}^{\tau_{y(x)}}$. In particular,
\begin{equation}
    a\times\alpha_0
    =
    d\times\alpha_0.
\end{equation}
Proposition~\ref{pro:sim} implies that fusion by $a$ and fusion by $d$
have the same action on every element of
$\mathfrak C_{x(y)}^{\tau_{y(x)}}$. Hence, $g=h$, and
$\operatorname{ev}_{\alpha_0}$ is injective.

Thus, $\operatorname{ev}_{\alpha_0}$ is a bijection. This
proves that the action is free and transitive and establishes \eqref{eq:effective-group-cardinality}.
\end{proof}

\subsection{Entanglement asymmetry}
\label{sec:4d}

The \textit{entanglement asymmetry} was originally introduced as a measure of how much a quantum state locally breaks a global symmetry \cite{ares_asymmetry_2023}. This measure has recently been generalized to broader classes of symmetries, including $p$-form and non-invertible symmetries~\cite{benini_asymmetry_2025,lamas_higher-form_2025,benini_higher-form_2026}. In particular, the entanglement asymmetry can capture the spontaneous breaking of 1-form symmetries in topological phases~\cite{lamas_higher-form_2025,benini_asymmetry_2025}, and noncontractible annuli provide a natural setting for its definition, since the construction is invariant under topology-preserving deformations.

In the present setting, Proposition~\ref{pro:symd} shows that the relevant symmetries need not factor into independent $1$-form symmetries on $l_x$ and $l_y$. Nevertheless, apart from relative phases between MESs, their actions are fully captured by permutations of the MESs associated with a noncontractible annulus. We therefore formulate the entanglement asymmetry in terms of the canonical effective permutation group $F_{x(y)}$. This formulation avoids choosing a noncanonical set of microscopic unitary representatives and ensures that the resulting quantity depends only on the symmetry action visible to the ICS on noncontractible annuli. Before defining the entanglement asymmetry, we first establish several useful propositions and introduce the necessary definitions.

\begin{proposition}\label{pro:rhox}
An arbitrary $\rho_X$ with $X \in \mathtt{A}_{x(y)}$, obtained as the reduced density matrix on $X$ of $\rho \in \Sigma(\mathtt{R})$, can be uniquely expressed as
\begin{equation}
\label{eq:rho_x_MES_ICS}
\rho_X = \sum_{\alpha \in \mathfrak{C}_{x(y)}^{\tau_{y(x)}}} p_\alpha \rho_X^\alpha,
\end{equation}
where $\{p_\alpha\}$ is a probability distribution.
\begin{proof}
By Proposition~\ref{pro:rho}, $\rho$ can be expressed as
\begin{align}
\rho=\sum_{i} q_i |\psi_i\rangle \langle\psi_i|,
\end{align}
where $|\psi_i\rangle=\sum_{\alpha \in \mathfrak{C}_{x(y)}^{\tau_{y(x)}}} c_\alpha^i |\alpha \rangle$, and $\{q_i\}$ is a probability distribution. Here, each $|\alpha\rangle$ is normalized, and the coefficients satisfy
\begin{align}
\sum_{\alpha \in \mathfrak{C}_{x(y)}^{\tau_{y(x)}}} 
|c_\alpha^i|^2=1 .
\end{align}

Then,
\begin{align}
\rho_X&={\rm Tr}_{\mathtt{R} \setminus X} (\rho)\nn
&=\sum_i \sum_{\alpha,\beta \in \mathfrak{C}_{x(y)}^{\tau_{y(x)}}} q_i c_\alpha^i (c_\beta^i)^* {\rm Tr}_{\mathtt{R} \setminus X} (|\alpha \rangle \langle \beta|)\nn
&=\sum_i\sum_{\alpha \in \mathfrak{C}_{x(y)}^{\tau_{y(x)}}} q_i |c_\alpha^i|^2 {\rm Tr}_{\mathtt{R} \setminus X} (|\alpha \rangle \langle \alpha|)\nn
&=\sum_i\sum_{\alpha \in \mathfrak{C}_{x(y)}^{\tau_{y(x)}}} q_i |c_\alpha^i|^2 \rho_X^\alpha\nn
&=\sum_{\alpha \in \mathfrak{C}_{x(y)}^{\tau_{y(x)}}} p_\alpha \rho_X^\alpha,
\end{align}
where $p_\alpha =\sum_i q_i |c_\alpha^i|^2 \geq 0$. Moreover,
\begin{equation}
    \sum_{\alpha}p_\alpha
    =
    \sum_iq_i\sum_{\alpha}|c_\alpha^i|^2
    =
    1.
\end{equation}
The decomposition is unique because the states $\rho_X^\alpha$ have mutually orthogonal supports.

To eliminate the off-diagonal terms, we use
\begin{equation}
    {\rm Tr}_{\mathtt{R} \setminus X} (|\alpha \rangle \langle \beta|)={\rm Tr}_{\mathtt{R} \setminus X} (|\alpha \rangle \langle \alpha|)\delta_{\alpha, \beta}.
\end{equation}
This equation follows from the property of MESs that there exists a projector $\hat{P}^\alpha$ supported on the annulus $\mathtt{R} \setminus X$ such that $\hat{P}^\alpha |\beta\rangle = \delta_{\alpha,\beta}|\beta\rangle$. This property, in turn, follows from the fact that the reduced density matrices of distinct MESs are extreme points of the simplex $\Sigma(\mathtt{R}\setminus X)$ with mutually orthogonal supports (Definition~\ref{def:MES}). Then, ${\rm Tr}_{\mathtt{R} \setminus X} (|\alpha \rangle \langle \beta|) = {\rm Tr}_{\mathtt{R} \setminus X} (|\alpha \rangle \langle \beta|\hat{P}^\beta)={\rm Tr}_{\mathtt{R} \setminus X} ( \hat{P}^\beta|\alpha \rangle \langle \beta|)=\delta_{\alpha,\beta}{\rm Tr}_{\mathtt{R} \setminus X}(|\alpha \rangle \langle \alpha|)$ as claimed.
\end{proof}

\end{proposition}
\noindent Thus, $\rho_X$ belongs to the convex hull of the extreme
points labeled by the elements of $\mathfrak{C}_{x(y)}^{\tau_{y(x)}}$.

\begin{definition}\label{def:grhox}
Let $X\in\mathtt{A}_{x(y)}$ and let
\begin{equation*}
    \rho_X
    =
    \Tr_{\mathtt{R}\setminus X}\rho,
    \qquad
    \rho\in\Sigma(\mathtt{R}).
\end{equation*}
By Proposition~\ref{pro:rhox}, $\rho_X$ has the unique decomposition
\begin{equation*}
    \rho_X
    =
    \sum_{\alpha\in
    \mathfrak{C}_{x(y)}^{\tau_{y(x)}}}
    p_\alpha\rho_X^\alpha,
\end{equation*}
where $\{p_\alpha\}$ is a probability distribution. For
$g\in F_{x(y)}$, we define
\begin{equation*}
g\cdot\rho_X=\sum_{\alpha \in \mathfrak{C}_{x(y)}^{\tau_{y(x)}}} p_\alpha\rho_X^{g(\alpha)}.
\end{equation*}
\end{definition}

\noindent Then, we can show the following proposition.

\begin{proposition}
Let $\rho\in\Sigma(\mathtt{R})$ and let $X\in\mathtt{A}_{x(y)}$. For every $g\in F_{x(y)}$ and every symmetry $U_g\in\mathcal{G}(\mathtt{R})$ satisfying
\begin{equation}
    \pi_{x(y)}([U_g])=g,
\end{equation}
we have
\begin{equation}
    g\cdot\rho_X
    =
    \Tr_{\mathtt{R}\setminus X}
    \left(U_g\rho U_g^\dagger\right).
\end{equation}
In particular, the reduced state on the right-hand side depends on only $g$ and $\rho_X$, and is independent of the choice of representative $U_g$.
    
\begin{proof}
The state $\rho$ can be expanded in the MES basis as
\begin{equation}
\rho=\sum_{\alpha,\beta \in \mathfrak{C}_{x(y)}^{\tau_{y(x)}}} c_{\alpha\beta}|\alpha\rangle\langle\beta|,
\end{equation}
with $c_{\alpha\alpha}=p_\alpha$.
Since $\pi_{x(y)}([U_g])=g$, we have
$U_g|\alpha\rangle=e^{i\theta_\alpha}|g(\alpha)\rangle$.
Using $\Tr_{\mathtt{R}\setminus X} \left(|\alpha\rangle\langle\beta|\right)=\delta_{\alpha,\beta}\rho_X^\alpha$, as used in Proposition~\ref{pro:rhox}, together with the fact that $g$ is a permutation, we obtain
\begin{align}
    \Tr_{\mathtt{R}\setminus X}
    \left(U_g\rho U_g^\dagger\right)
    &=
    \sum_{\alpha}
    c_{\alpha\alpha}\rho_X^{g(\alpha)}
    \nn
    &=
    \sum_{\alpha}
    p_\alpha\rho_X^{g(\alpha)}
    =
    g\cdot\rho_X.
\end{align}
The final expression depends only on $g$ and $\rho_X$ and is therefore independent of the choice of representative $U_g$.
\end{proof}
\end{proposition}

We now define the entanglement asymmetry as follows.

\begin{definition}[Entanglement asymmetry]
\label{def:entanglement-asymmetry}
For $\rho\in\Sigma(\mathtt R)$ and
$X\in\mathtt{A}_{x(y)}$, define
\begin{equation*}
    \rho_X^{\mathrm{sym}}
    \equiv 
    \frac{1}{|F_{x(y)}|}
    \sum_{g\in F_{x(y)}}g\cdot\rho_X.
\label{eq:symmetrized-reduced-state}
\end{equation*}
The entanglement asymmetry is~\footnote{Note that the definition of $\rho_X^{\mathrm{sym}}$ above differs slightly from that used in Refs.~\onlinecite{ares_asymmetry_2023,lamas_higher-form_2025}, where $\rho_X^{\mathrm{sym}}$ is defined as
\begin{align*}
\rho_X^{\mathrm{sym}}= \frac{1}{|F_{x(y)}|} \sum_{g \in F_{x(y)}} U_{g|X} \rho_X U_{g|X}^\dagger,
\end{align*}
with the symmetry operator $U_g$ assumed to factorize as $U_g = U_{g|X} U_{g|\mathtt{R}\setminus X}$, where $U_{g|X}$ and $U_{g|\mathtt{R}\setminus X}$ denote the restrictions of $U_g$ to $X$ and $\mathtt{R}\setminus X$, respectively. In our setting, however, such a factorization need not always exist, since $U_g$ may be a non-onsite unitary operator. Definition~\ref{def:entanglement-asymmetry} resolves this issue by remaining well defined even for non-onsite operators $U$. Moreover, whenever $U_g$ is factorizable, our definition reduces to the standard one used in the literature.}
\begin{equation*}
    \Delta S_{X_{x(y)}}[\rho]
    \equiv 
    S(\rho_X^{\mathrm{sym}})-S(\rho_X) .
\label{eq:entanglement-asymmetry}
\end{equation*}
\end{definition}

\noindent We can now describe the geometric interpretation of the entanglement asymmetry within the information convex picture. To make this precise, we first establish the following propositions.

\begin{proposition}\label{pro:rhoxsym}
For $X \in \mathtt{A}_{x(y)}$, the symmetrized density matrix $\rho_X^{\rm sym}$ can be written as
\begin{align}
\rho_X^{\rm sym}=\frac{1}{|F_{x(y)}|}\sum_{\alpha \in \mathfrak{C}_{x(y)}^{\tau_{y(x)}}} \rho_X^\alpha,
\label{eq:rho_sym_MES}
\end{align}
for an arbitrary $\rho \in \Sigma(\mathtt{R})$.
\begin{proof}
By Proposition~\ref{pro:Fx}, the action of $F_{x(y)}$ on $\mathfrak C_{x(y)}^{\tau_{y(x)}}$ is free and transitive. Therefore, for each fixed $\alpha$, the map $g\mapsto g(\alpha)$ is a bijection from $F_{x(y)}$ to $\mathfrak C_{x(y)}^{\tau_{y(x)}}$. By Definition~\ref{def:grhox},
\begin{align}
\rho_X^{\mathrm{sym}} &= \frac{1}{|F_{x(y)}|} \sum_{g\in F_{x(y)}} \sum_{\alpha\in\mathfrak C_{x(y)}^{\tau_{y(x)}}} p_\alpha\rho_X^{g(\alpha)} \nn
&=\frac{1}{|F_{x(y)}|} \sum_{\beta\in\mathfrak C_{x(y)}^{\tau_{y(x)}}}p_\beta \sum_{\alpha\in\mathfrak C_{x(y)}^{\tau_{y(x)}}}\rho_X^\alpha\nn
&= \frac{1}{|F_{x(y)}|} \sum_{\alpha\in\mathfrak C_{x(y)}^{\tau_{y(x)}}}\rho_X^\alpha.
\end{align}
\end{proof}
\end{proposition}
\noindent Thus, $\rho_X^{\mathrm{sym}}$ is the uniform barycenter of the convex hull of the extreme points labeled by $\mathfrak C_{x(y)}^{\tau_{y(x)}}$. Hence,
$\Delta S_{X_{x(y)}}[\rho]$ is the difference between the von Neumann entropy of this uniform barycenter and that of $\rho_X$.

\begin{theorem} [Entanglement asymmetry formula] \label{th:ea}
Let $\rho_X = \sum_{\alpha \in \mathfrak{C}_{x(y)}^{\tau_{y(x)}}} p_\alpha \rho_X^\alpha$. The entanglement asymmetry of $\rho$ on $X \in \mathtt{A}_{x(y)}$ is given by
\begin{equation}
\label{eq:asymmetry_general}
\Delta S_{X_{x(y)}}[\rho]=\log \bigl|F_{x(y)}\bigr|+ \sum_{\alpha \in \mathfrak{C}_{x(y)}^{\tau_{y(x)}}} p_\alpha \log p_\alpha,
\end{equation}
and is bounded above by
\begin{equation}
\Delta S_{X_{x(y)}}^{\max}
=
\log \bigl|F_{x(y)}\bigr|,
\end{equation}
with the maximum attained when $\rho_X=\rho_X^\alpha$ for some $\alpha \in \mathfrak{C}_{x(y)}^{\tau_{y(x)}}$.
\begin{proof}
Using Propositions~\ref{pro:rhox} and~\ref{pro:rhoxsym}, together with Lemma~\ref{lem:sameQD}, we obtain that
\begin{align}
S(\rho_X)&=-\sum_{\alpha \in \mathfrak{C}_{x(y)}^{\tau_{y(x)}}} p_\alpha\log p_\alpha+S(\rho_X^{\alpha'}),\nn
S(\rho_X^{{\rm sym}})&=\log |F_{x(y)}|+S(\rho_X^{\alpha'}),
\end{align}
where $\alpha' \in \mathfrak{C}_{x(y)}^{\tau_{y(x)}}$. Therefore, the  entanglement asymmetry is given by
\begin{equation}
\Delta S_{X_{x(y)}}[\rho]=\log \bigl|F_{x(y)}\bigr|+ \sum_{\alpha \in \mathfrak{C}_{x(y)}^{\tau_{y(x)}}} p_\alpha \log p_\alpha,
\end{equation} which, in the absence of transparent domain walls, matches the comparable result in Ref.~\onlinecite{lamas_higher-form_2025}.
Since $\sum_{\alpha \in \mathfrak{C}_{x(y)}^{\tau_{y(x)}}} p_\alpha\log p_\alpha \leq 0$, the entanglement asymmetry is bounded above by $\log \bigl|F_{x(y)}\bigr|$, and this maximum is attained for the extreme points labeled by $\mathfrak{C}_{x(y)}^{\tau_{y(x)}}$.
\end{proof}
\end{theorem}

\begin{proposition} \label{pro:maximal-entanglement-asymmetry}
Let $X\in\mathtt{A}_x$ and $X'\in \mathtt{A}_y$. Then
\begin{equation}
\Delta S_{X_x}^{\rm max}=\Delta S_{X'_y}^{\rm max}= \log N(\mathtt R).
\label{eq:maximal-entanglement-asymmetry}
\end{equation}
Using the established identification $N(\mathtt R)=\mathrm{GSD}$,
both maxima equal $\log\mathrm{GSD}$.

\begin{proof}
By Theorem~\ref{th:ea} and Proposition~\ref{pro:Fx}, we have
\begin{align}
\Delta S_{X_x}^{\rm max}
&=
\log\left|\mathfrak C_x^{\tau_y}\right|,
&
\Delta S_{X'_y}^{\rm max}
&=
\log\left|\mathfrak C_y^{\tau_x}\right|.
\end{align}
Proposition~\ref{pro:GSD} further implies
$\left|\mathfrak C_x^{\tau_y}\right|=\left|\mathfrak C_y^{\tau_x}\right|=N(\mathtt R)$.
Therefore, \eqref{eq:maximal-entanglement-asymmetry} follows.
\end{proof}
\end{proposition}

We can conclude that $\Delta S_{X_{x(y)}}^{\rm max}$ reflects the presence of transparent domain walls along both $l_x$ and $l_y$. Indeed, according to Proposition~\ref{pro:Ctau}, $|\mathfrak{C}_{x(y)}^{\tau_{y(x)}}|$ gives the number of extreme points of the ICS on $X \in \mathtt{A}_{x(y)}$ that remain invariant under the automorphism $\tau_{y(x)}$. The total number of extreme points of the ICS on $X \in \mathtt{A}_{x(y)}$ reflects the presence of a transparent domain wall crossing the subregion $X$, while the automorphism $\tau_{y(x)}$ encodes the presence of a transparent domain wall parallel to the boundary of $X$. Therefore, $\Delta S_{X_{x(y)}}^{\rm max}$ is sensitive to the presence of transparent domain walls along both $l_x$ and $l_y$. In contrast, the TEE can generically obey $\gamma_{\text{LW},x}^{\rm max}\neq \gamma_{\text{LW},y}^{\rm max}$, as it captures the constraints on the entanglement entropy of a noncontractible subregion in each direction independently.

In the absence of transparent domain walls, $\gamma_{\text{LW},x}^{\rm max}=\gamma_{\text{LW},y}^{\rm max}=\log \mathcal{D}$ by Theorem~\ref{th:TEE} and $\Delta S_{X_{x(y)}}^{\max}=2\log \mathcal{D}$ by Theorem~\ref{th:ea}. Thus, after accounting for this factor of two, a genuine distinction between the maximal TEE and the maximal entanglement asymmetry becomes manifest only in the presence of transparent domain walls.
 
In the next section, we show examples where $\gamma_{\text{LW},x}^{\rm max}\neq \gamma_{\text{LW},y}^{\rm max}$, while the maximal entanglement asymmetry is the same in both directions.

\section{Lattice model Examples}
\label{sec:5}

In this section, we apply the general results established above to determine the maximal values of the TEE and the entanglement asymmetry, as well as the symmetry structure of the reference state, taken to be the ground state of the parent lattice model. The required inputs are the anyon types, the GSD, and the automorphisms associated with transparent domain walls. This information can be obtained from the reference state and its parent Hamiltonian; the derivation of the automorphisms is given in Appendix~\ref{asec:new}.

We consider three explicit parent lattice models: Wen's plaquette model~\cite{wen_quantum_2003, you_projective_2012}, the anisotropic dipolar toric code~\cite{ebisu_anisotropic_2023}, and the rank-2 toric code~\cite{oh22a,pace-wen,oh22b,oh23,kim_unveiling_2025,kim_gauging_2026}. Heuristically, their ground states satisfy $A_0$ and $A_1$ because these axioms are local, and each model is locally equivalent to one or more copies of the toric code on sufficiently small regions. As a concrete example, we consider the anisotropic dipolar toric code and rigorously show that its ground state satisfies ${\bf A0}$ and ${\bf A1}$ in Appendix~\ref{asec:adtc}. We expect this framework to apply more broadly to topological lattice models exhibiting UV/IR mixing~\cite{oh22a, pace-wen, oh22b, oh23, watanabe23, ebisu_symmetric_2023, kim_unveiling_2025, delfino23, ebisu24, ebisu25, kim_gauging_2026, delfino_topological_2026, bravyi_high-threshold_2024}.

\subsection{Wen's plaquette model}
\label{sec:5a}

Consider an $L_x \times L_y$ square lattice with periodic boundary conditions. Wen's plaquette model is locally equivalent to a single copy of the toric code. We therefore label the anyons and the corresponding fusion rules as follows:
\begin{align}
&{\cal C}=\{1, e, m, f  \}, \nn
&e \times e=m\times m=1, & e\times m=f.
\end{align}
The GSD is $4$ for even $L_x$ and even $L_y$, while it is $2$ in all other cases. When both $L_x$ and $L_y$ are even, only the trivial automorphism occurs. In contrast, when $L_x$ ($L_y$) is odd, transporting an anyon in the unit cell along $l_x$ ($l_y$) induces the nontrivial automorphism
\begin{align}
&\phi_{x(y)}(1)=1, &&\phi_{x(y)}(e)=m, \nn
&\phi_{x(y)}(m)=e, &&\phi_{x(y)}(f)=f.
\end{align}
Therefore, we consider three cases: (i) no transparent domain wall, (ii) a transparent domain wall along $l_x$, and (iii) transparent domain walls along both $l_x$ and $l_y$.

In the absence of transparent domain walls, there are 4 MESs associated with $X \in \mathtt{A}_{x(y)}$, and the ICS has four extreme points for $X \in \mathtt{A}_{x(y)}$. Then, the maximum values of the TEE and the  entanglement asymmetry are given by
\begin{align}
\gamma_{{\rm LW},x(y)}^{\rm max}&=\log 2, & \Delta S_{X_{x(y)}}^{\rm max}&= 2\log 2.
\end{align}
Note that these results are consistent with those of Ref.~\onlinecite{lamas_higher-form_2025}.

To analyze the symmetries connecting MESs associated with $X \in \mathtt{A}_{x(y)}$, we need to understand the structure of the fusion rules. Let us label extreme points of the ICS on $X \in \mathtt{A}_{x(y)}$ as
\begin{align}
\mathfrak{C}_{x(y)}=\{\tilde{1}_{x(y)}, \tilde{e}_{x(y)}, \tilde{m}_{x(y)}, \tilde{f}_{x(y)}\}.
\end{align}
Choosing one extreme point of the ICS on $X \in \mathtt{A}_{x(y)}$ as $\tilde{1}_{x(y)}$, we define $\tilde{a}_{x(y)}$ as the extreme point obtained by fusing $a\in \mathcal{C}$ with $\tilde{1}_{x(y)}$. The corresponding fusion rules are then determined as follows, based on Proposition~\ref{pro:fusion}.

\begin{align}
&e \times \tilde{1}_{x(y)}=\tilde{e}_{x(y)} & &m \times \tilde{1}_{x(y)}=\tilde{m}_{x(y)} & &f \times \tilde{1}_{x(y)}=\tilde{f}_{x(y)}\nn
&e \times \tilde{e}_{x(y)}=\tilde{1}_{x(y)} & &m \times \tilde{e}_{x(y)}=\tilde{f}_{x(y)} & &f \times \tilde{e}_{x(y)}=\tilde{m}_{x(y)}\nn
&e \times \tilde{m}_{x(y)}=\tilde{f}_{x(y)} & &m \times \tilde{m}_{x(y)}=\tilde{1}_{x(y)} & &f \times \tilde{m}_{x(y)}=\tilde{e}_{x(y)}\nn
&e \times \tilde{f}_{x(y)}=\tilde{m}_{x(y)} & &m \times \tilde{f}_{x(y)}=\tilde{e}_{x(y)} & &f \times \tilde{f}_{x(y)}=\tilde{1}_{x(y)}.\label{eq:pfusion}
\end{align}
It is straightforward that every $\alpha \in \mathfrak{C}_{x(y)}$ belongs to $\mathfrak{C}_{x(y)}^{\tau_{y(x)}}$, since the cardinality of $\mathfrak{C}_{x(y)}$ equals the number of MESs associated with $X \in \mathtt{A}_{x(y)}$. Moreover, \eqref{eq:pfusion} implies that different anyon types produce distinct fusion outcomes for the elements of $\mathfrak{C}_{x(y)}$. Since the fusion rules reflect the outcomes of anyon tunneling, we conclude that transporting different anyon types along either fixed cycle $l_x$ or $l_y$ yields distinct $1$-form symmetries. This symmetry structure is consistent with the conventional interpretation of topological order in terms of spontaneous $1$-form symmetry breaking.

In the presence of a transparent domain wall along $l_x$, there are two MESs associated with $X \in \mathtt{A}_{x(y)}$, while the ICS has four extreme points for $X \in \mathtt{A}_{x}$ and two extreme points for $X \in \mathtt{A}_{y}$. The maximum values of the TEE and the  entanglement asymmetry are given by
\begin{align}
\gamma_{{\rm LW},x}^{\rm max}&=\log 2, & \gamma_{{\rm LW},y}^{\rm max}&=\frac{1}{2} \log 2,
& \Delta S_{X_{x(y)}}^{\rm max}&= \log 2.
\end{align}
In this case, the maximal TEE depends on the orientation of the noncontractible annulus, reflecting the fact that the annulus can detect only transparent domain walls that cross it. In contrast, the maximal entanglement asymmetry takes the same value for the two orientations.

Let us label extreme points of the ICS on $X \in \mathtt{A}_{x(y)}$ as
\begin{align}
&\mathfrak{C}_{x}=\{\tilde{1}_x, \tilde{e}_x, \tilde{m}_x, \tilde{f}_x \}, & &\mathfrak{C}_{y}=\{\tilde{\sigma}_+, \tilde{\sigma}_- \},
\end{align}
in analogy with the notation used in Ref.~\onlinecite{bombin_topological_2010}. The physical interpretation of labels $\tilde{\sigma}_\pm$ will be discussed at the end of this example.

The fusion rules of $\mathfrak{C}_x$ and $\cal C$ are the same as those given in \eqref{eq:pfusion}. The nontrivial automorphism acting on $\mathfrak{C}_x$ is summarized as follows:
\begin{align}
&\tau_y(\tilde{1}_x)=\tilde{1}_x & &\tau_y(\tilde{e}_x)=\tilde{m}_x\nn
&\tau_y(\tilde{m}_x)=\tilde{e}_x & &\tau_y(\tilde{f}_x)=\tilde{f}_x.
\end{align}
This implies that $\mathfrak{C}_x^{\tau_y}=\{\tilde{1}_x,\tilde{f}_x\}$. Moreover, transporting $1$ or $f$ along $l_y$ defines a symmetry, which is also a 1-form symmetry.

Using Theorem~\ref{th:NNNtau}, we summarize the fusion rules between $\mathfrak{C}_y$ and $\mathcal{C}$ as follows:
\begin{align}
&1 \times \tilde{\sigma}_+=f \times \tilde{\sigma}_+=\tilde{\sigma}_+ & &1 \times \tilde{\sigma}_-=f \times \tilde{\sigma}_-=\tilde{\sigma}_-\nn
&e \times \tilde{\sigma}_+=m \times \tilde{\sigma}_+=\tilde{\sigma}_- & &e \times \tilde{\sigma}_-=m \times \tilde{\sigma}_-=\tilde{\sigma}_+.\label{eq:pfusion2}
\end{align}
Here, we conclude that $\mathfrak{C}_{y}^{\tau_x}=\{\tilde{\sigma}_+,\tilde{\sigma}_-\}$. Moreover, the relations $1 \overset{y}{\sim} f$ and $e \overset{y}{\sim} m$, derived from \eqref{eq:pfusion2}, imply that the two distinct symmetry classes can be represented by $1$-form symmetry operators associated with transporting $1$ and $e$ along $l_x$, respectively. This symmetry structure is again consistent with the interpretation of topological order in terms of spontaneous $1$-form symmetry breaking.

For the final case, we consider transparent domain walls along both $l_x$ and $l_y$. There are two MESs associated with $X \in \mathtt{A}_{x(y)}$, and the ICS has two extreme points for $X \in \mathtt{A}_{x(y)}$. The maximum values of the TEE and the entanglement asymmetry are given by
\begin{align}
\gamma_{{\rm LW},x(y)}^{\rm max}&=\frac{1}{2} \log 2,\nn
\Delta S_{X_{x(y)}}^{\rm max}&= \log 2.
\end{align}

Let us label extreme points of the ICS on $X \in \mathtt{A}_{x(y)}$ as
\begin{align}
&\mathfrak{C}_{x(y)}=\{\tilde{\sigma}_{+,x(y)}, \tilde{\sigma}_{-,x(y)} \}.
\end{align}
Then the fusion rules of $\mathfrak{C}_{x(y)}$ and $\cal C$ are the same as those given in \eqref{eq:pfusion2}. Hence, $\mathfrak{C}_{x(y)}=\mathfrak{C}_{x(y)}^{\tau_{y(x)}}$. The symmetries are realized either by transporting $1$ or $f$ along $l_x$ or $l_y$, or by transporting one $e$ along $l_x$ and another $e$ along $l_y$. The latter process, involving the transport of two anyons along different noncontractible loops, is necessary within the fixed fundamental-cycle decomposition $(l_x,l_y)$, because transport of $e$ or $m$ along either $l_x$ or $l_y$ alone does not define a symmetry in the presence of the nontrivial automorphism. Moreover, since $e \overset{x(y)}{\sim} m$ is the only nontrivial equivalence involving $e$ and $\phi_x(m)=\phi_y(m)=e$, this construction provides an explicit example illustrating Proposition~\ref{pro:symd}. A schematic illustration can be found in Fig.~\ref{fig:sym}(b). We note, however, that a process inducing the same fusion permutation can be realized by transporting $e$ along $l_{(1,1)}$, which corresponds to a 1-form symmetry.

We now examine the ground state degeneracy and the spontaneous symmetry breaking interpretation using these symmetries. When we choose a noncontractible annulus in $\mathtt{A}_x$, the two extreme points of its ICS remain unchanged under transport of $1$ or $f$ along $l_x$. By the fusion rules, they also remain unchanged under transport of $1$ or $f$ along $l_y$. Among these symmetry actions, only the combined process of transporting one $e$ along $l_x$ and another $e$ along $l_y$ permutes the extreme points. Thus, the combined symmetry exchanges the two corresponding MESs, whereas the individual transport symmetries preserve each MES up to a phase. These two MESs form a basis for the ground state space, giving a ground state degeneracy of $2$. This structure provides a basis for discussing spontaneous symmetry breaking in this topological phase. An analogous description applies when we choose a noncontractible annulus in $\mathtt{A}_y$. One remark here is that, in a sense, the topological phase can still be interpreted in terms of spontaneous $1$-form symmetry breaking because the combined process of transporting one $e$ along $l_x$ and another $e$ along $l_y$ is equivalent to transporting $e$ along $l_{(1,1)}$.

To conclude our discussion of Wen's plaquette model, we clarify the interpretation of the extreme points labeled $\tilde \sigma_\pm$ or $\tilde \sigma_{\pm,x(y)}$. Although both $\{\tilde{1}_x, \tilde{e}_x, \tilde{m}_x, \tilde{f}_x \}$ and $\{\tilde \sigma_{+},\tilde\sigma_{-}\}$ (or $\{\tilde \sigma_{+,x(y)},\tilde\sigma_{-,x(y)}\}$) label extreme points of the ICSs on noncontractible annuli, only the labels in the former set are chosen by analogy with the anyon types of Wen's plaquette model. In contrast, the labels $\tilde{\sigma}_{\pm}$ are chosen by analogy with the extrinsic twist point defects terminating a domain wall that implements the exchange $e\leftrightarrow m$, as discussed in Ref.~\onlinecite{bombin_topological_2010}. Specifically, the fusion rules in \eqref{eq:pfusion2} and the associated quantum dimensions coincide with those of the twist defects $\sigma_{\pm}$ introduced in Ref.~\onlinecite{bombin_topological_2010}, motivating this notation.

This interpretation is further supported by the construction in Appendix~\ref{asec:pd}, where we show that the ICS on a local annulus enclosing a twist point defect is isomorphic to that on a noncontractible annulus crossing a domain wall connecting the twist point defects. Note that transporting one defect around a noncontractible cycle of the torus and subsequently annihilating the defect pair converts this open wall into a closed transparent domain wall.


\subsection{Anisotropic dipolar toric code}

Consider an $L_x \times L_y$ lattice having periodic boundary conditions. The anisotropic dipolar toric code is locally equivalent to two copies of the $\mathbb{Z}_N$ toric code. We therefore label the anyons and the corresponding fusion rules as follows:
\begin{align}
&\mathcal{C}=\left\{
e_1^{n_1} m_1^{n_2} e_2^{n_3} m_2^{n_4}
\;\middle|\;
n_1,n_2,n_3,n_4 \in \mathbb{Z}_N
\right\},\nn
&e_{1(2)}^N=m_{1(2)}^N=1,
\end{align}
where $e_{1(2)}$ and $m_{1(2)}$ denote the elementary anyons in the first (second) copy of the toric code. The GSD is $N^2\gcd(L_x,N)^2$, where $\gcd$ is the greatest common divisor. 

In general, transporting an anyon along $l_x$ induces the nontrivial automorphism
\begin{align}
\phi_{x}(e_1)&=e_1\times m_2^{(-L_x~{\rm mod}~N)},\nn
\phi_{x}(e_2)&=e_2 \times m_1^{-(-L_x~{\rm mod}~N)},\nn
\phi_{x}(m_1)&=m_1,\nn
\phi_{x}(m_2)&=m_2,
\end{align}
so there is a nontrivial transparent domain wall along $l_y$ when $L_x$ is not an integer multiple of $N$. The number of extreme points of the ICS on $X \in \mathtt{A}_x$ is $N^2\gcd(L_x,N)^2$, while that on $X \in \mathtt{A}_y$ is $N^4$; see Appendix~\ref{asec:new} for details.

We summarize the maximum values of the TEE and the entanglement asymmetry as follows:
\begin{align}
\gamma_{{\rm LW},x}^{\rm max}&=\log [N\gcd(L_x,N)]\nn
\gamma_{{\rm LW},y}^{\rm max}&= 2\log N\nn
\Delta S_{X_{x(y)}}^{\rm max}&=2 \log [N\gcd(L_x,N)].
\end{align}
Explicit calculations of these quantities can be found in Appendices~\ref{asec:C} and~\ref{asec:D}.

To make the nontrivial fusion rules explicit, and to analyze the symmetries connecting MESs associated with $X \in \mathtt{A}_{x(y)}$, we specialize to $N=2$ with odd $L_x$. Accordingly, for $X \in \mathtt{A}_{x}$, there are four MESs and, correspondingly, four extreme points of the ICS. Let us label extreme points of the ICS on $X \in \mathtt{A}_{x}$ as
\begin{align}
\mathfrak{C}_{x}=\{\tilde{\alpha}_{(0,0)}, \tilde{\alpha}_{(0,1)}, \tilde{\alpha}_{(1,0)}, \tilde{\alpha}_{(1,1)}\}.
\end{align}
Choosing $\tilde{\alpha}_{(0,0)}$ as an element of $\mathfrak{C}_{x}^{\tau_y}$, and using Lemma~\ref{lem:equiv2}, we can conclude that
\begin{align}
1\overset{x}{\sim}m_1\overset{x}{\sim}m_2\overset{x}{\sim}m_1\times m_2
\end{align}
and we define other elements of $\mathfrak{C}_{x}$ as
\begin{align}
&e_1\times \tilde{\alpha}_{(0,0)}= \tilde{\alpha}_{(1,0)}\nn
&e_2\times \tilde{\alpha}_{(0,0)}= \tilde{\alpha}_{(0,1)}.
\end{align}
It is straightforward that all elements in $\mathfrak{C}_{x}$ exhibit trivial automorphisms, since the cardinality of $\mathfrak{C}_{x}$ equals the number of MESs associated with $X \in \mathtt{A}_{x}$. Moreover, the anyon types $1$, $e_1$, $e_2$, and $e_1e_2$ produce distinct fusion outcomes for the elements of $\mathfrak{C}_{x}$. Transporting each of these anyon types along $l_y$ therefore defines a distinct $1$-form symmetry.

For $X \in \mathtt{A}_{y}$, there are four MESs and $16$ extreme points of the ICS. Let us label extreme points of the ICS on $X \in \mathtt{A}_{y}$ as
\begin{align}
\mathfrak{C}_{y}=\{\tilde{\alpha}_{(a,b,c,d)}| a,b,c,d \in \mathbb{Z}_2\}.
\end{align}
Choosing $\tilde{\alpha}_{(0,0,0,0)}$ as an element of $\mathfrak{C}_{y}^{\tau_x}$, we define other elements of $\mathfrak{C}_{y}$ as
\begin{align}
&e_1^a\times e_2^b \times m_1^c \times m_2^d \times \tilde{\alpha}_{(0,0,0,0)}= \tilde{\alpha}_{(a,b,c,d)}.
\end{align}
Here, $\mathfrak{C}_y^{\tau_x}=\{\tilde{\alpha}_{(0,0,0,0)}, \tilde{\alpha}_{(0,0,0,1)}, \tilde{\alpha}_{(0,0,1,0)}, \tilde{\alpha}_{(0,0,1,1)} \}$. Moreover, the anyons $1$, $m_1$, $m_2$, and $m_1m_2$ all connect elements of $\mathfrak{C}_{y}^{\tau_x}$. Therefore, transporting any of these anyons along $l_x$ is a symmetry, and all such symmetries are 1-form symmetries.

\subsection{Rank-2 toric code}
\label{sec:5c}

Consider an $L_x \times L_y$ lattice with periodic boundary conditions. The rank-2 toric code is locally equivalent to three copies of the $\mathbb{Z}_N$ toric code. We therefore label the anyons and the corresponding fusion rules as follows:
\begin{align}
&\mathcal{C}=\left\{
e_1^{n_1} m_1^{n_2} e_2^{n_3} m_2^{n_4} e_3^{n_5} m_3^{n_6}
\;\middle|\;
n_1,n_2,n_3,n_4,n_5,n_6 \in \mathbb{Z}_N
\right\},\nn
&e_{1}^N=m_{1}^N=e_{2}^N=m_{2}^N=e_{3}^N=m_{3}^N=1,
\end{align}
where the subscripts $1$, $2$, and $3$ denote the first, second, and third copies of the toric code, respectively. The GSD is $N^3\gcd(L_x,N)\gcd(L_y,N)\gcd(L_x,L_y,N)$, where $\gcd(L_x,L_y,N)$ is the greatest common divisor of $L_x$, $L_y$, and $N$.

In general, transporting an anyon in the unit cell along $l_x$ ($l_y$) induces the nontrivial automorphism
\begin{align}
\phi_{x}(e_1)&=e_1 \times e_2^{(-L_x ~{\rm mod }~N)},\nn
\phi_{y}(e_1)&=e_1 \times e_3^{(-L_y ~{\rm mod }~N)},\nn
\phi_{x}(e_2)&=\phi_{y}(e_2)=e_2,\nn
\phi_{x}(e_3)&=\phi_{y}(e_3)=e_3,\nn
\phi_{x}(m_1)&=\phi_{y}(m_1)=m_1,\nn
\phi_{x}(m_2)&=m_2 \times m_1^{-(-L_x ~{\rm mod }~N)},\nn
\phi_{y}(m_2)&=m_2,\nn
\phi_{x}(m_3)&=m_3,\nn
\phi_{y}(m_3)&=m_3 \times m_1^{-(-L_y ~{\rm mod }~N)},
\end{align}
so a nontrivial transparent domain wall appears along $l_y$ when $L_x$ is not divisible by $N$, while one appears along $l_x$ when $L_y$ is not divisible by $N$. The number of extreme points of the ICS on $X \in \mathtt{A}_x$ is $N^4 \gcd(L_x,N)^2$, whereas for $X \in \mathtt{A}_y$ it is $N^4 \gcd(L_y,N)^2$; see Appendix~\ref{asec:new} for details.

We summarize the maximum values of the TEE and the entanglement asymmetry as follows:
\begin{align}
\gamma_{{\rm LW},x}^{\rm max}
&=
\log [N^2\gcd(L_x,N)], \\
\gamma_{{\rm LW},y}^{\rm max}
&=
\log [N^2\gcd(L_y,N)],\nn
\Delta S_{X_{x(y)}}^{\rm max}&=\log [N^3\gcd(L_x,N)\gcd(L_y,N) \gcd(L_x,L_y,N)].
\end{align}

To make the nontrivial fusion rules explicit, and to analyze the symmetries connecting MESs associated with $X \in \mathtt{A}_{x(y)}$, we specialize to the case $N=2$ with both $L_x$ and $L_y$ odd. For $X \in \mathtt{A}_{x(y)}$, there are $8$ MESs, and the ICS on $X$ has $16$ extreme points. Let us label extreme points of the ICS on $X \in \mathtt{A}_{x}$ as
\begin{align}
\mathfrak{C}_{x}=\{\tilde{\beta}_{(a,b,c,d)}| a,b,c,d \in \mathbb{Z}_2\}.
\end{align}
Choosing $\tilde{\beta}_{(0,0,0,0)}$ as an element of $\mathfrak{C}_{x}^{\tau_y}$ and using Lemma~\ref{lem:equiv2}, we can conclude that
\begin{align}
1\overset{x}{\sim}e_2\overset{x}{\sim}m_1\overset{x}{\sim}e_2\times m_1,
\end{align}
and we define other elements of $\mathfrak{C}_{x}$ as
\begin{align}
&e_1^a \times e_3^b \times m_2^c \times m_3^d \times \tilde{\beta}_{(0,0,0,0)}=\tilde{\beta}_{(a,b,c,d)},
\end{align}
where $a,b,c,d \in \mathbb{Z}_2$. Among them, not all exhibit trivial automorphisms, and $\mathfrak{C}_x^{\tau_y}$ can be expressed as
\begin{align}
\mathfrak{C}_x^{\tau_y}=\{&\tilde{\beta}_{(0,0,0,0)}, \tilde{\beta}_{(0,0,0,1)}, \tilde{\beta}_{(0,0,1,0)}, \tilde{\beta}_{(0,0,1,1)},\nn
&\tilde{\beta}_{(0,1,0,0)}, \tilde{\beta}_{(0,1,0,1)}, \tilde{\beta}_{(0,1,1,0)}, \tilde{\beta}_{(0,1,1,1)}\}.
\end{align}
Moreover, the anyons generated by $e_3$, $m_2$, and $m_3$ connect elements of $\mathfrak{C}_{x}^{\tau_y}$.

Let us label extreme points of the ICS on $X \in \mathtt{A}_{y}$ as
\begin{align}
\mathfrak{C}_{y}=\{\tilde{\gamma}_{(a,b,c,d)}| a,b,c,d \in \mathbb{Z}_2\}.
\end{align}
Choosing $\tilde{\gamma}_{(0,0,0,0)}$ as an element of $\mathfrak{C}_{y}^{\tau_x}$ and using Lemma~\ref{lem:equiv2}, we can conclude that
\begin{align}
1\overset{y}{\sim}e_3\overset{y}{\sim}m_1\overset{y}{\sim}e_3\times m_1,
\end{align}
and we define other elements of $\mathfrak{C}_{y}$ as
\begin{align}
&e_1^a \times e_2^b \times m_2^c \times m_3^d \times \tilde{\gamma}_{(0,0,0,0)}=\tilde{\gamma}_{(a,b,c,d)},
\end{align}
where $a,b,c,d \in \mathbb{Z}_2$. Among them, not all exhibit trivial automorphisms, and $\mathfrak{C}_y^{\tau_x}$ can be expressed as
\begin{align}
\mathfrak{C}_y^{\tau_x}=\{&\tilde{\gamma}_{(0,0,0,0)}, \tilde{\gamma}_{(0,0,0,1)}, \tilde{\gamma}_{(0,0,1,0)}, \tilde{\gamma}_{(0,0,1,1)},\nn
&\tilde{\gamma}_{(0,1,0,0)}, \tilde{\gamma}_{(0,1,0,1)}, \tilde{\gamma}_{(0,1,1,0)}, \tilde{\gamma}_{(0,1,1,1)}\}.
\end{align}
Moreover, the anyons generated by $e_2$, $m_2$, and $m_3$ connect elements of $\mathfrak{C}_{y}^{\tau_x}$.

Among the symmetries, some provide explicit examples of Proposition~\ref{pro:symd}. One such example is constructed as follows. The anyon $m_2$ connects two distinct elements of $\mathfrak{C}_y^{\tau_x}$, namely,
$m_2 \times \tilde{\gamma}_{(0,b,c,d)}=\tilde{\gamma}_{(0,b,c+1,d)}$. Starting from a pair of $m_2$ anyons, we transport one $m_2$ along $l_x$ and return it to its original position, thereby producing
\begin{align}
\phi_x(m_2)\times m_2=m_1.    
\end{align}
The resulting $m_1$ can be annihilated as follows. We first decompose it as the fusion product of $m_3\times m_1$ and $m_3$, and then transport the $m_3$ anyon along $l_y$. This gives
\begin{align}
\phi_y(m_3)\times m_3 \times m_1=1.
\end{align}
Moreover, $m_2 \overset{y}{\sim} m_2\times e_3 \overset{y}{\sim} m_2\times m_1 \overset{y}{\sim} m_2\times e_3\times m_1$ and none of these anyons belongs to $\mathcal{C}^{\phi_x}$. We note, however, that a process inducing the same fusion permutation can be chosen as the product of three 1-form symmetries, corresponding to transporting $m_2 m_3$ along $l_{(1,1)}$, $m_2$ along $l_y$, and $m_3$ along $l_x$.

\section{Discussion}
\label{sec:6}

We have developed an entanglement-bootstrap framework that uses ICSs on local and noncontractible annuli to characterize transparent domain walls directly from ground state wavefunctions at fixed points of Abelian topological phases on a torus. Their structure and transport properties determine fusion rules and quantum dimensions and connect these data to MESs, the GSD, TEE, entanglement asymmetry, and symmetry actions. The central distinction is between the extreme points allowed by local entanglement constraints on an annulus and those compatible with a global ground state. A noncontractible annulus can retain a fixed microscopic width sufficient for admissibility while winding around the torus, and therefore occupies a vanishing fraction of the system's area in the thermodynamic limit. Its ICS, together with transport around the complementary cycle, thus connects entanglement information on regions of subextensive area to global topological properties.

The framework identifies distinct roles for TEE and entanglement asymmetry. The maximal TEE, $\gamma_{\rm LW}^{\max}=\log(\mathcal{D}/d_\alpha)$, probes the net effect of walls crossing the chosen annulus. The maximal entanglement asymmetry, $\Delta S_X^{\max}=\log\mathrm{GSD}$, instead reflects the combined effect of the wall configuration on the global ground state space and takes the same value for both annulus orientations. This distinction becomes particularly apparent in the presence of transparent domain walls, extending the comparison developed for Abelian topological phases without such walls in Ref.~\onlinecite{lamas_higher-form_2025}. Related TEE results were obtained in Ref.~\onlinecite{kim_unveiling_2025}, which interpreted the rank-2 toric code as a translation-symmetry-enriched topological phase and explained its lattice-size dependence through translation symmetry defects. Under our fixed-point assumptions, the corresponding TEE relations follow directly from the ICS description of transparent domain walls. Microscopic translation symmetry therefore provides one realization of the wall structure, while the entanglement relations apply more generally whenever the stated assumptions hold.

Defining entanglement asymmetry in this setting requires identifying the relevant symmetry action. Transparent domain walls can give rise to symmetries supported jointly on the two chosen fundamental cycles that cannot be decomposed into a product of two $1$-form symmetry operators, one supported on each cycle. We construct these symmetries through Abelian anyon-transport processes and use their effective permutation action on globally realizable extreme points of ICSs on noncontractible annuli to define entanglement asymmetry. This construction relates symmetry actions to transitions between MESs and supports the interpretation of entanglement asymmetry as a measure of spontaneous symmetry breaking visible in annular reduced states. It is insensitive to relative phases between the corresponding MESs and therefore does not characterize the full symmetry action on the global ground state space.

A useful comparison is with the \emph{memory capacity} introduced in Ref.~\onlinecite{yang_topological_2025}, defined as the maximal entropy difference between two density matrices in the same ICS. In the Abelian setting considered here, all extreme points of $\Sigma(X)$ have equal entropy. For $X\in\mathtt{A}_y$, the memory capacity is therefore attained between any extreme point and the uniform mixture of all extreme points, giving
\begin{align*}
C_{\mathrm{mem}}(X)=\log|\mathfrak{C}_y|.  \end{align*}
Entanglement asymmetry instead uses the effective symmetry action on the extreme points compatible with global ground states, whose number equals the GSD. Consequently,
\begin{align*}
C_{\mathrm{mem}}(X)-\Delta S_X^{\max}
=\log\frac{|\mathfrak{C}_y|}{\mathrm{GSD}}\geq 0.
\end{align*}
This difference quantifies the restriction imposed by compatibility with complementary-cycle transport. Memory capacity characterizes the full annular ICS, whereas maximal entanglement asymmetry reflects its globally realizable subset. Both quantities contain topological information, but neither uniquely determines the wall configuration.

Natural extensions include non-Abelian topological phases and immersed regions~\cite{shi_immersed_2024} with noncontractible boundaries. These settings may exhibit richer ICS structures and symmetry actions, requiring a broader treatment of the relation between fusion data, TEE, and entanglement asymmetry.

Point defects provide another direction. One could retain the axioms ${\bf A0}$ and ${\bf A1}$ away from specified defect cores while modifying the entanglement constraints in their neighborhoods. ICSs on local annuli enclosing point defects and on noncontractible annuli crossing transparent domain walls could then be studied within a common framework. Appendix~\ref{asec:pd} provides a concrete example: in the construction considered there, the ICS on a noncontractible annulus intersecting a wall is isomorphic to that on a local annulus enclosing a point defect~\cite{bombin_topological_2010}. Extending this correspondence may clarify how defect fusion degrees of freedom contribute to the GSD and modify the global compatibility conditions.

Finally, bivariate bicycle (BB) codes~\cite{bravyi_high-threshold_2024} offer a potential application to quantum low-density parity-check codes. Ref.~\onlinecite{chen_anyon_2025} developed a topological description of BB codes with toric layouts and identified topological frustration, in which the torus GSD differs from the total number of anyon types, together with constraints on energy-conserving microscopic anyon motion. For realizations satisfying the required locality and entanglement assumptions, an ICS analysis could test whether annular structure and complementary-cycle transport reproduce the lattice-size dependence of the GSD. This would provide an entanglement-based route to investigating a transparent-domain-wall interpretation of topological frustration in these codes.

\section{Acknowledgments}

We are grateful to Tai-Hsuan Yang and Yuta Hirasaki for helpful discussions. J.K. was supported by the education and training program of the Quantum Information Research Support Center, funded through the National Research Foundation of Korea (NRF) by the Ministry of Science and ICT (MSIT) of the Korean government (No. RS-2023-NR057243). A.G.L. was supported by the AAUW International Fellowship 2025-26. J.G. and T.L.H. were supported by the US Office of Naval Research MURI grant N00014-20-1-2325. B.S. and J.Y.L. are supported by the IQUIST fellowship, faculty startup grant at the University of Illinois, Urbana-Champaign, and IBM-Illinois Discovery Accelerator Institute.

\onecolumngrid
\appendix
\renewcommand{\theequation}{\Alph{section}.\arabic{equation}}
\renewcommand{\thedefinition}{\Alph{section}.\arabic{definition}}
\renewcommand{\thetheorem}{\Alph{section}.\arabic{theorem}}

\section{Transparent domain walls and point defects}
\label{asec:pd}

We now rigorously demonstrate how ICSs elucidate the relation between point defects and their domain wall for the fixed points of the Abelian topological phases. To incorporate point defects into the reference state framework, we allow the axioms $\bf{A0}$ and $\bf{A1}$ to be violated within a set of local disks containing the point defects.

\begin{figure}[ht]
\centering
\begin{overpic}[width=0.6\linewidth]{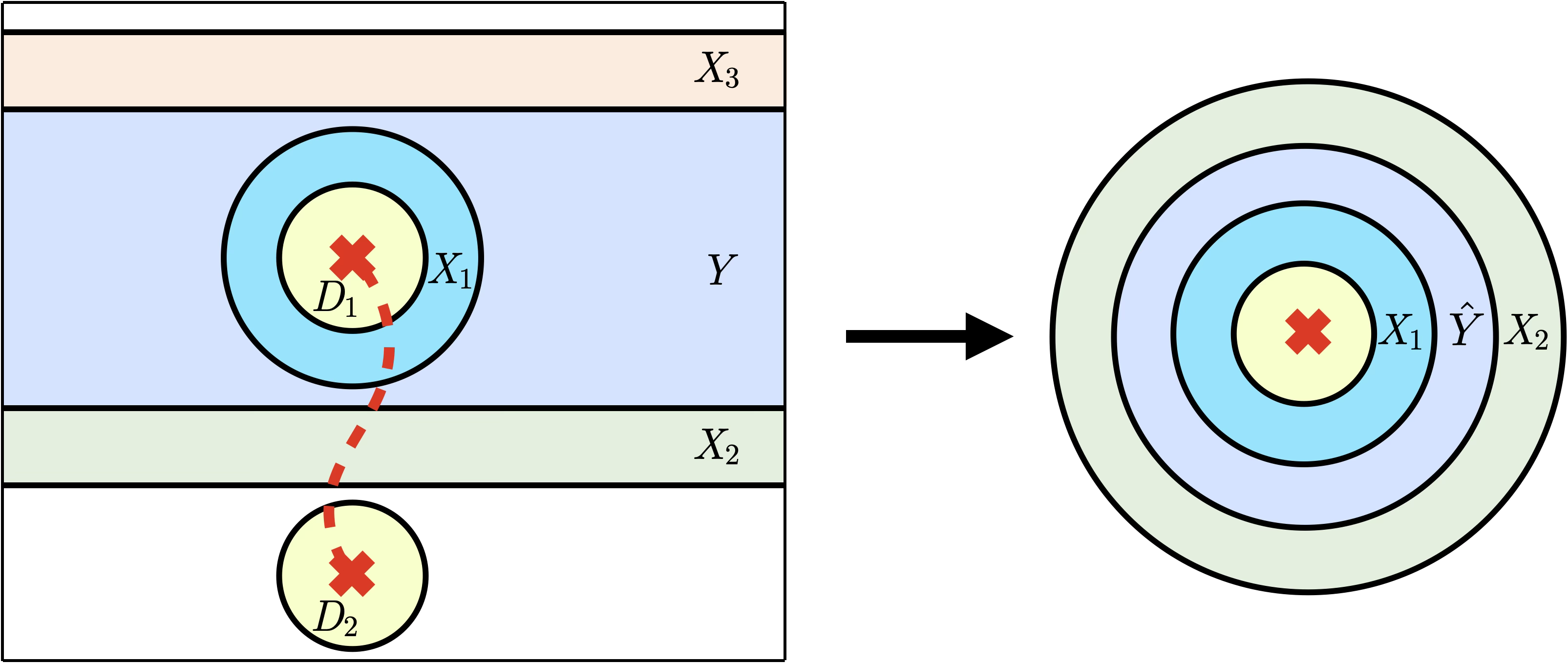}
\put(-4,46){(a)}
\put(68,46){(b)}
\end{overpic}
\caption{(a) Configuration of three annuli $X_1$, $X_2$, and $X_3$, together with the punctured noncontractible annulus $Y$. (b) Application of the completion trick to heal the boundary not crossed by the defect line. This yields a region $X_1 \cup \widehat{Y} \cup X_2$, which is topologically an annulus. This construction explains \eqref{eq:iso-X2-X3}.
}
\label{fig:defect-pt}
\end{figure}

\begin{proposition}
We consider a reference state on $\mathtt{R} \setminus (D_1 \cup D_2)$ at a fixed point of an Abelian topological phase, where $D_1$ and $D_2$ are local disks containing the point defects. Let $X_1$ denote a local annulus given by the thickened boundary of $D_1$, and let $X_2\in \mathtt{A}_x$. We assume that an open domain wall terminates at these defects and intersects $X_2$, as shown in Fig.~\ref{fig:defect-pt}(a). We then have
\begin{equation}\label{eq:iso-X2-X3}
\Sigma(X_1) \cong \Sigma(X_2).
\end{equation}

\begin{proof}
Let $Y$ be a punctured noncontractible annulus with a hole $D_1 \cup X_1$, as shown in Fig.~\ref{fig:defect-pt}(a). The three thickened boundary components of $Y$ consist of one local annulus $X_1$ and two noncontractible annuli $X_2$ and $X_3$. Here, $X_3$ is a noncontractible annulus that is not intersected by the domain wall. Equivalently, in the context of Abelian models, the ICS of $X_3$ contains only Abelian extreme points. This property was established in Ref.~\onlinecite{shi_immersed_2024}; see Proposition~13 and the resolved Abelian case of Conjecture~14.

Pick an arbitrary extreme point of $\Sigma(Y)$. Such a state necessarily carries an Abelian charge on $X_3$. We purify the top boundary of $X_3$ and collapse this boundary to a point. This point is referred to as a ``completion point'' (Ref.~\onlinecite{shi_remote_2025}, Section 4). Since the charge is Abelian, the axioms remain satisfied at the completion point, allowing us to disregard it. The resulting state is supported on the region $X_1 \cup \widehat{Y} \cup X_2$, which is topologically an annulus, as shown in Fig.~\ref{fig:defect-pt}(b). Taking the state on $X_1 \cup \widehat{Y} \cup X_2$ as a reference state, the annulus $X_1$ can be smoothly deformed into $X_2$ by the isomorphism theorem. Therefore, \eqref{eq:iso-X2-X3} holds.

\end{proof}
\end{proposition}

\section{Path independence of annulus transport}
\label{asec:1}

In this section, we rigorously define paths for the transportation of the annuli, and prove a proposition.

\begin{definition}[Path]
A finite sequence of subsystems $\{X_t\}$, with $t=i/N$ and $i=0,1,2,\ldots,N$ for some positive integer $N$, is called a path connecting $X_0$ and $X_1$ if each consecutive pair of subsystems is related by an elementary deformation step.
\end{definition}
\noindent Here, the elementary deformation steps consist of the merging and partial trace operations, whose details can be found in Ref.~\onlinecite{shi_fusion_2020}. We can then establish the following proposition.
\begin{proposition}\label{pro:path}
Let $X_0$ and $X_1$ be elements of $\mathtt{A}_{y}$ contained in $C \in \mathtt{A}_{y}$; see Fig.~\ref{fig:path} for an example. Let $\{X_t^{(1)}\}$ and $\{X_t^{(2)}\}$ be two paths connecting $X_0$ and $X_1$ such that $X_0^{(i)} = X_0$ and $X_1^{(i)} = X_1$ for $i = 1,2$. Moreover, assume that $\bigcup_t X_t^{(1)} \subseteq C$ and $\bigcup_t X_t^{(2)} \subseteq C$. Then the isomorphisms
\[
\Phi_{\{X_t^{(1)}\}} : \Sigma(X_0) \to \Sigma(X_1)
\quad {\rm and} \quad
\Phi_{\{X_t^{(2)}\}} : \Sigma(X_0) \to \Sigma(X_1)
\]
are identical.
\begin{figure}[ht]
\centering
\includegraphics[width=0.32\linewidth]{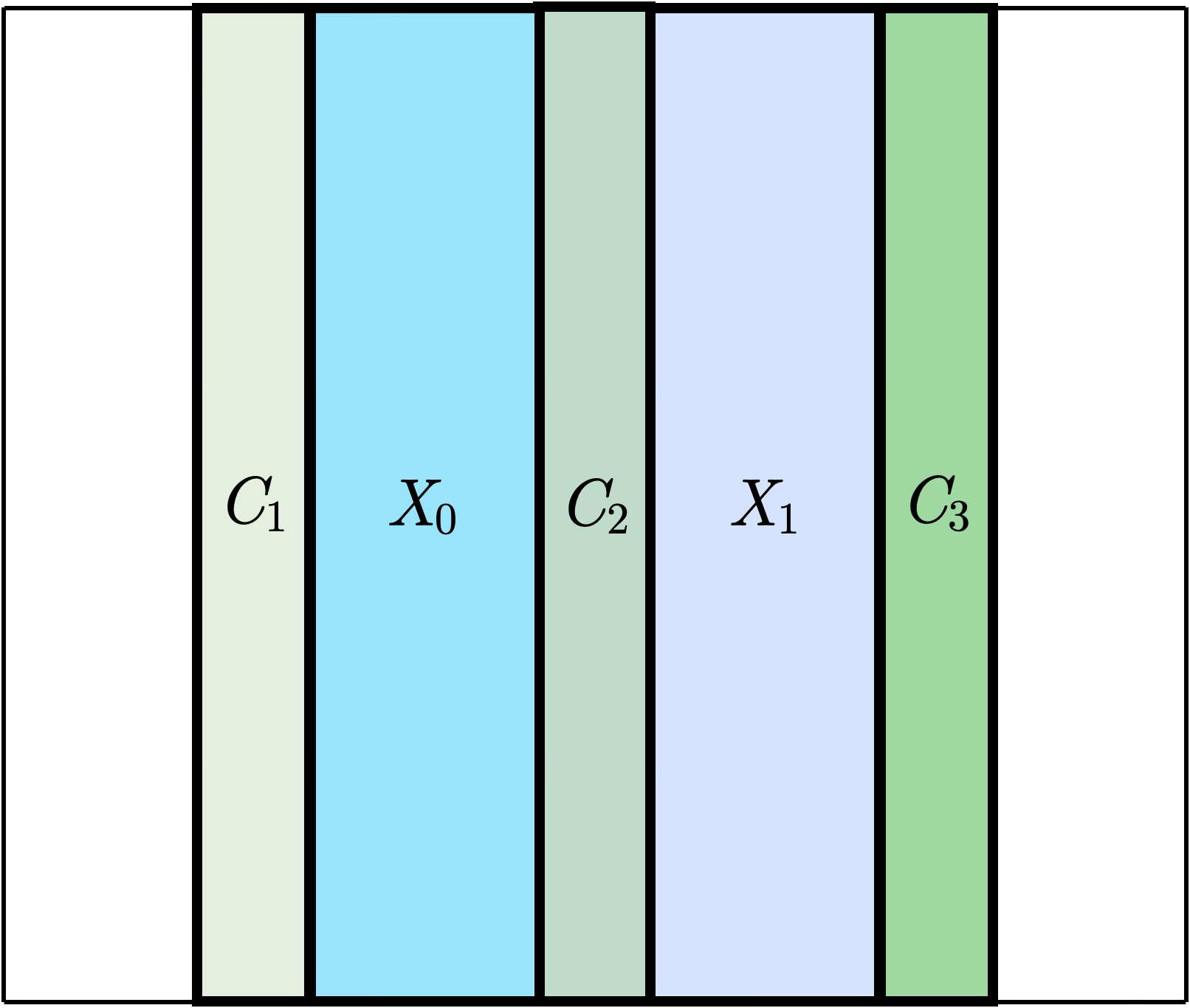}
\caption{One example of regions $X_0$, $X_1$, and $C=C_1 X_0 C_2 X_1 C_3$ satisfying the conditions of Proposition~\ref{pro:path}. More generally, $X_0 \cap X_1$ may be nonempty.}
\label{fig:path}
\end{figure}
\begin{proof}
The key point of the proof is that the two paths $\{X_t^{(1)}\}$ and $\{X_t^{(2)}\}$ become trivial upon extending each $X_t^{(i)}$ to a region $C$ for all $t$. Since every subsystem along the paths is $C$, the reduced density matrix on $C$ remains unchanged throughout the transport for all $t$. It then follows that the two isomorphisms for extended paths must coincide. Finally, $\Phi_{\{X_t^{(1)}\}}$ and $\Phi_{\{X_t^{(2)}\}}$ are obtained from the corresponding trivialized paths by partial trace. This completes the proof.
\end{proof}
\end{proposition}
The essence of the proof is to copy the information to $C$, which is unchanged under the extended paths. Such a proof does not work if we allow immersed intermediate configurations in the path, which is a case worth exploring.

\section{Anisotropic dipolar toric code as a reference state}
\label{asec:adtc}
In this section, we show that the ground state of the anisotropic dipolar toric code satisfies Axioms ${\bf A0}$ and ${\bf A1}$ in \eqref{eq:axioms}~\cite{ebisu_anisotropic_2023}. The model is defined on an $L_x \times L_y$ square lattice under periodic boundary conditions, with $\mathbb{Z}_N$ qudits located on the vertices and vertical edges. It is a stabilizer Hamiltonian $H=-\sum_{\bf r} V_{\bf r}-\sum_{\bf r} P_{{\bf r}-\frac{\hat{y}}{2}} +{\rm h.c.}$, where
\begin{align}
V_{\bf r}&= \sigma^x_{{\bf r}+\hat{x}} (\sigma^x)^{-2}_{\bf r} \sigma^x_{{\bf r}-\hat{x}}\sigma^x_{{\bf r}+\frac{\hat{y}}{2}} (\sigma^x)^{-1}_{{\bf r}-\frac{\hat{y}}{2}} \nn 
P_{{\bf r}-\frac{\hat{y}}{2}}&={\sigma}^z_{{\bf r}+\hat{x}-\frac{\hat{y}}{2}} ({\sigma^z_{{\bf r}-\frac{\hat{y}}{2}}})^{-2} {\sigma^z_{{\bf r}-\hat{x}-\frac{\hat{y}}{2}}} {\sigma^z_{{\bf r}}} ({\sigma^z_{{\bf r}-\hat{y}}})^{-1},
\end{align}
and $\sigma^x$, ${\sigma^z}$ are $\mathbb{Z}_N$ Pauli operators satisfying ${\sigma^z}\sigma^x=\omega \sigma^x {\sigma^z}$ ($\omega=e^{2\pi i/N}$). $V_{\bf r}$ and $P_{{\bf r}-\frac{\hat{y}}{2}}$ are illustrated in Fig.~\ref{fig:VP} and ${\bf r}=r_x \hat{x}+r_y \hat{y}$. The GSD is given by $N^2 [\gcd(L_x,N)]^2$, and the corresponding logical operators are
\begin{align}
&V_{\sigma^x}=\prod_{r_y} {\sigma}^x_{r_y\hat{y}} & &
\bar{V}_{\sigma^x}=\prod_{r_y} (\sigma^x_{r_y\hat{y}} )^\dagger{\sigma}^x_{r_y\hat{y}+\hat{x}}\nn
&H_{\sigma^x}=\prod_{r_x} {\sigma}^x_{{\bf r}-\hat{y}/2} & &\bar{H}_{\sigma^x}=\prod_{r_x}  ({\sigma^x_{{\bf r}-\hat{y}/2}})^{Nr_x/\gcd(L_x,N)} \nn
&V_{\sigma^z}=\prod_{r_y} {\sigma}^z_{r_y\hat{y}-\hat{y}/2} & &\bar{V}_{\sigma^z}=\prod_{r_y} (\sigma^z_{r_y\hat{y}-\hat{y}/2})^\dagger {\sigma}^z_{r_y\hat{y}-\hat{y}/2+\hat{x}} \nn
&H_{\sigma^z}=\prod_{r_x} {\sigma}^z_{{\bf r}} & & \bar{H}_{\sigma^z}=\prod_{r_x}  ({\sigma}^z_{{\bf r}})^{Nr_x/\gcd(L_x,N)}. \label{eq:lo}
\end{align}
Then the nontrivial commutation relations are
\begin{align}
&H_{\sigma^z} V_{\sigma^x}=\omega V_{\sigma^x}H_{\sigma^z}, & &\bar{H}_{\sigma^z} \bar{V}_{\sigma^x}=\omega^{N/\gcd(L_x,N)} \bar{V}_{\sigma^x} \bar{H}_{\sigma^z},\nn
&V_{\sigma^z} H_{\sigma^x}=\omega H_{\sigma^x} V_{\sigma^z}, & &\bar{V}_{\sigma^z} \bar{H}_{\sigma^x}=\omega^{N/\gcd(L_x,N)} \bar{H}_{\sigma^x} \bar{V}_{\sigma^z}.\label{eq:ZX}
\end{align}
We note that $\bar{V}_{\sigma^x}^n$ and $\bar{V}_{\sigma^z}^n$ for arbitrary $n$ are not, in general, independent logical operators, since $\bar{V}_{\sigma^z}^{\gcd(L_x,N)}$ can be expressed as the product of stabilizers. For example, when $N=4$ and $L_x=6$, $\bar{V}_{\sigma^x}$ is equivalent to $\bar{V}_{\sigma^x}^3$ up to stabilizers. These relations are well captured in \eqref{eq:ZX}, where two expressions equivalent up to stabilizers yield the same commutation relation.
\begin{figure}[ht]
\centering
\includegraphics[width=0.5\linewidth]{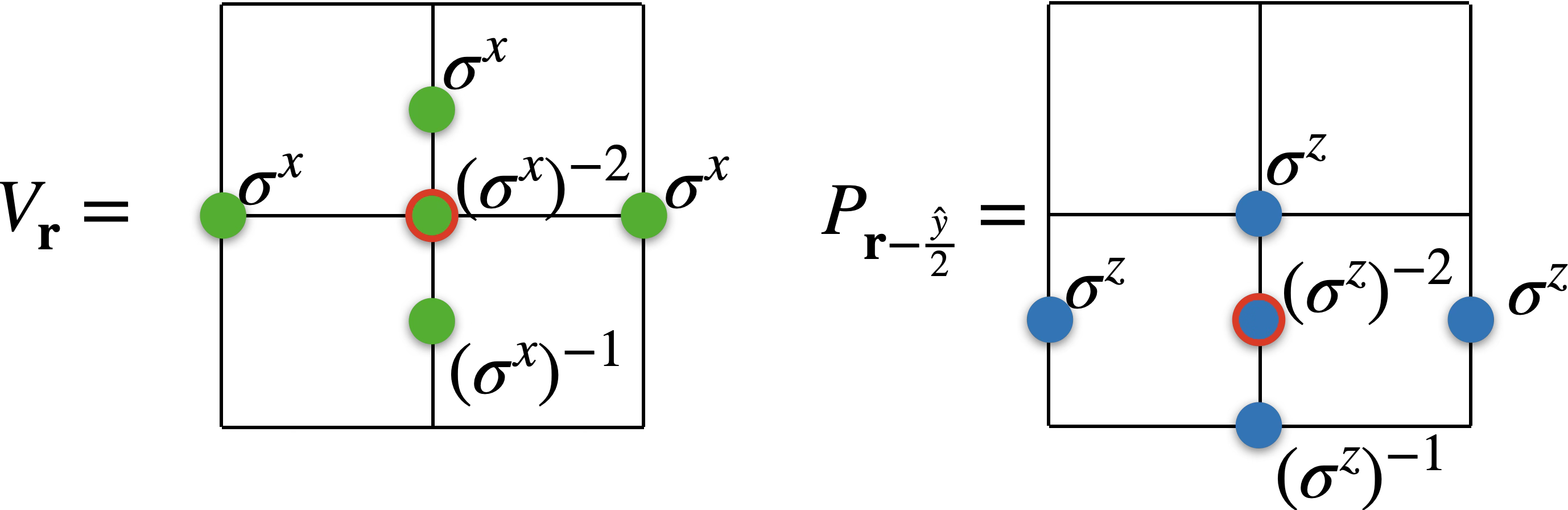}
\caption{$V_{\bf r}$ and $P_{{\bf r}-\frac{\hat{y}}{2}}$ are illustrated, where the red boundary dots for $V_{\bf r}$ and $P_{{\bf r}-\frac{\hat{y}}{2}}$ indicate the positions ${\bf r}$ and ${\bf r}-\frac{\hat{y}}{2}$, respectively.}
\label{fig:VP}
\end{figure}

To carry out the proof, we introduce an $l_x \times l_y$ region with $l_x \geq 2$, containing equal numbers of vertex qudits and vertical-edge qudits, and treat it as a supersite~\cite{shi_fusion_2020}. We then choose $b$ to be a ball containing nine supersites. To prove ${\bf A0}$ and ${\bf A1}$, we must consider all distinct subsystem configurations of $B$ and $C$ for ${\bf A0}$, and of $B$, $C$, and $D$ for ${\bf A1}$. The corresponding subsystem configurations are shown in Fig.~\ref{fig:EBP}(a) and Fig.~\ref{fig:EBP}(b), respectively. Here, each green region represents a single supersite. In this sense, the regions $B$ and $D$ may be viewed as tessellations of the region $C$.

\begin{figure}[ht]
\centering
\begin{overpic}[width=0.95\linewidth]{EBP.png}
\put(-3,59){(a)}
\put(19.2,59){(b)}
\end{overpic}   
\caption{(a) Configuration used to show that the anisotropic dipolar toric code satisfies ${\bf A0}$. (b) Configuration used to show that the anisotropic dipolar toric code satisfies ${\bf A1}$. The figure is shown for $l_x = 3$ and $l_y = 2$, although $l_x$ and $l_y$ are arbitrary.}
\label{fig:EBP}
\end{figure}

\begin{table*}[ht]
\centering
\begin{tabular}{c|cccc}
\hline\hline
\multirow{2}{*}{} & ~~\multirow{2}{*}{$S(\rho_{BC})$}~~ & ~~\multirow{2}{*}{$S(\rho_{CD})$}~~& ~~\multirow{2}{*}{$S(\rho_{B})$}~~ & ~~\multirow{2}{*}{$S(\rho_{D})$}~~\\
& & \\
\hline
\multirow{2}{*}{$\mathtt{1}$}& ~~\multirow{2}{*}{$(6l_x+16l_y-6)\log N$}~~& ~~ \multirow{2}{*}{$(2l_x+8l_y-4)\log N$}~~ & ~~ \multirow{2}{*}{$(6l_x+20l_y-6)\log N$}~~ & ~~ \multirow{2}{*}{$(2l_x+4l_y-4)\log N$}~~ \\
&&\\
\multirow{2}{*}{$\mathtt{2}$}& ~~\multirow{2}{*}{$(6l_x+16l_y-6)\log N$}~~& ~~ \multirow{2}{*}{$(2l_x+8l_y-4)\log N$}~~ & ~~ \multirow{2}{*}{$(6l_x+20l_y-6)\log N$}~~ & ~~ \multirow{2}{*}{$(2l_x+4l_y-4)\log N$}~~ \\
&&\\
\multirow{2}{*}{$\mathtt{3}$}& ~~\multirow{2}{*}{$(8l_x+12l_y-6)\log N$}~~& ~~ \multirow{2}{*}{$(4l_x+4l_y-4)\log N$}~~ & ~~ \multirow{2}{*}{$(10l_x+12l_y-6)\log N$}~~ & ~~ \multirow{2}{*}{$(2l_x+4l_y-4)\log N$}~~ \\
&&\\
\multirow{2}{*}{$\mathtt{4}$}& ~~\multirow{2}{*}{$(6l_x+12l_y-5)\log N$}~~& ~~ \multirow{2}{*}{$(4l_x+8l_y-5)\log N$}~~ & ~~ \multirow{2}{*}{$(6l_x+16l_y-6)\log N$}~~ & ~~ \multirow{2}{*}{$(4l_x+4l_y-4)\log N$}~~ \\
&&\\
\multirow{2}{*}{$\mathtt{5}$}& ~~\multirow{2}{*}{$(6l_x+12l_y-5)\log N$}~~& ~~ \multirow{2}{*}{$(4l_x+8l_y-5)\log N$}~~ & ~~ \multirow{2}{*}{$(6l_x+16l_y-6)\log N$}~~ & ~~ \multirow{2}{*}{$(4l_x+4l_y-4)\log N$}~~ \\
&&\\
\multirow{2}{*}{$\mathtt{6}$}& ~~\multirow{2}{*}{$(6l_x+12l_y-5)\log N$}~~& ~~ \multirow{2}{*}{$(4l_x+8l_y-5)\log N$}~~ & ~~ \multirow{2}{*}{$(8l_x+12l_y-6)\log N$}~~ & ~~ \multirow{2}{*}{$(2l_x+8l_y-4)\log N$}~~ \\
&&\\
\multirow{2}{*}{$\mathtt{7}$}& ~~\multirow{2}{*}{$(6l_x+12l_y-5)\log N$}~~& ~~ \multirow{2}{*}{$(4l_x+8l_y-5)\log N$}~~ & ~~ \multirow{2}{*}{$(8l_x+12l_y-6)\log N$}~~ & ~~ \multirow{2}{*}{$(2l_x+8l_y-4)\log N$}~~ \\
&&\\
\multirow{2}{*}{$\mathtt{8}$}& ~~\multirow{2}{*}{$(4l_x+12l_y-4)\log N$}~~& ~~ \multirow{2}{*}{$(4l_x+12l_y-6)\log N$}~~ & ~~ \multirow{2}{*}{$(6l_x+12l_y-6)\log N$}~~ & ~~ \multirow{2}{*}{$(2l_x+12l_y-4)\log N$}~~ \\
&&\\
\multirow{2}{*}{$\mathtt{9}$}& ~~\multirow{2}{*}{$(6l_x+12l_y-4)\log N$}~~& ~~ \multirow{2}{*}{$(4l_x+8l_y-6)\log N$}~~ & ~~ \multirow{2}{*}{$(6l_x+12l_y-5)\log N$}~~ & ~~ \multirow{2}{*}{$(4l_x+8l_y-5)\log N$}~~ \\
&&\\
\multirow{2}{*}{$\mathtt{10}$}& ~~\multirow{2}{*}{$(6l_x+12l_y-6)\log N$}~~& ~~ \multirow{2}{*}{$(4l_x+8l_y-4)\log N$}~~ & ~~ \multirow{2}{*}{$(6l_x+12l_y-5)\log N$}~~ & ~~ \multirow{2}{*}{$(4l_x+8l_y-5)\log N$}~~ \\
&&\\
\multirow{2}{*}{$\mathtt{11}$}& ~~\multirow{2}{*}{$(6l_x+8l_y-4)\log N$}~~& ~~ \multirow{2}{*}{$(6l_x+8l_y-6)\log N$}~~ & ~~ \multirow{2}{*}{$(6l_x+12l_y-6)\log N$}~~ & ~~ \multirow{2}{*}{$(6l_x+4l_y-4)\log N$}~~ \\
&&\\
\multirow{2}{*}{$\mathtt{12}$}& ~~\multirow{2}{*}{$(6l_x+8l_y-4)\log N$}~~& ~~ \multirow{2}{*}{$(6l_x+8l_y-6)\log N$}~~ & ~~ \multirow{2}{*}{$(6l_x+12l_y-6)\log N$}~~ & ~~ \multirow{2}{*}{$(6l_x+4l_y-4)\log N$}~~ \\
&&\\
\multirow{2}{*}{$\mathtt{13}$}& ~~\multirow{2}{*}{$(4l_x+12l_y-5)\log N$}~~& ~~ \multirow{2}{*}{$(4l_x+12l_y-5)\log N$}~~ & ~~ \multirow{2}{*}{$(4l_x+12l_y-5)\log N$}~~ & ~~ \multirow{2}{*}{$(4l_x+12l_y-5)\log N$}~~ \\
&&\\
\multirow{2}{*}{$\mathtt{14}$}& ~~\multirow{2}{*}{$(4l_x+12l_y-5)\log N$}~~& ~~ \multirow{2}{*}{$(4l_x+12l_y-5)\log N$}~~ & ~~ \multirow{2}{*}{$(4l_x+12l_y-5)\log N$}~~ & ~~ \multirow{2}{*}{$(4l_x+12l_y-5)\log N$}~~ \\
&&\\
\multirow{2}{*}{$\mathtt{15}$}& ~~\multirow{2}{*}{$(6l_x+8l_y-5)\log N$}~~& ~~ \multirow{2}{*}{$(6l_x+8l_y-5)\log N$}~~ & ~~ \multirow{2}{*}{$(6l_x+8l_y-5)\log N$}~~ & ~~ \multirow{2}{*}{$(6l_x+8l_y-5)\log N$}~~ \\
&&\\
\multirow{2}{*}{$\mathtt{16}$}& ~~\multirow{2}{*}{$(6l_x+8l_y-5)\log N$}~~& ~~ \multirow{2}{*}{$(6l_x+8l_y-5)\log N$}~~ & ~~ \multirow{2}{*}{$(6l_x+8l_y-5)\log N$}~~ & ~~ \multirow{2}{*}{$(6l_x+8l_y-5)\log N$}~~ \\
&&\\
\hline\hline

\end{tabular}
\caption{Entanglement entropies for different configurations. Rows correspond to the configurations in Fig.~\ref{fig:EBP}(b), ordered left-to-right and top-to-bottom.}
\label{tab:S}
\end{table*}

We employ the method developed in Refs.~\onlinecite{hamma_bipartite_2005, kim_unveiling_2025, ebisu_symmetric_2023} to explicitly evaluate the entanglement entropies. The procedure can be summarized as follows. When a pure state $\ket{\psi}$ can be written as
\begin{align}
|\psi\rangle =\frac{1}{|G|^{1/2}}\sum_{g \in G} g|{\bf 0}\rangle,    
\end{align}
with $g = g_X \otimes g_Y$, where $|{\bf 0}\rangle$ is a $+1$ eigenstate of every ${\sigma^z}$ operator and $X$ and $Y$ form a bipartition, the reduced density matrix $\rho_X$ obeys
\begin{equation}\label{eq: rho_X projector}
     \rho_X^2=\left(\frac{|G_X||G_Y|}{|G|}\right)\rho_X,
\end{equation}
and the entanglement entropy of region $X$ is given by
\begin{align}\label{eq: S stabilizer}
S(\rho_X) = \log \left( \frac{|G|}{|G_X||G_Y|} \right),
\end{align}
where $G_X = \{ g \in G \mid g = g_X \otimes I_Y \}$ and $G_Y = \{ g \in G \mid g = I_X \otimes g_Y \}$.

In the calculation, $G$ is generated by the independent $V_{\mathbf{r}}$ operators, and, without loss of generality, we assume that the dependent generators are supported in $(BC)^c$ for Fig.~\ref{fig:EBP}(a) and in $(BCD)^c$ for Fig.~\ref{fig:EBP}(b). Under this assumption, when computing, for example, $S(\rho_B)$, it is unnecessary to evaluate $|G|$ explicitly, since $|G_{B^c}|$ can be expressed as $|G|$ multiplied by an additional factor~\cite{kim_unveiling_2025}.

First, let us show that ${\bf A0}$ is satisfied. We begin by computing $S(\rho_{BC})$. Then,
\begin{align}
|G_{BC}|&=N^{(3l_x-2)(3l_y-1)},\nn
|G_{(BC)^c}|&=|G|/N^{(3l_x+2)(3l_y+1)-2} \times N^2,\nn
S(\rho_{BC})&=(6l_x+12l_y-4)\log N.\label{eq:SBC}
\end{align}
Similarly, we obtain $S(\rho_{C})=(2l_x+4l_y-4)\log N$. To calculate $S(\rho_B)$, we can show that
\begin{align}
|G_B|&=N^{(3l_x-2)(3l_y-1)-(l_x+2)(l_y+1)+2}\times N^2\nn
|G_{B^c}|&=|G|/N^{(3l_x+2)(3l_y+1)}\times N^2 \times N^{(l_x-2)(l_y-1)}\times N^2,
\end{align}
which leads to $S(\rho_B)=(8l_x+16l_y-8)\log N$. Therefore, ${\bf A0}$ is satisfied.

Similarly, we compute the entanglement entropies needed to verify ${\bf A1}$; the results are summarized in Table~\ref{tab:S}. We therefore conclude that ${\bf A1}$ is satisfied as well.

\section{Details of lattice examples}
\label{asec:new}

In this section, we consider Wen's plaquette model~\cite{wen_quantum_2003, you_projective_2012}, the anisotropic dipolar toric code~\cite{ebisu_anisotropic_2023}, and the rank-2 toric code~\cite{oh22a,pace-wen,oh22b,oh23,kim_unveiling_2025,kim_gauging_2026}. For each model, we analyze the anyon automorphisms induced by transparent domain walls and determine the number of extreme points of the ICSs on noncontractible annuli.

These lattice models exhibit position-dependent anyons~\cite{pace-wen, oh22b, oh23, kim_unveiling_2025}, and varying the lattice size leads to the emergence of transparent domain walls. To analyze the resulting automorphisms, we define the unit cell as the minimal set of positions containing all distinct position-dependent anyon types. After fixing a specific unit cell in the lattice, we determine the automorphisms by transporting an anyon in the unit cell along $l_x$ or $l_y$ and identifying the anyon type that returns to the unit cell.

\subsection{Wen's plaquette model}

The position dependence of the anyons in Wen's plaquette model can be summarized as follows: for a plaquette located at ${\bf p} = p_x \hat{x} + p_y \hat{y}$, the excitation is an $e$ anyon when $p_x + p_y$ is odd, and an $m$ anyon when $p_x + p_y$ is even. Then the unit cell consists of two nearest-neighbor plaquettes. One can then readily verify that when both $L_x$ and $L_y$ are even, only the trivial automorphism occurs, that is, there are no transparent domain walls. In contrast, when $L_x$ ($L_y$) is odd, transporting an anyon in the unit cell along $l_x$ ($l_y$) induces the nontrivial automorphism
\begin{align}
&\phi_{x(y)}(1)=1, &&\phi_{x(y)}(e)=m, \nn
&\phi_{x(y)}(m)=e, &&\phi_{x(y)}(f)=f.
\end{align}

The number of extreme points of the ICS on $X \in \mathtt{A}_{x(y)}$ is $4$ in the absence of transparent domain walls. When a transparent domain wall is present along $l_x$, the number of extreme points is $4$ for $X \in \mathtt{A}_x$ and $2$ for $X \in \mathtt{A}_y$. When transparent domain walls are present along both $l_x$ and $l_y$, the number of extreme points is $2$ for both $X \in \mathtt{A}_x$ and $X \in \mathtt{A}_y$.

\subsection{Anisotropic dipolar toric code}

The position dependence of the anyons in the anisotropic dipolar toric code can be summarized as follows: for a vertex located at ${\bf r} = r_x \hat{x} + r_y \hat{y}$, the excitation with eigenvalue $\omega^a$ can be expressed as
\begin{align}
[v]_{\bf r}^a=e_1^{a}\times m_2^{a r_x} .
\end{align}
Moreover, there are excitations at ${\bf r}-\hat{y}/2$ and the excitations with eigenvalue $\omega^a$ can be expressed as
\begin{align}
[p]_{{\bf r}-\hat{y}/2}^a=e_2^{a}\times m_1^{-a r_x}.
\end{align}
Then the unit cell consists of $N$ vertices and $N$ vertical edges, for a total of $2N$ sublattices. One can then readily verify that when $L_x$ is a multiple of $N$, only the trivial automorphism occurs. In general, transporting an anyon in the unit cell along $l_x$ ($l_y$) induces the following nontrivial automorphism
\begin{align}
\phi_{x}([v]_{\bf r}^a)&=[v]_{{\bf r}+(-L_x ~{\rm mod}~ N)\hat{x}}^a,\nn
\phi_{y}([v]_{\bf r}^a)&=[v]_{{\bf r}}^a,\nn
\phi_{x}([p]_{{\bf r}-\hat{y}/2}^a)&=[p]_{{\bf r}-\hat{y}/2+(-L_x ~{\rm mod}~ N)\hat{x}}^a\nn
\phi_{y}([p]_{{\bf r}-\hat{y}/2}^a)&=[p]_{{\bf r}-\hat{y}/2}^a,
\end{align}
so there is a nontrivial transparent domain wall along $l_y$ when $L_x$ is not an integer multiple of $N$.

The number of extreme points of the ICS for $X \in \mathtt{A}_{x(y)}$ can be obtained by considering all distinct anyons within a unit cell:
\begin{align}
&\prod_{k=0}^{N-1}\Bigl ( [v]_{{\bf r}+k\hat{x}}^{a_k} \times [p]_{{\bf r}-\hat{y}/2+k\hat{x}}^{a_k'} \Bigl)\nn
&=\prod_{k=0}^{N-1} \Bigl (e_1^{a_k}\times m_2^{a_k (r_x+k)} \times e_2^{a_k'}\times m_1^{-a_k' (r_x+k)} \Bigl )\nn
&=e_1^{A_1}\times m_2^{A_1 r_x+A_2} \times e_2^{A_1'}\times m_1^{-A_1' r_x-A_2'}
\end{align}
where
\begin{align}
A_1&=\sum_{k=0}^{N-1} a_k & A_2&=\sum_{k=0}^{N-1} a_kk\nn
A_1'&=\sum_{k=0}^{N-1} a_k'& A_2'&=\sum_{k=0}^{N-1} a_k'k.
\end{align}
For these quantities to be invariant under $\phi_x$, both $A_1$ and $A_1'$ must be integer multiples of $N/\gcd(L_x,N)$. It then follows that the number of extreme points of the ICS on $X \in \mathtt{A}_x$ is $N^2 \gcd(L_x,N)^2$, whereas for $X \in \mathtt{A}_y$ it is $N^4$.

\subsection{Rank-2 toric code}

The position dependence of the anyons in the rank-2 toric code can be summarized as follows: for a vertex located at ${\bf r} = r_x \hat{x} + r_y \hat{y}$, the excitation with eigenvalue $\omega^a$ can be expressed as
\begin{align}
[e]_{\bf r}^a=e_1^{a}\times e_2^{a r_x} \times e_3^{a r_y} .
\end{align}
Moreover, there are excitations at ${\bf r}+\hat{x}/2$ and ${\bf r}+\hat{y}/2$, and the excitations with eigenvalue $\omega^a$ can be expressed as
\begin{align}
{[m^x]}_{{\bf r}+\hat{x}/2}^a &=m_3^{-a} \times m_1^{a r_y}\nn
{[m^y]}_{{\bf r}+\hat{y}/2}^a &=m_2^{a} \times m_1^{-a r_x}.
\end{align}
Then the unit cell consists of an $N\times N$ vertex sublattice, an $N\times 1$ vertical-edge sublattice, and a $1\times N$ horizontal-edge sublattice. One can then readily verify that when both $L_x$ and $L_y$ are multiples of $N$, only the trivial automorphism occurs. In general, transporting an anyon in the unit cell along $l_x$ ($l_y$) induces the nontrivial automorphism
\begin{align}
\phi_{x}([e]_{\bf r}^a)&=[e]_{{\bf r}+(-L_x ~{\rm mod }~N)\hat{x}}^a,\nn
\phi_{y}([e]_{\bf r}^a)&=[e]_{{\bf r}+(-L_y ~{\rm mod }~N)\hat{y}}^a,\nn
\phi_{x}({[m^x]}_{{\bf r}+\hat{x}/2}^a)&={[m^x]}_{{\bf r}+\hat{x}/2}^a,\nn
\phi_{y}({[m^x]}_{{\bf r}+\hat{x}/2}^a)&={[m^x]}_{{\bf r}+\hat{x}/2+(-L_y ~{\rm mod}~N)\hat{y}}^a,\nn
\phi_{x}({[m^y]}_{{\bf r}+\hat{y}/2}^a)&={[m^y]}_{{\bf r}+\hat{y}/2+(-L_x ~{\rm mod}~N)\hat{x}}^a,\nn
\phi_{y}({[m^y]}_{{\bf r}+\hat{y}/2}^a)&={[m^y]}_{{\bf r}+\hat{y}/2}^a,
\end{align}
so a nontrivial transparent domain wall appears along $l_y$ when $L_x$ is not divisible by $N$, while one appears along $l_x$ when $L_y$ is not divisible by $N$.

The number of extreme points of the ICS for $X \in A_{x(y)}$ can be obtained by considering all distinct anyons within a unit cell~\cite{kim_unveiling_2025}:
\begin{align}
&\Bigl (\prod_{k_x,k_y=0}^{N-1} [e]_{{\bf r}+k_x \hat{x}+k_y\hat{y}}^{a_{k_x,k_y}}\Bigl ) \times \Bigl ( \prod_{k_y=0}^{N-1}{[m^x]}_{{\bf r}+\hat{x}/2+k_y \hat{y}}^{a_{k_y}'}\Bigl ) \times \Bigl ( \prod_{k_x=0}^{N-1}{[m^y]}_{{\bf r}+\hat{y}/2+k_x \hat{x}}^{a_{k_x}''} \Bigl)\nn
&=e_1^{B_1}\times e_2^{B_1 r_x+B_2} \times e_3^{B_1 r_y+B_3} \times m_1^{B_4 r_y-B_5 r_x +B_6} \times m_2^{B_5}\times m_3^{-B_4}
\end{align}
where
\begin{align}
B_1&=\sum_{k_x,k_y} a_{k_x,k_y} & B_2 &= \sum_{k_x,k_y} a_{k_x,k_y} k_x & B_3&=\sum_{k_x,k_y} a_{k_x,k_y} k_y\nn
B_4&=\sum_{k_y} a_{k_y}'& B_5&=\sum_{k_x} a_{k_x}'' & B_6&=\sum_{k_y}a_{k_y}'k_y-\sum_{k_x} a_{k_x}''k_x.
\end{align}
For these quantities to be invariant under $\phi_x$, both $B_1$ and $B_5$ must be integer multiples of $N/\gcd(L_x,N)$. It then follows that the number of extreme points of the ICS on $X \in \mathtt{A}_x$ is $N^4 \gcd(L_x,N)^2$, whereas for $X \in \mathtt{A}_y$ it is $N^4 \gcd(L_y,N)^2$.

\section{ Entanglement asymmetry for the anisotropic dipolar toric code} 
\label{asec:C}

In this section, we calculate the entanglement asymmetry for an MES of the anisotropic dipolar toric code. We again employ the methods of Refs.~\onlinecite{hamma_bipartite_2005,kim_unveiling_2025,ebisu_symmetric_2023}, summarized in Appendix~\ref{asec:adtc}, to calculate the entanglement entropy of bipartite regions, taking the stabilizer structure of the ground state as the starting point. To illustrate the method and its application to the entanglement asymmetry, we first consider the simpler example of the toric code. For an MES reduced to a noncontractible region on the torus, the TEE and the entanglement asymmetry are known to be $\log 2$ and $2\log 2$, respectively.

\subsection{Warm-up: toric code}
The toric code Hamiltonian is given by
\begin{equation}\label{eq:ham_tc}
H = -\sum_{v} \, A_v \,
- \sum_p\, B_p,
\end{equation}
where
\begin{align}
&A_v = \begin{tikzpicture}[scale = 0.5, baseline = {([yshift=-.5ex]current bounding box.center)}]
    \draw[black] (1.5,0) -- (-1.5, 0);
    \draw[black] (0,1.5) -- (0,-1.5);
    \node at (0.9, 0) {\normalsize $\sigma^z$};
    \node at (-0.9, 0) {\normalsize $\sigma^z$};
    \node at (0, 0.9) {\normalsize $\sigma^z$};
    \node at (0, -0.9) {\normalsize $\sigma^z$};
    \node at (0.2, 0.2) {\small $\textcolor{black}{v}$};
\end{tikzpicture},
&B_p = \begin{tikzpicture}[scale = 0.5, baseline = {([yshift=-.5ex]current bounding box.center)}]
        \draw[color = black] (-1, -1) -- (-1, 1) -- (1, 1) -- (1, -1) -- cycle;
        \node at (-1, 0) {\small $\sigma^x$};
        \node at (1, 0) {\small $\sigma^x$};
        \node at (0, -1) {\small $\sigma^x$};
        \node at (0, 1) {\small $\sigma^x$};
        \node at (0, 0) {\small $\textcolor{black}{p}$};
    \end{tikzpicture},
\end{align}
and $\sigma^z$ and $\sigma^x$ denote the Pauli-$Z$ and Pauli-$X$ operators acting on the links of the square lattice, respectively. Here, we consider a torus of size $L_x \times L_y$. The electric Wilson line operator and the magnetic 't Hooft line operator are defined, respectively, as follows:
\begin{align}\label{eq:ops}
&U_\gamma = \prod_{l \in \gamma} \sigma^x_l, &  T_{\tilde{\gamma}} = \prod_{l \perp \tilde{\gamma}} \sigma^z_l,
\end{align}
where $\gamma$ denotes a path defined on the links of the lattice, and $\tilde{\gamma}$ denotes a path on the dual lattice.

We consider a subregion $X \in \mathtt{A}_y$ to compute the entanglement asymmetry. We calculate it for the MES associated with $X$, meaning that it has maximum TEE $\gamma_{{\rm LW},y}^{\rm max}=\log2$ \cite{zhang_quasiparticle_2012}, and consequently minimum entanglement entropy for region $X$. This state can be obtained from a reference state $|
{\bf 0}\rangle\equiv|+\rangle^{\otimes 2L_xL_y}$ by applying all elements of $G$ generated by the star operators $A_v$ centered on each vertex $v$ of the lattice, and the noncontractible 't Hooft line parallel to the entanglement cut, $T_y$. Then we can write our MES as 
\begin{equation} \label{eq: reference MES}
    |\text{MES}\rangle=\frac{1}{|G|^{1/2}}\sum_{g\in G} g|\bf 0\rangle.
\end{equation}
This MES is a symmetry-breaking ground state in the sense that applying either $U_x$ or $T_x$ transforms it into another MES of the toric code.

Using \eqref{eq: S stabilizer}, we can calculate the entanglement entropy for the MES associated with $X$, but we must first determine $|G|$, $|G_X|$, and $|G_Y|$. In the absence of the generator $T_y$, we have $|G| = 2^{L_x L_y - 1}$. The factor of $2^{-1}$ arises from the constraint $\prod_v A_v = 1$. When the generator $T_y$ is taken into account, we obtain $|G| = 2^{L_xL_y}$. Now let $L_xL_y=N_v=N_{v,X}+N_{v,Y}+N_{\partial X}$, where $N_{v,X(Y)}$ is the number of vertices completely inside region $X$ ($Y$), and $N_{\partial X}=2L_y=|\partial X|$ is the number of vertices at the boundary of $X$. Then $|G_X|=2^{N_{v,X}+1}$ and $|G_Y|=2^{N_{v,Y}+1}$, where the extra factor of $2$ comes from the generator $T_y$ being applied either inside subregion $X$ or $Y$. Then \eqref{eq: S stabilizer} gives
\begin{align}
    S(\rho_X)
    &= \log\left(\frac{2^{N_v}}{2^{N_{v,X}+1}2^{N_{v,Y}+1}}\right) = \log\left({2^{N_{\partial X}-2}}\right)\\
    &= |\partial X|\log 2-2\log 2,
\end{align}
where $|\partial X|$ is the length of the boundary of region $X$. This is the expected result, showing a correction of $-2\log2$ to the area law term of the entanglement entropy and $\gamma_{{\rm LW},y}^{\rm max}=\frac{1}{2}2\log 2=\log2$.

We can calculate the entanglement asymmetry using similar techniques as above. We start by showing that the symmetrized reduced density matrix $\rho_X^\text{sym}$ is also proportional to a projector operator. In the models considered in this section, the symmetry operators are all tensor products of local operators, and can be written as $U=U_{X}\otimes U_{\mathtt{R} \setminus X}$, where the subscripts indicate that the support is only on region $X$ or its complement $\mathtt{R} \setminus X$. Then the symmetrized reduced density matrix, defined in Definition~\ref{def:entanglement-asymmetry}, takes the simple form
\begin{equation}
\rho_{X}^{\rm sym} = \frac{1}{|F_{x(y)}|} \sum_{g \in F_{x(y)}} U_{g|X}  \rho_X U_{g|X}^\dagger.
\end{equation}

For a noncontractible region $X\in A_y$ wrapping around the $y$-cycle of the torus, the MES transforms under the Wilson and 't Hooft loops in the $x$-direction of the torus, $U_x$ and $T_x$. Then,
\begin{equation}\label{eq: appendix rho sym}
    \rho_X^\text{sym}= \frac{1}{4}\left(\rho_X+U_{x|X}\rho_XU^\dagger_{x|X}+T_{x|X}\rho_XT^\dagger_{x|X}+T_{x|X}U_{x|X}\rho_XU^\dagger_{x|X}T^\dagger_{x|X}\right),
\end{equation}
where $U_{x|X}$ and $T_{x|X}$ are the symmetry operators restricted to region $X$. Note that the symmetry operators transform the $\rho_X$ of an MES into the reduced density matrix of another MES, as discussed in the main text. Indeed, in the absence of a transparent domain wall, $\rho_X$ is an extreme point of the ICS of the region $X\in A_y$ for a reference state in the toric code fixed point. Then \eqref{eq: appendix rho sym} corresponds to \eqref{eq:rho_sym_MES} for $\rho_X^\text{sym}$, with the four terms being distinct extreme points of the ICS, and $|\mathfrak{C}_{y}^{\tau_x}|=|\mathbb{Z}_2\times \mathbb{Z}_2|=4$. 

To continue, we consider $(\rho_X^\text{sym})^2$. Since the extreme points (the reduced density matrices of different MESs) are orthogonal to each other, the cross-terms vanish, and we get 
\begin{align}
    (\rho_X^\text{sym})^2&=\frac{1}{16}\left(\rho^2_X+U_{x|X}\rho^2_XU^\dagger_{x|X}+T_{x|X}\rho^2_XT^\dagger_{x|X}+T_{x|X}U_{x|X}\rho^2_XU^\dagger_{x|X}T^\dagger_{x|X}\right)\\
    &=\left(\frac{|G_X||G_Y|}{|\mathbb{Z}_2\times \mathbb{Z}_2||G|}\right)\rho_X^\text{sym},
\end{align}
where we used \eqref{eq: rho_X projector} to go from the first line to the second. Following the same steps that lead to \eqref{eq: S stabilizer}, we obtain the entanglement entropy of $\rho_X^\text{sym}$ as
\begin{equation}
    S(\rho_X^\text{sym})=\log\left(\frac{|\mathbb{Z}_2\times \mathbb{Z}_2||G|}{|G_X||G_Y|}\right)=\log(|\mathbb{Z}_2\times \mathbb{Z}_2|)+S(\rho_X).
\end{equation}
Then the entanglement asymmetry is 
\begin{equation}
    \Delta S_X^{\rm max} = \log(|\mathbb{Z}_2\times \mathbb{Z}_2|)=2\log 2,
\end{equation}
which matches the result in Ref. \onlinecite{lamas_higher-form_2025}.

Note that we did not have to calculate $S(\rho_X)$, $|G_X|$, or $|G_Y|$ explicitly to find $\Delta S_X^{\rm max}$ using this formalism. This makes it feasible to calculate $\Delta S_X^{\rm max}$ for more complicated models even when finding the entanglement entropy or the TEE alone is challenging. Next, we employ the same techniques to calculate the entanglement asymmetry of the anisotropic dipolar toric code.

\subsection{Anisotropic dipolar toric code}

In this section, we calculate the  entanglement asymmetry associated with the anisotropic dipolar toric code using the same method as in the previous section. We consider the asymmetry for two distinct noncontractible regions, each wrapping around a different cycle of the torus, and their respective MESs.

\subsubsection{\texorpdfstring{$X\in \mathtt{A}_y$}{Ay}}

The MES for $X\in \mathtt{A}_y$ is 
\begin{equation}
    |{\rm MES}_y\rangle=\frac{1}{|G|^{1/2}}\sum_{g \in G} g|\textbf{0}\rangle,
\end{equation}
where the group $G$ is generated by all $V_{\bf r}$'s, $V_{\sigma^x}$, and $\bar{V}_{\sigma^x}$, and we take the reference state to be in the $\sigma^z$ basis, $|\textbf{0}\rangle\equiv |0\rangle^{\otimes 2L_xL_y}$, for $Z|0\rangle=|0\rangle$.
Our symmetry operators are $H_{\sigma^x}$, $\bar{H}_{\sigma^x}$, $H_{\sigma^z}$, $\bar{H}_{\sigma^z}$, so the symmetrized reduced density matrix is
\begin{equation}
    \rho_X^\text{sym}=\frac{1}{N^2\gcd(N,L_x)^2}\sum_{n,p=1}^{N}\sum_{m,q=1}^{\gcd{(N,L_x)}}H^n_{\sigma^z|X}\bar{H}_{\sigma^z|X}^m H_{\sigma^x|X}^p\bar{H}_{\sigma^x|X}^q \rho_{X} \bar{H}_{\sigma^x|X}^{-q} H_{\sigma^x|X}^{-p} \bar{H}_{\sigma^z|X}^{-m} H^{-n}_{\sigma^z|X}.
\end{equation}

We want to show that $(\rho_X^\text{sym})^2\propto\rho_X^\text{sym}$. To do that, we first show that the cross-terms of $(\rho_X^\text{sym})^2$ resulting from products of different symmetry operators vanish. First, we consider the cross terms for different powers of the same operator $H_{\sigma^z|X}$, having the form 
\begin{equation}
    H_{\sigma^z|X}^n \rho_X H_{\sigma^z|X}^{-n+m} \rho_X H_{\sigma^z|X}^{-m}.
\end{equation}
When $m\neq n$, we get 
\begin{align}
    H_{\sigma^z|X}^n \rho_X H_{\sigma^z|X}^{-n+m} \rho_X H_{\sigma^z|X}^{-m} &= |G_Y|^2 |G|^{-2}\sum_{\substack{g,h\in G/G_Y\\\tilde g_X,\tilde h _X\in G_X}} H_{\sigma^z|X}^n g_X\ket{0_X}\bra{0_X}\tilde{g}_X^\dagger g_X^\dagger H_{\sigma^z|X}^{-n+m} h_X\ket{0_X}\bra{0_X} \tilde{h}_X^\dagger h_X^\dagger H_{\sigma^z|X}^{-m}\nn
    &= |G_Y|^2 |G|^{-2} \sum_{k,l=1}^N \sum_{\substack{g,h\in G/G_Y\\\tilde g_X \in S_X^{(k)},\tilde h _X\in S_X^{(l)}}} \omega^{nk+ml} g_X\ket{0_X}\bra{0_X}\tilde{g}_X^\dagger g_X^\dagger h_X\ket{0_X}\bra{0_X} \tilde{h}_X^\dagger h_X^\dagger\nn
    &= |G_Y|^2 |G|^{-2} \sum_{k,l=1}^{N} \sum_{\substack{g\in G/G_Y\\\tilde g_X \in S_X^{(k)},\tilde h _X\in S_X^{(l)}}} \omega^{nk+ml} g_X\ket{0_X}\bra{0_X} g_X^\dagger \tilde{g}_X^\dagger \tilde{h}_X^\dagger\nn
    &= |G_Y|^2 |G|^{-2} \sum_{p} \omega^{mp}\sum_{\substack{g\in G/G_Y\\\tilde o_X \in S_X^{(p)}}} \sum_{k=1}^N \omega^{(n-m)k} g_X\ket{0_X}\bra{0_X} g_X^\dagger \tilde{o}_X^\dagger\nn
    &=0,
\end{align}
where $S_{X}^{(k)}=\{g_X\in G_X|g_X H_{\sigma^z|X}=\omega^k H_{\sigma^z|X} g_X\}$. Then these cross terms vanish. The same can be shown for the other cross terms which are powers of the same operator, such as 
\begin{align}
    \bar{H}_{\sigma^z|X}^n \rho_X \bar{H}_{\sigma^z|X}^{-n+m} \rho_X \bar{H}_{\sigma^z|X}^{-m} =0,
\end{align}
and the equivalent expressions for $\bar{H}_{\sigma^x|X}$.

Now we need to check the cross terms between different symmetry operators. First, note that $H_{\sigma^x}$ and $\bar{H}_{\sigma^x}$ commute with all elements of $G$, and with the other symmetry generators $H_{\sigma^z}$ and $\bar{H}_{\sigma^z}$ (all the $H$ operators wrap around the $x$-cycle of the torus, hence they commute even if they are composed of different Pauli operators). Then the only nontrivial action of $H_{\sigma^x}$ and $\bar{H}_{\sigma^x}$ is to transform the reference state $|\textbf{0}\rangle$ into an orthogonal state $|-\rangle\equiv H_{\sigma^x}|\textbf{0}\rangle$ having a noncontractible loop of $|1\rangle^{\otimes L_x}$ around the $x$-cycle of the torus. Note that the elements of the group $G$, $V_{\bf r}$, $V_{\sigma^x}$ and $\bar{V}_{\sigma^x}$ cannot generate such a noncontractible loop starting from the reference state $|\textbf{0}\rangle$, and $H_{\sigma^z}$ and $\bar{H}_{\sigma^z}$ act merely by measuring the charge as in \eqref{eq:ZX}. Then any cross terms between $H_{\sigma^x}$ or $\bar{H}_{\sigma^x}$ and the other generators $H_{\sigma^z}$ or $\bar{H}_{\sigma^z}$ must vanish.

The only cross-terms left to investigate are those between $H_{\sigma^z|X}$ and $\bar{H}_{\sigma^z|X}$.
\begin{align*}
        H_{\sigma^z|X}^n \rho_X H_{\sigma^z|X}^{-n}\bar{H}_{\sigma^z|X}^m \rho_X \bar H_{\sigma^z|X}^{-m} = |G_Y|^2 |G|^{-2}\sum_{\substack{g,h\in G/G_Y\\\tilde g_X,\tilde h _X\in G_X}} H_{\sigma^z|X}^n g_X\ket{0_X}\bra{0_X}\tilde{g}_X^\dagger g_X^\dagger H_{\sigma^z|X}^{-n} \bar H_{\sigma^z|X}^m h_X\ket{0_X}\bra{0_X} \tilde{h}_X^\dagger h_X^\dagger \bar H_{\sigma^z|X}^{-m}
\end{align*}
\begin{align}
    &= |G_Y|^2 |G|^{-2} \sum_{k=1}^N\sum_{l=1}^{\gcd(L_x,N)}  \sum_{\substack{g,h\in G/G_Y\\\tilde g_X \in S_X^{(k)},\tilde h _X\in \bar S_X^{(l)}}} \omega^{nk+mlN/\gcd(L_x,N)} g_X\ket{0_X}\bra{0_X}\tilde{g}_X^\dagger g_X^\dagger h_X\ket{0_X}\bra{0_X} \tilde{h}_X^\dagger h_X^\dagger\nn
    &= |G_Y|^2 |G|^{-2} \sum_{k=1}^N\sum_{l=1}^{\gcd(L_x,N)} \sum_{\substack{g\in G/G_Y\\\tilde g_X \in S_X^{(k)},\tilde h _X\in \bar S_X^{(l)}}} \omega^{nk+mlN/\gcd(L_x,N)} g_X\ket{0_X}\bra{0_X} g_X^\dagger \tilde{g}_X^\dagger \tilde{h}_X^\dagger\nn
    &= |G_Y|^2 |G|^{-2} \sum_{k,r=1}^N \sum_{l,s=1}^{\gcd{(L_x,N)}} \sum_{\substack{g\in G/G_Y\\\tilde o_X \in S_X^{(k+r)}\cap \bar S_X^{(l+s)}}}  \omega^{nk+mlN/\gcd(L_x,N)}  g_X\ket{0_X}\bra{0_X} g_X^\dagger \tilde{o}_X^\dagger\nn
    &=|G_Y|^2 |G|^{-2} \sum_{p,r=1}^N \sum_{q,s=1}^{\gcd{(L_x,N)}} \sum_{\substack{g\in G/G_Y\\\tilde o_X \in S_X^{(p)}\cap \bar S_X^{(q)}}}  \omega^{n(p-r)+m(q-s)N/\gcd(L_x,N)}  g_X\ket{0_X}\bra{0_X} g_X^\dagger \tilde{o}_X^\dagger \nonumber\\
    &={|G_Y|^2}{ |G|^{-2}} \sum_{p,q}  \sum_{\substack{g\in G/G_Y\\\tilde o_X \in S_X^{(p)}\cap \bar S_X^{(q)}}} \left(\sum_{r=1}^N\omega^{-nr}\right)\left(\sum_{s=1}^{\gcd{(L_x,N)}}\omega^{-msN/\gcd (L_x,N)}\right) \omega^{np+mqN/\gcd(L_x,N)}  g_X\ket{0_X}\bra{0_X} g_X^\dagger \tilde{o}_X^\dagger \nonumber \\&=0,
\end{align}
where $S_{X}^{(k)}=\{g_X\in G_X~|~ g_X H_{\sigma^z|X}=\omega^k H_{\sigma^z|X}g_X\}$ and $\bar S_{X}^{(l)}=\{g_X\in G_X|g_X \bar H_{\sigma^z|X}=\omega^{lN/\gcd (N,L_x)} \bar H_{\sigma^z|X}g_X\}$. A similar cancellation happens for the cross terms between $H_{\sigma^x|X}$ and $\bar{H}_{\sigma^x|X}$. These cancellations are expected, because the different logical operators transform $\rho_X$ into the reduced density matrices of different MESs, which have orthogonal support in subregion $X$. 

Then the total symmetrization yields
\begin{align}
    (\rho_X^\text{sym})^2 &= \frac{1}{\left[N^2\gcd(N,L_x)^2\right]^2}
    \sum_{n,m,p,q}H^n_{\sigma^z|X}\bar{H}_{\sigma^z|X}^m H_{\sigma^x|X}^p\bar{H}_{\sigma^x|X}^q \rho_X^2 \bar{H}_{\sigma^x|X}^{-q} H_{\sigma^x|X}^{-p} \bar{H}_{\sigma^z|X}^{-m} H^{-n}_{\sigma^z|X}\nn
    &= \left(\frac{|G_X||G_Y|}{N^2\gcd(N,L_x)^2|G|}\right)\rho_X^{\rm sym}.
\end{align}
Finally, the entanglement asymmetry is
\begin{equation}
    \Delta S_{X_y}^{\rm max} = 2\log [N\gcd(N,L_x)].
\end{equation}

\subsubsection{\texorpdfstring{$X\in A_x$}{Ax}}

The MES for $X\in A_x$ is 
\begin{equation}
    |{\rm MES}_x\rangle=\frac{1}{|G|^{1/2}}\sum_{g \in G} g|\textbf{0}\rangle,
\end{equation}
where the group $G$ is generated by all $V_{\bf r}$'s, $H_{\sigma^x}$, and $\bar{H}_{\sigma^x}$, and the reference state is the same as in the example above, $|\textbf{0}\rangle\equiv |0\rangle^{\otimes 2L_xL_y}$, for $Z|0\rangle=|0\rangle$. Our symmetry operators are $V_{\sigma^x}$, $\bar{V}_{\sigma^x}$, $V_{\sigma^z}$ and $\bar{V}_{\sigma^z}$, so the symmetrized reduced density matrix has the form
\begin{equation}
    \rho_X^\text{sym}=\frac{1}{N^2\gcd(N,L_x)^2}\sum_{n,m,p,q}V^n_{\sigma^z|X}\bar{V}_{\sigma^z|X}^m V_{\sigma^x}^p\bar{V}_{\sigma^x}^q \rho_X \bar{V}_{\sigma^x}^{-q} V_{\sigma^x}^{-p} \bar{V}_{\sigma^z|X}^{-m} V^{-n}_{\sigma^z|X}.
\end{equation}
Since the symmetry operators and the stabilizers in $G$ obey the same algebra as in the previous example (see \eqref{eq:ZX}), the previous calculations and arguments transfer directly here. Then the cross terms of $( \rho_X^\text{sym})^2$ vanish, and we get
\begin{align}
        (\rho_X^\text{sym})^2 &= \left(\frac{|G_X||G_Y|}{N^2\gcd(N,L_x)^2|G|}\right)\rho_X^{\rm sym},\\
         \Delta S_{X_x}^{\rm max} &= 2\log [N\gcd(N,L_x)].
\end{align}
 Note that this is the same entanglement asymmetry as in the previous example, where we considered a perpendicular noncontractible region and its associated MES, despite this model being anisotropic in the $x$ and $y$ directions of the torus. This differs from what we find for the TEE, which is calculated in Appendix \ref{asec:D}.

\section{Topological entanglement entropy}
\label{asec:D}

In this section, we calculate the TEE on a noncontractible annulus for the MESs associated with $X\in \mathtt{A}_{x(y)}$ of the anisotropic dipolar toric code on an $L_x \times L_y$ square lattice.

\subsection{\texorpdfstring{$\gamma_{\rm{LW},y}^{\rm max}$}{gamma (LW,y)}}

For the calculation of $\gamma_{{\rm LW},y}^{\rm max}$, we will use the partition illustrated in Fig.~\ref{fig:TEE_caly}. The MES associated with $X \in \mathtt{A}_y$ used in our calculation is
\begin{equation}
    |{\rm MES}_y\rangle=\frac{1}{|G|^{1/2}}\sum_{g \in G} g|0\rangle,
\end{equation}
where the group $G$ is generated by all $V_{\bf r}$'s, $V_{\sigma^x}$, and $\bar{V}_{\sigma^x}$ in \eqref{eq:lo}. Here, $S(\rho_{AB})$, $S(\rho_{BC})$, and $S(\rho_{B})$ are readily obtained from \eqref{eq:SBC}, since the entanglement entropy of a disk is identical for all distinct ground states. They are given by
\begin{align}
S(\rho_{AB})&= (2l_x+4l_A+4l_B-4)\log N,\nn
S(\rho_{BC})&= (2l_x+4L_y-4l_A-4)\log N,\nn
S(\rho_{B})&= (4l_x+4l_B-8)\log N.
\end{align}
It therefore remains only to determine $S(\rho_{ABC})$.

\begin{figure}[ht]
    \centering
    \includegraphics[width=0.4\linewidth]{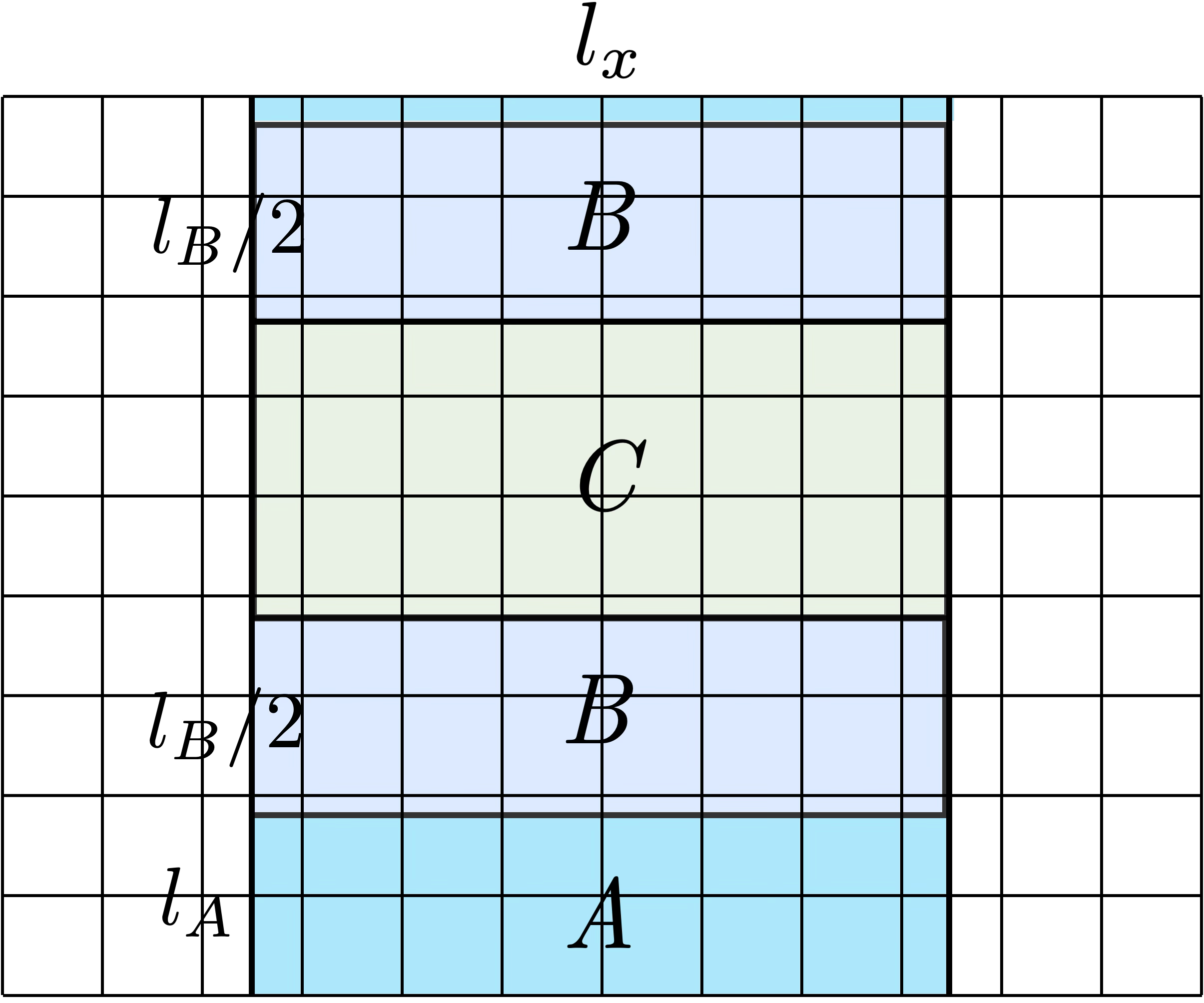}
    \caption{Partition used for the calculation of $\gamma_{{\rm LW},y}^{\rm max}$}
    \label{fig:TEE_caly}
\end{figure}

We first determine $|G_{ABC}|$. There are $N^{(l_x-2)L_y}$ ways to arrange the $V_{\bf r}$ operators completely inside of $ABC$. We get another factor of $N$ for the $V_{\sigma^x}$ operators that may multiply all the $V_{\bf r}$ operators that are inside of $ABC$. One can do the same with the $\bar V_{\sigma^x}$ operators, which gives another factor of $N$. Then $|G_{ABC}|=N^{(l_x-2)L_y+2}$.

To find $|G_{(ABC)^c}|$, we can simply divide $|G|$ by the number of operators that are not completely inside of $(ABC)^c$. We need to account for the number $(l_x-2)L_y$ of $V_{\bf r}$ operators that are completely inside of $ABC$, and for the number $4L_y$ of $V_{\bf r}$ operators that are at the boundary ($L_y$ on each side of each boundary). Here, we assume that $V_{\sigma^x}$ and $\bar V_{\sigma^x}$ are already defined completely inside of $ABC$, so there are no further constraints from that. So far, we have $|G_{(ABC)^c}|\propto |G|/N^{(l_x-2)L_y+4L_y}$. However, we must also account for the operators
\begin{align}
&M_1=\prod_{{\bf r}\in ABC}V_{\bf r}, &M_2=\prod_{{\bf r}\in ABC} V_{\bf r}^{r_x},
\end{align}
which are completely inside of $(ABC)^c$ once the multiplication of the $V_{\bf r}$ inside of $ABC$ is carried through. Therefore, the final answer for $|G_{(ABC)^c}|$ is
\begin{equation}
    |G_{(ABC)^c}| = \frac{|G|N^2}{N^{(l_x-2)L_y+4L_y}}.
\end{equation}
Then the entanglement entropy for $ABC$ is
\begin{align}
    S(\rho_{ABC}) &= \log\left(\frac{|G|}{|G_{ABC}||G_{(ABC)^c}|}\right)=\log\left(\frac{N^{(l_x-2)L_y+4L_y}}{N^{(l_x-2)L_y+2}N^2}\right)\\
    &=\log\left(N^{4L_y-4}\right) = 4L_y\log N - 4\log N,
\end{align} 
from which we see that $\gamma_{{\rm LW},y}^{\rm max}=2\log N$.

\subsection{\texorpdfstring{$\gamma_{{\rm LW},x}^{\rm max}$}{gamma_{LW,x}}}
For the calculation of $\gamma_{{\rm LW},x}$, we will use the partition illustrated in Fig.~\ref{fig:TEE_calx}. The MES associated with $X \in \mathtt{A}_x$ used in our calculation is
\begin{equation}
    |{\rm MES}_x\rangle=\frac{1}{|G|^{1/2}}\sum_{g \in G} g|0\rangle,
\end{equation}
where the group $G$ is generated by all $V_{\bf r}$'s, $H_{\sigma^x}$, and $\bar{H}_{\sigma^x}$ in \eqref{eq:lo}. Here, $S(\rho_{AB})$, $S(\rho_{BC})$, and $S(\rho_{B})$ are readily obtained from \eqref{eq:SBC}, since the entanglement entropy of a disk is identical for all distinct ground states. They are given by
\begin{align}
S(\rho_{AB})&= (2l_A+2l_B+4l_y-4)\log N,\nn
S(\rho_{BC})&= (2L_x-2l_A+4l_y-4)\log N,\nn
S(\rho_{B})&= (2l_B+8l_y-8)\log N.
\end{align}
It therefore remains only to determine $S(\rho_{ABC})$.

\begin{figure}[ht]
    \centering
    \includegraphics[width=0.33\linewidth]{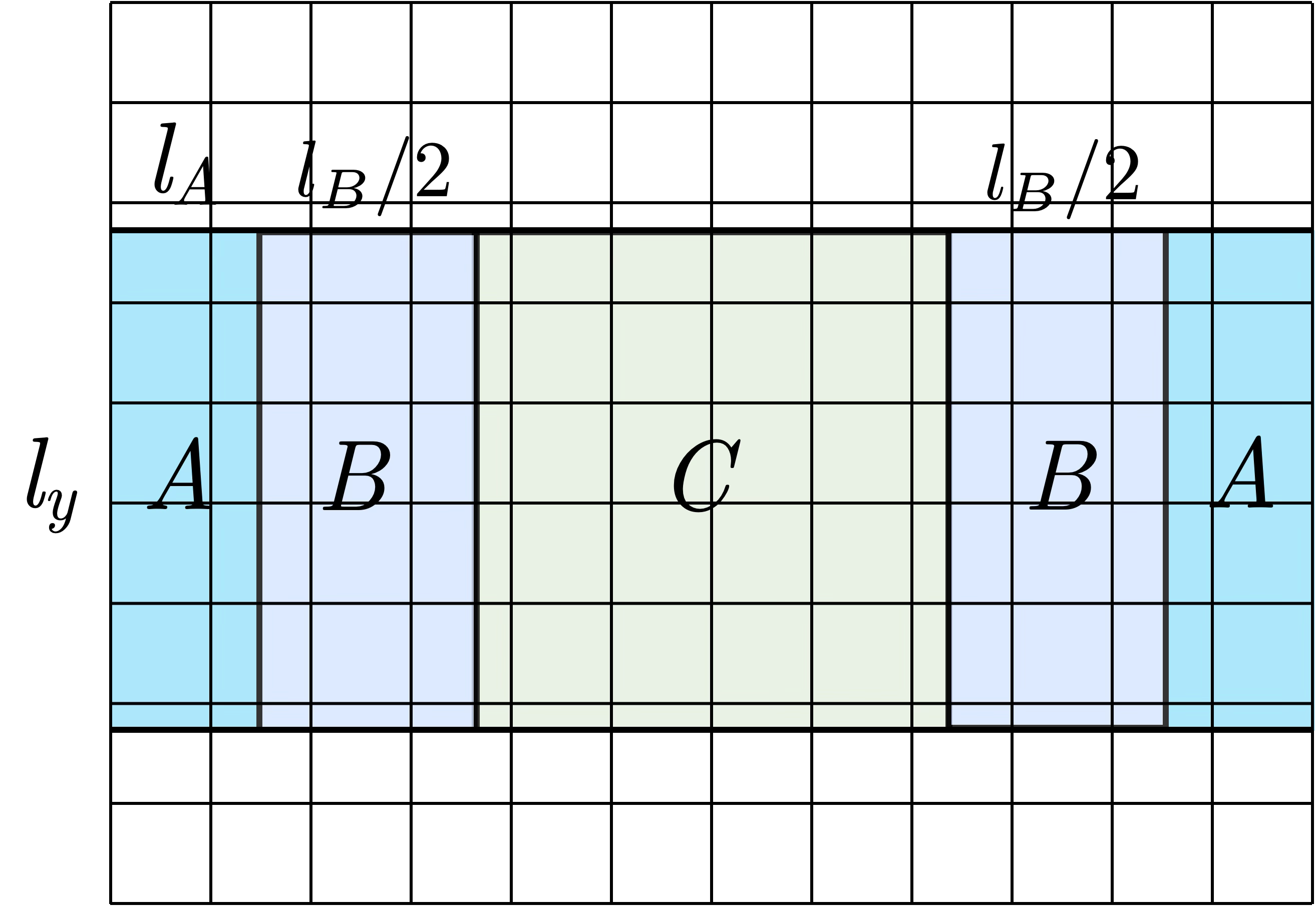}
    \caption{Partition used for the calculation of $\gamma_{{\rm LW},x}^{\rm max}$}
    \label{fig:TEE_calx}
\end{figure}

There are $N^{(l_y-1)L_x}$ elements of $G_{ABC}$ generated exclusively by $V_{\bf r}$ operators which are completely inside of region $ABC$. Furthermore, we get an extra factor of $N$ by acting on those operators inside of $ABC$ with $H_{\sigma^x}$. The other operator, $\bar{H}_{\sigma^x}$, requires more care. Because $\bar{H}_{\sigma^x}$ is a product of operators of the form $(\sigma^x_{\bf r})^{Nr_x/\gcd(N,L_x)}$ going around the $x$-cycle of the torus, there are only $\gcd(L_x,N)$ independent powers of that operator which are completely inside of $ABC$. Then $|G_{ABC}|=N^{(l_y-1)L_x+1}\gcd(L_x,N)$.

We now need to calculate $|G_{(ABC)^c}|$. Accounting just for the $V_{\bf r}$ completely inside of $ABC$ and the $V_{\bf r}$ at the boundaries gives $|G|/N^{(l_y-1)L_x+2L_x}$. Here we also must account for operators that have support only on $(ABC)^c$ but can be obtained from operators in $ABC$. These are 
\begin{align}
    &M_3=\prod_{{\bf r}\in ABC}V_{\bf r},&M_4=\prod_{{\bf r}\in ABC}V_{\bf r}^{{Nr_x}/{\gcd{(N,L_x)}}}.
\end{align}
While $M_3$ contributes another factor of $N$ to $|G_{(ABC)^c}|$, $M_4$ contributes only a factor of $\gcd(N,L_x)$, as there are only $\gcd(N,L_x)$ independent powers of $M_4$.

Indeed, only those powers of $M_4$ for which the factors $V_{\bf r}^{Nr_x/\gcd(N,L_x)}$ cancel upon winding around the $x$-cycle of the torus within $ABC$ can be supported entirely inside $(ABC)^c$, and there are $\gcd(N,L_x)$ independent operators of this form. Finally, we obtain 
\begin{equation}
    |G_{(ABC)^c}|=N\gcd(N,L_x)\frac{|G|}{N^{(l_y-1)L_x+2L_x}}.
\end{equation}

The entanglement entropy $S(\rho_{ABC})$ is then given by
\begin{align}
    S(\rho_{ABC}) &= \log\left(\frac{|G|}{|G_{ABC}||G_{(ABC)^c}|}\right)=\log\left(\frac{N^{(l_y-1)L_x+2L_x}}{N^{(l_y-1)L_x}N^2\gcd^2(N,L_x)}\right)\\
    &=2L_x\log N-2\log [N \gcd (N,L_x)].
\end{align} 
Therefore, $\gamma_{{\rm LW},x}^{\rm max}=\log [N \gcd (N,L_x)]$.

\twocolumngrid

\bibliography{references}

\end{document}